\definecolor{kugray5}{RGB}{224,224,224}
\pgfplotsset{compat=newest}
\theoremstyle{plain}
\newtheorem{lemma}{Lemma}
\newtheorem{theorem}{Theorem}
\newtheorem{proposition}[theorem]{Proposition}
\newtheorem{definition}{Definition}
\theoremstyle{definition}
\newtheorem{example}{Example}
\newcommand{\argmax}{\operatornamewithlimits{argmax}}
\begin{document}

\newlength{\figurewidth}\setlength{\figurewidth}{0.6\columnwidth}

\title{Controlled Matching Game for Resource Allocation and User Association in WLANs%$^\star$
}

\newcounter{one}
\setcounter{one}{1}
\newcounter{two}
\setcounter{two}{2}
\newcounter{three}
\setcounter{three}{3}
\newcounter{four}
\setcounter{four}{4}

\author{	 Mikael Touati,
		Rachid El-Azouzi,
		Marceau Coupechoux,
		 Eitan Altman and Jean-Marc Kelif
\IEEEcompsocitemizethanks{
%\IEEEcompsocthanksitem
Mikael Touati and Jean Marc Kelif are with Orange Lab, France,
%\protect
E-mail: \{mikael.touati, jeanmarc.kelif\}@orange.com,
%\IEEEcompsocthanksitem
Rachid El-Azouzi is  with CERI/LIA, University of Avignon,  Avignon, France,
%\protect
E-mail: rachid.elazouzi@univ-avignon.fr,
%\IEEEcompsocthanksitem
Mikael Touati and Marceau Coupechoux are with LTCI, CNRS, Telecom ParisTech, University Paris-Saclay, France,
%\protect
E-mail:  \{mikael.touati, marceau.coupechoux\}@telecom-paristech.fr,
%\IEEEcompsocthanksitem
Eitan Altman is with INRIA Sophia Antipolis, France,
%\protect
E-mail: {eitan.altman@inria.fr}.
}% <-this % stops a space
}

%\author{\qquad Mikael Touati$^\fnsymbol{one}$$^\fnsymbol{two}$$^\fnsymbol{three}$ \qquad Rachid Elazouzi$^\fnsymbol{three}$ \qquad Marceau Coupechoux$^\fnsymbol{two}$ \qquad Eitan Altman$^\fnsymbol{four}$ \qquad Jean-Marc Kelif $^\fnsymbol{one}$\\
%\parbox{0.23\textwidth}{\centering $^\fnsymbol{one}$Orange Labs, France}
%\hfill
%\parbox{0.23\textwidth}{\centering $^\fnsymbol{three}$Universit\'e d'Avignon, France}
%\hfill
%\parbox{0.23\textwidth}{\centering $^\fnsymbol{two}$Telecom ParisTech and CNRS LTCI, France}
%\hfill
%\parbox{0.23\textwidth}{\centering $^\fnsymbol{four}$INRIA, France}
%}

%% "Cheat" some more by decreasing the spacing after floating figures
%% and around displayed equations (this has to be after \maketitle
%% which restores some defaults)
\addtolength{\floatsep}{-\baselineskip}
\addtolength{\dblfloatsep}{-\baselineskip}
\addtolength{\textfloatsep}{-\baselineskip}
\addtolength{\dbltextfloatsep}{-\baselineskip}
\addtolength{\abovedisplayskip}{-1ex}
\addtolength{\belowdisplayskip}{-1ex}
\addtolength{\abovedisplayshortskip}{-1ex}
\addtolength{\belowdisplayshortskip}{-1ex}

\IEEEtitleabstractindextext{%
\begin{abstract}
%Game-theory is taking a growing importance in wired and wireless networks. This is due to the opportunity it offers in designing systems with decentralized decision making. As part of coalition games, matching and coalition formation games allow a tractable analysis of the situations, where players may form groups on the basis of their individual preferences, notably in the presence of complementarities and peer effects. One of the main concerns of the approach is the stability of the coalition structures and mechanisms to reach this stability. Preferences may be self-consistent or induced by allocation schemes. 
%This is the case in wireless networks (WLAN or cellular), where the individual payoff results from the use of sharing rules such as equal sharing or proportional fairness. 
In multi-rate IEEE 802.11 WLANs, the traditional user association based on the strongest received signal and the well known anomaly of the MAC protocol can lead to overloaded Access Points (APs), and poor or heterogeneous performance.  
%The deployment of IEEE 802.11 based WLANs in populated areas is such that many mobile terminals are covered by several Access Points (APs). These mobiles have the possibility to associate to the AP with the strongest signal. This can lead to poor performances and overloaded APs. Moreover, the well known anomaly in the protocol at the MAC layer may also lead to very unpredictable performance and affect the system throughput due to the presence of heterogeneous data rate nodes and the shared nature of the 802.11 medium. 
Our goal is to propose an alternative game-theoretic approach for association. We model the joint resource allocation and user association as a matching game with complementarities and peer effects consisting of selfish players solely interested in their individual throughputs. Using recent game-theoretic results we first show that various resource sharing protocols actually fall in the scope of the set of stability-inducing resource allocation schemes. %This includes the throughput fairness allocation in the saturated regime with equal packet sizes or modified allocation schemes proposed in the literature like time-based fairness.
%We then propose a novel three stage controlled coalition game for the modeling and control of resource allocation and user association.
 The game makes an extensive use of the Nash bargaining and some of its related properties that allow to control the incentives of the players. We show that the proposed mechanism can greatly improve the efficiency of 802.11 with heterogeneous nodes and reduce the negative impact of peer effects such as its MAC anomaly. The mechanism can be implemented as a virtual connectivity management layer to achieve efficient APs-user associations without modification of the MAC layer.   
\end{abstract}

% Note that keywords are not normally used for peerreview papers.
\begin{IEEEkeywords}
Game theory, matching games, coalition games, stability, control, 802.11.
\end{IEEEkeywords}}

\maketitle
\IEEEdisplaynontitleabstractindextext

\section{Introduction} \label{sec:intro}
The IEEE 802.11 based wireless local area networks (WLANs) have attained a huge popularity in dense areas as public places, universities and city centers. In such environments, devices have the possibility to use many Access Points (APs) and usually a device selects an AP with the highest received Radio Signal Strength Indicator (best-RSSI association scheme). In this context, the performance of IEEE 802.11 may be penalized by the so called 802.11 {\it anomaly} and by an imbalance in AP loads (congestion). 
%When using the Distributed Coordination Function (DCF) of 802.11, under the assumptions of multi-rates, saturated regime, equal packet sizes and similar losses, all users get indeed the same throughput whatever their radio conditions and physical data rates. The protocol thus follows an \textit{equal sharing} and induces a throughput-based fairness. 
Moreover, some APs may be overloaded while others are underutilized because of the association rule. 

In this paper, we consider a fully distributed IEEE 802.11 network, in which selfish mobile users and APs look for the associations maximizing their individual throughputs. We analyze this scenario using matching game theory and develop a unified analysis of the joint mobile user association and resource allocation problem for the reduction of the anomaly and for load balancing in IEEE 802.11 WLANs. In a network characterized by a {\it state of nature} (user locations, channel conditions, physical data rates), composed of a set $\mathcal{W}$ of mobile users and a set $\mathcal{F}$ of APs, the user association problem consists in finding a mapping $\mu$ that associates every mobile user to an AP. We call the set formed by an AP and its associated mobile users a {\it cell}, or a {\it coalition} in the game framework. The set of coalitions induced by $\mu$ is called a matching or a {\it structure} (partition of the players in coalitions). Once mobile user association has been performed, a resource allocation scheme (also called a {\it sharing rule} in this paper) %\footnote{The theoretical concept of sharing rule generalizes the resource allocation scheme, see \cite{Pyc2012}.}) 
allocates radio resources of a cell to the associated mobile users. 
%Mobile users emit preferences over coalitions they may belong to. These preferences are derived from an exogenous sharing rule, which is the way the value of the coalition is shared among its members. 

%In Figure~\ref{fig:coalition_formation_resource_allocation} we show an example of a structure resulting from a coalition formation game with equal sharing as resource allocation scheme.
%In a coalition all the players receive the same payoff.
%The players have partitioned in three coalitions.
%Each player receives a slice of the worth of the coalition displayed as a circle.
%The size of the circle shows the worth of the coalition.
%
%	 \begin{figure}[h]
%	 	\centering
%	 		\includegraphics[scale = 1.3]{images/coalition_formation_resource_allocation.png}
%		\caption{Example of a structure and corresponding shares to the players. The worth of the coalition is shown as a circle and each payoff by a slice.}
%		\label{fig:coalition_formation_resource_allocation}
%	 \end{figure}

This matching game is characterized by {\it complementarities} in the sense that APs have preferences over groups of mobile users and {\it peer effects} in the sense that mobile users care who their peers are in a cell and thus emit preferences also over groups of mobiles users. 
Indeed, by definition of DCF implementation of the IEEE 802.11 protocol, a users' throughput does not only depend on its physical data rate but also on the coalition size and composition. We are thus facing the classical association problem with the additional property that the players (mobile users and APs) are selfish and solely interested in the association maximizing their own throughput. The following questions are raised: does there exist associations (or matchings) in which no subset of players prefer deviating and associate with each others, i.e., are there stable associations? Do these associations always exist? Is there unicity? How to reach these equilibria in a decentralized way? Finally, how to provide the players the incentive to make the system converge to another association point with interesting properties in terms of load balancing?
%In this framework, the question arises of finding \textit{stable} structures (if any), i.e., structures in which no group of users has the incentive to deviate.
%Any distributed association rule based on preferences will result in {\it stable} structures (if any), i.e., structures, in which no group of users has the incentive to deviate. 

%\textcolor{red}{Applying this framework to 802.11 in the saturated regime, the cell total throughput is shared according to a Nash Bargaining (NB).} 
Assuming that players associate solely w.r.t their individual throughput many mobile users may remain unassociated since every AP has the incentive to associate with a single mobile user having the best data rate. We call this problem the {\it unemployment problem}. 
%\newgeometry{top=19.1mm,bottom=19.1mm,right=19.1mm,left=19.1mm}
%\hspace{-0.65cm}  
To  counter this side effect and provide the nodes the incentives to associate with each others, we design a decentralized three steps mechanism to control the set of the stable matchings. 
%======STEP 1:LOAD BALANCING=====
In the first step, the APs share the load.
%======STEP 2: CONTROL=====
In the second step, the coalition game is controlled to provide the incentives to enforce the load balancing. 
%======STEP 3: COALITION GAME + ALGO=====
In the third step, players play the {\it controlled} coalition game with individual payoffs obtained from a Nash Bargaining (NB)-based sharing rule. 
This sharing rule is interesting because it generalizes equal sharing, but also other proposals in the literature such as proportional fairness.
Under  some assumptions, the NB-based sharing rule guarantees that the set of stable structures is non empty in all states of nature. 
The control of the game is designed so as to provide the players the incentives to respect the  objective of the load balancing (during the first step). The control is based on the notion of Fear of Ruin (FoR) introduced in~\cite{Aum1977}.
%===========STEP 4: ALGO======== 
The equilibrium point of the third step is obtained by a decentralized algorithm that results in a core stable  matching (or structure). 
We propose here a modified version of the Deferred Acceptance Algorithm (DAA), called Backward Deferred Acceptance Algorithm (BDAA), for matching games with complementarities {and  peer effects. Similarly to the DAA, the complexity of the BDAA is polynomial. 
We show through numerical simulations that our mechanism not only ensures that a stable matching will form but is also a way to reduce the impact of the WiFi anomaly. 
In fact, the equilibrium association relies relies on the agents' incentives to counter the side effects induced by the protocol. 
%We also show that the widely observed overlapping of APs can be seen as an opportunity to reduce this anomaly rather than an obstacle.
Moreover, this mechanism allows us to exploit the overlapping of APs as an opportunity to reduce the anomaly of 802.11 rather than an obstacle.  

%=========================================
%===============SUBSECTION===============
\subsection{ Related Work}\label{subsec:relatedworks}
IEEE 802.11 (WiFi) anomaly is a well documented issue in the literature, see e.g.~\cite{Tan2004, Bab2005, Ban2007}. The first idea to improve the overall performance of a single cell system is to modify the MAC so as to achieve a {\it time-based fairness} \cite{Tan2004, Bab2005}. Authors of \cite{Tan2004} propose a leaky-bucket like approach. Banchs et al. \cite{Ban2007} achieve {\it proportional fairness} by adjusting the transmission length or the contention window parameters of the stations depending on their physical data rate. Throughput based fairness, time based fairness and proportional fairness resource allocation schemes are sharing rules that can be obtained from a Nash bargaining as we will see later on. 

%In a somewhat different perspective, Chen and Leneutre \cite{Che2008} formulate a game, where stations are players able to control their power and rate so as to maximize their throughput with minimum energy consumption.  

In a multiple cell WLAN network, mobile user-AP association plays a crucial role for improving the network performance and can be seen as a mean to mitigate the WiFi anomaly without modifying the MAC layer. The maximum RSSI association approach, though very simple, may cause an imbalanced traffic load among APs, so that many devices can connect to few APs and obtain low throughput, while few of them benefit from the remaining radio resource. Kumar et al. \cite{Kum2005} investigate the problem of maximizing the sum of logarithms of the throughputs. Bejerano et al. \cite{Bej2007} formulate a mobile user-AP association problem guaranteeing a max-min fair bandwidth allocation for mobile user. This problem is shown to be NP-hard and constant-factor approximation algorithms are proposed. Li et al. \cite{Li2011}.
% jointly consider power control and AP association in multi-rate WLANs and propose a centralized heuristic for achieving proportional fairness. 

 Arguing for ease of implementation, scalability and robustness, several papers have proposed decentralized heuristics to solve this issue, see e.g. \cite{Kor2005, Gon2008, Bon2009}. Reference \cite{Kor2005} proposes to enhance the basic RSSI scheme by an estimation of the Signal to Interference plus Noise Ratio (SINR) on both the uplink and the downlink. Bonald et al. in \cite{Bon2009} show how performance strongly depends on the frequency assignment to APs and propose to use both data rate and MAC throughput in a combined metric to select the AP. 
 %In \cite{Xu2010}, mobile users select their AP so as to minimize the $\mathbb{L}_p$-norm of loads of the APs in its vicinity. 
 Several papers have approached the problem using game theory based  on individual MAC throughput.  Due to the WiFi anomaly, this is not a classical {\it crowding game} in the sense that the mobile user achieved throughput is not necessary a monotonically decreasing function of the number of attached devices, as it can be the case in cellular networks \cite{Che2008,  Xu2010}. Compared to proposed decentralized approaches, we do not intend to optimize some network wide objective function, but rather to study the equilibria resulting from selfish behaviors. Compared to other game-theoretic approaches, we consider a fully distributed scenario, in which APs are also players able to accept or reject mobile users. This requires the study of the core stability, a notion stronger than the classical Nash Equilibrium. Moreover, there is a need in understanding the fundamental interactions between mobile user association and resource allocation in the presence of complementarities and peer effects. 

In this paper, we tackle the mobile user-AP association problem using the framework of matching games with individual selfish players. This framework provides powerful tools for analyzing the stability of associations resulting from decentralized mechanisms. Matching games~\cite{Rot1990} is a field of game theory that have proved to be successful in explaining achievements and failures of matching and allocation mechanisms in decentralized markets. Gale and Shapley published one of the earliest and probably most successful paper on the subject \cite{Gale1962} and solved the stable marriage and college admissions association problem with a polynomial time algorithm called DAA. 

Some very recent papers in the field of wireless networks have exploited the theoretical results and practical methods of matching games \cite{Pan2013, Ham2014, Saa2014}, although none has considered the WLAN association problem and its related WiFi anomaly. Authors of \cite{Pan2013} address the problem of downlink association in wireless small-cell networks with device context awareness. The relationship between resource allocation and stability is not investigated and APs are not allowed to reject users. Hamidouche et al. in \cite{Ham2014} tackle the problem of video caching in small-cell networks. They propose an algorithm that results in a many-to-many pairwise stable matching. Preferences emitted by servers exhibit complementarities between videos and vice versa. Nevertheless, the model doesn't take into account peer effects within each group. Reference \cite{Saa2014} addresses the problem of uplink user association in heterogeneous wireless networks. Invoking a high complexity, complementarities are taken into account by a transfer mechanism that results in a Nash-stable matching, a concept weaker than pairwise stability or core stability.

%These works have been achieved under the substitutability condition, i.e., user preferences depend only on the coalition they apply but not on the composition of that coalition. On the contrary, many wireless situations (like 802.11, Device-to-Device communications, virtual MIMO, relaying) exhibits complementarities between users and the substitutability assumption does not hold. However our framework provide a new tool that can be used  to tackle other game-theoretic and decentralized decision taking problems in wireless networks  without  making  assumptions on the complementarity. 

%In multi-rate 802.11 networks, users do not only choose a coalition based on the throughput they can obtain from the AP but also based on the number and data rates of users already connected to this AP. On the other side, an AP does not only care about which individual users are associated to it, but also about some group properties such as heterogeneity in physical data rates.

%=========================================
%===============SUBSECTION===============
\subsection{Contributions}
%We propose to go a step further in the understanding of the theoretic constraints induced by the presence of complementarities and peer effects, in the solutions developed by game theorists and in their applications in wireless networks and the design of resource allocation schemes. The results of the paper highlight the importance of the Nash Bargaining in wireless networks as a stability inducer but also as a convenient and easy-to-use tool at several levels of network resource management. 
Our contributions can be summarized as follows:\\
%\begin{itemize}
%\item[$\bullet$] 
{$\bullet$} We provide a matching game-theoretic unified approach of mobile user association and resource allocation in IEEE 802.11 WLANs in the presence of complementarities and peer effects. 
%We propose to go a step further in the understanding of the theoretic constraints induced by the presence of complementarities and peer effects, in the solutions developed by game theorists and in their applications in wireless networks and the design of resource allocation schemes. 
The results of the paper highlight the importance of the Nash bargaining in wireless networks as a stability inducer but also as a convenient and easy-to-use tool at several levels of network resource management.\\
To the best of our knowledge, this is the first game-theoretic modeling of the IEEE 802.11 protocol  covering such a number of resource allocation mechanisms proposed in the literature.\\
%\item[$\bullet$]
{$\bullet$}   We use existing theoretical results to show that if the scheduling and/or the MAC protocol result from a Nash bargaining then there exist stable mobile user associations, whatever the user data rates or locations.\\
%\item[$\bullet$]
{$\bullet$}  In order to {\it control} the matching game, we design a three steps mechanism, which includes 1) a generic load balancing, 2) a control step i.e. the modifications to be applied to the worths of the coalitions in order to provide the agents the incentives to enforce the result of the load balancing, 3) a coalition game with resource allocation defined as a Nash bargaining over the resource to be allocated and a stable matching algorithm with players' preferences induced by the resource allocation. 
%and explicitly taking into account the individual perceptions of the allocated resources. 
This three steps mechanism tackles the so called unemployment problem, that would have left mobile users aside from the association otherwise. We show through numerical examples that our mechanism achieves good performance compared to the global optimum solution. We also show how the mechanism can be used to efficiently share the load between APs.\\
%For one specific scenario where an AP is overloaded compared to others when using best-RSSI, our mechanism greatly improves the performances (around $+200\%$).
%We achieve high performance gains in non uniform traffic scenarios.
To the best of our knowledge, such a mechanism is absent from both the game theoretic and wireless networks based on matching games literature.
 
%\item[$\bullet$]
{$\bullet$}   We show that our BDAA can be efficiently used to find a stable many-to-one matching in a coalition game with complementarities and peer-effects.
%To the best of our knowledge, this is the first algorithm which explicitly seek for a limited amount the information to be transmitted among the players in the matching mechanism. 
The algorithm has a polynomial complexity.
%\end{itemize}

The mechanism has been originally proposed in an extended abstract~\cite{Tou2015c}. 
In this paper, we provide a complete description, show the mathematical results and corresponding proofs. 
We furthermore extend the results to a generic load balancing scheme and to the Nash bargaining-based sharing rules (resource allocation schemes). 
The particular case of equal sharing has already been assessed in~\cite{Tou2015b}.
BDAA has been originally proposed in a short paper~\cite{Tou2015} and we provide here proofs of convergence to a core stable structure and of polynomial complexity.
\\

The rest of the paper is organized as follows. In Section~\ref{sec:systemmodel}, we define the system model. In Section~\ref{sec:matching}, we  formulate the IEEE 802.11 WLANs resource allocation and decentralized association problem. 
%as a many-to-one matching game in which preferences are obtained from a  NB. 
In Section ~\ref{sec:existencestable}, we show that there exist stable coalition structures under certain conditions whatever the individual data rates. Section~\ref{sec:mechanism} presents our three steps mechanism. Section~\ref{sec:wlan} shows numerical results. Section~\ref{sec:Conclusion} concludes the paper and provides perspectives. 

%\vspace{-0.4cm}
\section{System model}\label{sec:systemmodel}
%\subsection{Notations}\label{subsec:notations}
We summarize in Table \ref{table:notations} the notations used in this paper. We use both game-theoretic definitions and their networking interpretation. Throughout the paper, they are used indifferently. 
\begin{table}[t]
\footnotesize
\begin{center}
\begin{tabular}{llll|}
\hline
$|set|$& cardinality of the set $set$& $\mathcal{N}$& set of players (mobile users and APs)\\
$\mathcal{W}$& set of mobile users &$\mathcal{F}$& set of Access Points (APs)\\
%$W_{f}$& set of mobile users with strictly positive data rate w.r.t. $f\in\mathcal{F}$\\
%$W_{\{f\}}$& maximal subset of mobile users with \\
% & strictly positive data rate w.r.t. $f\in\mathcal{F}$ only\\
%$W_{\mathcal{J}}$& maximal subset of mobile users with \\
 %& strictly positive data rate w.r.t. all $f\in\mathcal{J}\subset\mathcal{F}$ only\\
$\mathcal{C}$& set of coalitions (cells)& $\mathcal{C}_{f}$& set of coalitions containing AP $f\in\mathcal{F}$\\
$C$& coalition (cell)& $\mu$& matching (AP-mobile user association)\\
$\Theta$& set of feasible data rates & $\theta_{wf}$& data rate between $w$ and $f$\\
%$\gamma_{i}(.)$& baseline throughput function of node $i$\\
%$\beta_{i,C}$& attempt rate of a node $i$ in cell $C$\\
%$\mathbf{\beta}_{C}$& attempt rate vector of the nodes in cell $C$\\
%$\gamma_{i,C}$& probability of collision of an attempt of node $i$ in cell $C$\\
%$\mathbf{\gamma}_{C}$& vector of probabilities of collisions of attempts in cell $C$\\
%$m_{i}$& number of flows at node $i$\\
%$f_{ij}$& flow $j$ at node $i$\\
%$p_{ij}$& fraction of packets of flow $f_{ij}$ at node $i$\\
%$L_{ij}$& size of the packets of flow $f_{ij}$ at node $i$\\ 
%$r_{ij,C}$& throughput of flow $j$ of node $i$ in cell $C$\\
$r_{i,C}$& throughput of node (user or AP) $i$ in cell $C$& $\alpha_{i,C}$& proportion of resources (time, frequency) of $i$ in cell $C$\\
%$R_{C}$& total throughput of the nodes in cell $C$\\
$D$& sharing rule (resource allocation scheme) & $v(C)$& worth of coalition $C$\\
$s_{i,C}$& payoff of player $i$ in coalition $C$& $u_{i}(.)$& utility function of player $i$\\
%$\mathbf{u}$& utility vector\\
$q_{i}$& quota of player $i$& $\chi_{C}$& fear-of-ruin of coalition $C$\\ 
%$\mathcal{B}_{C}$& set of feasible joint utility vectors in coalition $C$\\
%$\mathbf{t}$& minimum utility requirement vector\\
%$\Phi(\mathcal{B}_{C},\mathbf{t})$& Nash bargaining solution in $\mathcal{B}_{C}$ with utility requirement $\mathbf{t}$\\
%$\mathbf{P}$& preferences lists\\
$P(i)$& preferences list of player $i$ over individuals & $P^{\#}(i)$& preferences list of player $i$ over groups\\
\hline
\end{tabular}
\end{center}
\caption{Notations}
\label{table:notations}
\end{table}%
Let define the set of players (nodes) $\mathcal{N}$ of cardinality $N$ as the union of the disjoint sets of mobile users $\mathcal{W}$ of cardinality $W$ and APs $\mathcal{F}$ of cardinality $F$. 
%We assume that orthogonal channels are assigned to different APs. 
%This is an interference-free model. In game-theoretic terms, this implies that there are no externalities. 
As in \cite{Kum2005}, we assume an interference-free model. It is assumed that the AP placement and channel allocation are such that the interference between cochannel APs can be ignored.
In game-theoretic terms, this implies that there are no externalities.
 %Each AP has limited coverage area and there are sufficient channel resources in which each AP is assigned an interference-free channel. We call an AP and its mobiles a cell. We thus assume that there does not exist interferences between the cells. This is the single-cell hypothesis. In game-theoretic terms, it does not exist externalities. 
The mobile user association is a mapping $\mu$ that associates every mobile user to an AP and every AP to a subset of mobile users. 
 
The IEEE 802.11 standard MAC protocol has been set up to enable any node in $\mathcal{N}$ to access a common medium in order to transmit its packets. The physical data rate between a transmitter and a receiver depends on their respective locations and on the channel conditions. For each mobile user $i\in\mathcal{W}$, let $\theta_{if}$ be the (physical) data rate with an AP $f$ where $\theta_{if}\in \Theta=\{\theta^{1},\ldots,\theta^{m}\}$, a finite set of finite rates resulting from the finite set of Modulation and Coding Schemes. %\footnote{For example, in IEEE802.11g, $\Theta = \{0,6, 9, 12, 18, 24, 36, 48, 54 \}$ Mbps}. 
If $i$ is not within the coverage of $f$ then $\theta_{if}=0$. Given an association $\mu$, let  $\boldsymbol{\theta}_C=(\theta_{wf})_{(f,w)\in (C\cap \mathcal{F})\times (C\cap \mathcal{W})}$ denote the data rate vector of mobiles users in cell $C$ served by AP $f$. Let $\mathbf{n_{C}}$ be the normalized composition vector of $C$, whose $k$-th component is the proportion of users in $C$ with data rate $\theta_{k}\in \Theta$. Note that an AP is defined in the model as a player with the additional property of having maximum data rate on the downlink.
Within each cell, a resource allocation scheme (e.g. induced by the CSMA/CA MAC protocol) may be  formalized as a sharing rule over the  {resource to be shared in the cell.
This  resource may be the total cell throughput (as considered in the saturated regime) or the amount of radio resources in time or frequency in the general case. More precisely, a sharing rule is a set of functions $D = (D_{i,C})_{C\in\mathcal{C},i\in C}$, where $D_{i,C}$ allocates a part of the resource of $C$ to user $i\in C$. Equal sharing, proportional fairness, $\alpha$-fairness are examples of sharing rules. 

Assuming the IEEE 802.11 MAC protocol and the saturated regime, the overall cell resource of cell $C$ is defined as the total throughput. It is a function of the composition vector $\mathbf{n_{C}}$ and of the cardinality $|C|$. We denote $r_{i,C}$ the throughput obtained by user $i$ in cell $C$. From the  game theoretic point of view, $r_{i,C}$ is understood as $i$'s share of the worth of coalition $C$ denoted $v(C)$. The function $v:\mathcal{C}\rightarrow\mathbb{R}$ is called the characteristic function of the coalition game and maps any coalition $C\in \mathcal{C}$ to its worth $v(C)$.
Other MAC protocols and regime can however be modeled by this approach. 
For example time-based fairness proposed in the literature to solve the WiFi anomaly results from the sharing of the time resource. In this case, a user $i$ gets a proportion $\alpha_{i,C}$ of the time resource, which induces a throughout of $\alpha_{i,C}\theta_{if}$, where $f$ is the AP of $C$. It can be shown that time-based fairness results in a proportional fairness in throughputs. 

%supposed to be a function of $D$ and $\mathbf{n_{C}}$ and is denoted $R_{C} = R(C,{\bf D} ,\mathbf{n_{C}})$. The individual (MAC) throughput $r_{i,C}$ of player $i$ in the cell $C$ is obtained as follows: $r_{i,C} = D_{i,C}(R_C)$, i.e., $r_{i,C}$ is understood as $i$'s share of the worth $R_{C}$ of coalition $C$. 
%On the downlink, we assume a single-rate. Thus, any frame from an AP to any of its associated mobile user will be sent at fixed rate (e.g. 11 Mbps in 802.11b).

%It is known \cite{Tan}\cite{Babu} that the IEEE 802.11 MAC protocol DCF (most implemented CSMA/CA variant) achieves throughput-based fairness. In this article, we focus on the performances of competing nodes, not flows. In fact, due to the presence of access points, the set of flows $\mathcal{B}_{w}$ of any user $w$ may be reduced to a singleton $b_{w\mu(w)}$, namely the flow from $w$ to its associated AP. Moreover, under the previous assumption of fixed downlink rate, the set of flows $\mathcal{B}_{f}$ of any AP $f$ may be reduced to a singleton $b_{f\mu(f)}$, namely the flow from $f$ to the set of its associated mobile user.

\section{Matching Games Formulation} \label{sec:matching}
%=========================================
%===============SUBSECTION===============
\subsection{Matching Games for Mobile User Association}
In this paper, the mobile user association is modeled as a matching game (in the class of coalition games).  The matching theory relies on the existence of individual order relations $\{\succeq_{i}\}_{i\in\mathcal{N}}$, called preferences, giving the player's ordinal ranking
\footnote{In this paper, we use the Individually Rational Coalition Lists (IRCLs) to represent preferences. It can indeed easily be shown that other representations (additively separable preferences, B-preferences, W-preferences) are not adapted to our problem, see~\cite{Kei2011} for more details.
}
of alternative choices. 
As an example, $w_{1}\preceq_{f_{1}}[w_{2},w_{3}]\preceq_{f_{1}} w_{4}$ indicates that the AP $f_{1}$ prefers to be associated to mobile user $w_{4}$ to any other mobile user, is indifferent between $w_{2}$ and $w_{3}$, and prefers  to be associated to mobile user $w_{2}$ or $w_{3}$ rather than to be associated to $w_{1}$. Following the notations of Roth and Sotomayor in \cite{Rot1990}, let us denote $\textbf{P}$ the set of preference lists $\textbf{P}=(P_{w_{1}},\dots,P_{w_{W}},P_{f_{1}},\ldots,P_{f_{F}})$. 
\begin{definition}[Many-to-one bi-partite matching \cite{Rot1990}]
	A matching $\mu$ is a function from the set $\mathcal{W}\cup\mathcal{F}$ into the set of all subsets of $\mathcal{W}\cup\mathcal{F}$ such that:\\
	(i) $|\mu(w)|=1$ for every mobile user $w\in\mathcal{W}$ and $\mu(w)=w$ if $\mu(w)\not\in\mathcal{F}$;\\
	(ii) $|\mu(f)|\leq q_{f}$ for every AP $f\in\mathcal{F}$ ($\mu(f)=\emptyset$ if $f$ isn't matched to any mobile user in $\mathcal{W}$);\\
	(iii) $\mu(w)=f$ if and only if $w$ is in $\mu(f)$.
\end{definition}
Condition (i) of the above definition means that a mobile user can be associated to at most one AP and that it is by convention associated to itself if it is not associated to any AP. Condition (ii) states that an AP $f$ cannot be associated to more than $q_f$ mobile users. Condition (iii) means that if a mobile user $w$ is associated to an AP $f$ then the reverse is also true. In this definition, $q_{f}\in{\mathbb{N}^{*}}$ is called the {\it quota} of AP $f$ and it gives the maximum number of mobile users the AP $f$ can be associated to.

From now on, we focus on many-to-one matchings.
In this setting, stability plays the role of equilibrium solution. 
%More than ten different stabilities have been defined in the framework of the game-theoritical coalition formation problem. Nevertheless, matchings have focused on four solution concepts, namely: individual rationality (non-cooperative), pairwise stability (cooperative), group stability (cooperative), core stability (cooperative) and weak core stability (cooperative). 
In this paper, we particularly have an interest in the pairwise and core stabilities. 
For more details we refer the reader to the reference book \cite{Rot1990}. We say that a matching $\mu$ is {\it blocked by a player} if this player prefers to be unmatched rather than being matched by $\mu$. We say that it is {\it blocked by a pair} if there exists a pair of unmatched players that prefer to be matched together.
\begin{definition}[Pairwise sability \cite{Rot1990}]
A matching $\mu$ is {\bf pairwise stable} if it is not blocked by any player or any pair of players. The set of pairwise stable matchings is denoted $S(\mathbf{P})$. 
\end{definition}
\begin{definition}[Domination \cite{Rot1990}]
 	A matching $\mu'$ dominates another matching $\mu$ via a coalition $C$ contained in $\mathcal{W}\cup\mathcal{F}$ if for all mobile users $w$ and APs $f$ in $C$, (i) if $f' = \mu'(w)$ then $f'\in C$, and if $w'\in\mu'(f)$ then $w'\in C$; and (ii) $\mu'(w)\succ_{w} \mu(w)$ and $\mu'(f)\succ_{f}\mu(f)$.
\end{definition}
\begin{definition}[Weak Domination \cite{Rot1990}]
 	A matching $\mu'$ weakly dominates another matching $\mu$ via a coalition $C$ contained in $\mathcal{W}\cup\mathcal{F}$ if for all mobile users $w$ and APs $f$ in $C$, (i) if $f' = \mu'(w)$ then $f'\in C$, and if $w'\in\mu'(f)$ then $w'\in C$; and (ii) $\mu'(w)\succeq_{w} \mu(w)$ and $\mu'(f)\succeq_{f}\mu(f)$; and (iii) $\mu'(w)\succ_{w}\mu(w)$ for some $w$ in $C$, or $\mu'(f)\succ_{f} \mu(f)$ for some $f$ in $C$.
\end{definition}
\begin{definition}[Cores of the game \cite{Rot1990}]
	The core $C(\mathbf{P})$ (resp. the core defined by weak domination $C_{W}(\mathbf{P})$) of the matching game is the set of matchings that are not dominated (resp. weakly dominated) by any other matching. 
\end{definition}
In the general case, the core of the game $C(\mathbf{P})$ contains $C_{W}(\mathbf{P})$. When the game does not exhibit complementarities or peer effects, it is sufficient for its description that the preferences are emitted over individuals only. In the presence of complementarities or peer effects, players in the same coalition (i.e. the set of mobile users matched to the same AP) have an influence on each others. In such a case, the preferences need to be emitted over subsets of players and are denoted $P^{\#}$. 

In the classical case of matchings with complementarities, the preference lists are of the form $\mathbf{P} = (P_{w_{1}},\ldots,P_{w_{W}},P^{\#}_{f_{1}},\ldots,P^{\#}_{f_{F}})$, i.e., preferences over groups are emitted only by the APs (see the firms and workers problem in~\cite{Rot1990}). Moreover, it may happen that the preferences over groups may be {\it responsive} to the individual preferences in the sense that they are aligned with the individual preferences in the preferences over groups differing from at most one player. The preferences over groups may also satisfy the substitutability property. The substitutability of the preferences of a player rules out the possibility that this player considers others as complements.
\begin{definition}[Responsive preferences \cite{Rot1990}]
	The preferences relation $P^{\#}(i)$ of player $i$ over sets players is responsive to the preferences $P(i)$ over individual players if, whenever $\mu'(i) = \mu(i)\cup\{k\}\backslash\{l\}$ for $l$ in $\mu(i)$ and $k$ not in $\mu(i)$, then $i$ prefers $\mu'(i)$ to $\mu(i)$ (under $P^{\#}(i)$) if and only if $i$ prefers $k$ to $l$ (under $P(i)$).
\end{definition}
Before defining the substitutability property of preferences, we need to introduce the choice function $Ch_{i}$ of a player $i$. Given any subset $C$ of players, $Ch_{i}(S)$ is called the choice set of $i$ in $S$.
It gives the subset of players in $C$ that player $i$ most prefers. 
%See \cite{Rot1990} and references therein for more details on the choice functions.

\begin{definition}[Substitutable preferences \cite{Rot1990}]
	A player $i$'s ($i\in\mathcal{W}\cup\mathcal{F}$) preferences over sets of players has the property of substitutability if, for any set $S$ that contains players $k$ and $l$, if $k$ is in $Ch_{i}(S)$ then $k$ is in $Ch_{i}(S\backslash{l})$.
\end{definition}
Considering preference lists of the form $\mathbf{P} = (P_{w_{1}},\ldots,P_{w_{W}},P^{\#}_{f_{1}},\ldots,P^{\#}_{f_{F}})$ and assuming either responsive or substitutable strict preferences, we have the result that $C_{W}(\mathbf{P})$ equals $S(\mathbf{P})$. Any many-to-one matching problem with these properties has an equivalent one-to-one matching problem, which can be solved by considering preferences over individuals only. The set of pairwise stable matching is non-empty.

If the preferences are neither responsive nor substitutable, the equality $S(\mathbf{P})=C_{W}(\mathbf{P})$ does not hold in general and the sets of pairwise, weak core and core stable matchings may be empty. An additional difficulty appears if the preferences over groups have to be considered on the mobile users side, i.e., if we have preference lists of the form $\mathbf{P} = (P^{\#}_{w_{1}},\ldots,P^{\#}_{w_{W}},P^{\#}_{f_{1}},\ldots,P^{\#}_{f_{F}})$. Complementarities and peer effect may arise in both sides of the matching. The user association problem in IEEE 802.11 WLANs falls in this category because the performance of any mobile user in a coalition may depend on the other mobiles in the coalition. To break the indifference, we use the following rule: a mobile user prefers a coalition with AP with the lowest index and an AP prefers coalitions in lexicographic order of users indices.
%In such a case, three main sub-cases have to be considered.
%The first one concerns many-to-many matching settings with complementarities and without peer effects where all players emit preferences over subsets of players of the opposite side (e.g. users may be connected to many APs) without cares about who their peers are connected to.
%The second one concerns many-to-one matching settings with complementarities and peer effects on one side where APs emit preferences over groups of mobile users and mobile users emit preferences over groups with one AP and mobile users. In such a case, it is important for a mobile user to know who the connected peers are.
%The third and most general subcase concerns many-to-many matchings with complementarities and peer effects on both sides.

%In this paper, we have an interest in the second sub-case. In fact, in IEEE 802.11 WLANs the performance of any mobile user in a coalition may depend on the other mobiles in the coalition. Using the previous formalism, we have the lists of preferences of the form $\mathbf{P} = (P^{\#}_{w_{1}},\ldots,P^{\#}_{w_{W}},P^{\#}_{f_{1}},\ldots,P^{\#}_{f_{F}})$.
%Using examples, we show that the preferences over groups of some players in the WiFi resource allocation and users association problem may neither be responsive nor substitutable in their general sens, (i)Non-responsive preferences $P^{\#}(m)$: 

To see that preferences may not be responsive, consider an example with only uplink communications, two APs $f_1$ and $f_2$ and three mobile users $w_1, w_2, w_3$ such that $\theta_{11}=300$~Mbps, $\theta_{12}=\theta_{22}=54$~Mbps, $\theta_{21}=\theta_{32}=1$~Mbps. Assuming saturated regime and equal packet size, we can show that $P^{\#}(w_1) = f_{1}\succ f_{2} \succ \{w_{3};f_{1}\} \succ \{w_{2};f_{2}\} \succ \{w_{2};f_{1}\} \succ \{w_{3};f_{2}\}$, which is not responsive. In this example, we also see that substitutability is not even defined since every choice set is reduced to a singleton. After the game has been controlled according the proposed mechanism, preferences of $w_1$ can be modified as follows: $P^{\#}(w_1) =  \{w_{3};f_{1}\} \succ \{w_{2};f_{2}\} \succ \{w_{2};f_{1}\} \succ \{w_{3};f_{2}\} \succ f_{1}\succ f_{2}$. Considering $S=\{w_2,w_3;f_1,f_2\}$, we have $Ch_{w_1}(S)=\{w_3;f_1\}$, while $Ch_{w_1}(S\backslash w_3)=\{w_2;f_2\}$. Preferences are thus not substitutable. 

%This general many-to-one matching problem has been considered by authors of~\cite{Ech2007}, who propose a fixed-point formulation and an algorithm to enumerate the set of stable matchings. Complementarities and peer effects are analyzed in~\cite{Pyc2012}.
This general many-to-one matching problem has algorithmically been assed by Echenique and Yenmen in~\cite{Ech2007}  who propose a fixed-point formulation and an algorithm to enumerate the set of stable matchings. 
It is known, that there is no guarantee that this set is non empty if the individual preferences over groups are not of a particular form. 
The problem of complementarities and peer effects in matchings has been analytically tackled by Pycia in \cite{Pyc2012}. 
Nevertheless, no result have been derived concerning the decentralized control of core stable structures and no decentralized algorithm with a limited amount of information and reduced lists of preferences for the mobiles have been derived. 
%He has shown the existence of stable structures when the individual preferences are {\it pairwise aligned} or induced by some pairwise aligned sharing rule.

\subsection{Sharing Rules and Matching Game Formulation}\label{subsec:SRMGF}
%In order to take into account resource allocation, we rely on a cardinal theory to induce the previously defined preferences. 
We now assume that a player $i$ in a given coalition $C$ obtains a {\it payoff} $s_{i,C}$, which is evaluated (or perceived) by it through a {\it utility function} $u_i:\mathbb{R}\rightarrow\mathbb{R}$. In this paper, we assume that functions $u_i$ are positive, concave (thus log-concave), increasing and differentiable. In such a case, the individual preferences are induced by the player's utilities of these payoffs. 
%More particularly, in case of a many-to-one matching game with complementarities and utility-based preferences, it is required to 
We extend our model to the framework of finite coalition games in characteristic form $\Gamma=(\mathcal{N};v)$, where  $v$ is a function mapping any coalition to its worth in $\mathbb{R}^{+}$. 
%valued in $\mathbb{R}^{+}$ defined over a set of coalitions ($v(C)$ is the {\it worth} of $C$).
%\begin{definition}[\cite{Owe}]
%	By the characteristic function of an n-person game we mean a real-valued function $v$ defined on the subsets of $\mathcal{N}$, which assigns to each $C\subset \mathcal{N}$ the maximin value (to $C$) of the two-person game played between $C$ and $N\backslash C$, assuming that these two coalitions form.
%\end{definition} 
By definition of the characteristic function $v(\emptyset) = 0$.
In this paper we do not assume a particular form of the characteristic function $v$ (e.g. super-additivity\footnote{$\forall C,C, \; v(C\cup C')\geq v(C) + v(C')\; \text{ if } C\cap C' = \emptyset$}). 
An even particular case of coalition games in characteristic form concerns games with an exogenous sharing rule $\Gamma=(\mathcal{N};v;E^{N};D)$, where $E^{N}$ is the set of all payoff vectors and $D$ is a sharing rule. 

\begin{definition}[Sharing Rule] A sharing rule is a collection of functions $D_{i,C}:\mathbb{R}^{+}\rightarrow\mathbb{R}^{+}$, one for each coalition $C$ and each of its members $i\in C$, that maps the worth $v(C)$ of $C$ into the share of output obtained by player $i$. We denote the sharing rule given by functions $D_{i,C}$ as $D=(D_{i,C})_{C\in\mathcal{C},i\in C}$.
\end{definition}
From this definition, the payoff of user $i$ in coalition $C$ is given by $s_{i,C} = D_{i,C}\circ v(C)$ and his utility of this payoff is given by $u_{i}(s_{i,C})$.
%Stability definitions for matching games hold in the particular case of coalition games in characteristic form and the properties of the player's order relations $\{\succeq_{i}\}_{i\in\mathcal{N}}$ are induced by the triplet $(v,D,\{u_{i}\}_{i\in\mathcal{N}})$. 
We  can now formulate the IEEE 802.11 joint user association and resource allocation problem as a matching game. 
\begin{definition}[Resource Allocation and User Association Game]
Using the above notations, the resource allocation and users association game is defined as a $N$-player many-to-one matching game in characteristic form with sharing rule $D$ and rates $\boldsymbol{\theta}=\{\theta_{wf}\}_{(w,f)\in\mathcal{W}\times\mathcal{F}}$: $\Gamma=(\mathcal{W}\cup\mathcal{F},v,\mathbb{R}^{+N},D,\boldsymbol{\theta})$. Each pair of players of the form $(w,f)\in\mathcal{W}\cup\mathcal{F}$ is endowed with a rate $\theta_{wf}$ from the rates space $\Theta=\{\theta^{1},\ldots,\theta^{m}\}$.
For this game, we define the set of possible coalitions $\mathcal{C}$:
\begin{equation}
\mathcal{C}=\{\{f\} \cup J,\;f\in  {\mathcal F},\; J \subseteq  \mathcal{W}, |J| \leq  q_f \} \cup \{\{w\}, \; w\in \mathcal{W}\}. \label{eq:coalitionC}
\end{equation}
\end{definition}
Note that for IEEE 802.11 MAC protocol and for the saturated regime, $s_{i,C}\triangleq r_{i,C}$. For other time sharing MAC approaches, $s_{i,C}\triangleq \alpha_{i,C}$.
%Let us denote the rates vector $\boldsymbol{\theta}_C=(\theta_{wf})_{(f,w)\in C^{2}}$ of a coalition $C$. Any rate vector $\boldsymbol{\theta}_C$ may be mapped to a normalized composition vector $\mathbf{n_{C}}$ such that any component $n_{k}\in\mathbf{n_{C}}$ is the proportion of players in $C$ with data rate $\theta^{k}\in \Theta$. 
%Note that the rate of a pair may be any function of the parameters of the pair (mutual positions, materials, etc.). As an example of such functions we have the modulation and coding scheme (MCS).
%We observe that the dependance of the total throughput $v$ on any coalition $C$ is reduced to a dependence on its cardinality $|C|$, on its composition vector $\mathbf{n_{C}}$ and on sharing rules. In such a case, total throughput of a coalition $C$ is thus given by $v(|C|,\mathbf{n_{C}})$, the individual throughput of any player $i\in C$ is given by $s_{i,C} = D_{i,C}\circ v(C)$ and the utility of this payoff is given by $u_{i}(s_{i,C})$.    

%\vspace{-0.3cm}
\section{Existence of Core Stable Structures}\label{sec:existencestable}
%In this section, we show the existence of stable coalition structures when preferences are obtained from the Nash Bargaining. 
%

%=========================================
%===============SUBSECTION===============
\subsection{Background on Nash Bargaining}\label{subsec:BNB}
%Despite of some pre-Nashian attempts \cite{Har1956}, 
The analytical theory of bargaining has mainly been developed on the concept introduced by J.F. Nash in~\cite{Nash1950a} and~\cite{Nash1950b} for the two person game and by Harsanyi in~\cite{Har1963} for the N-person game. The bargaining is developed as a cooperative game where the set of acceptable (feasible) individual payoffs results from the set of mutual agreements among the players involved. In this paper, we use the Nash bargaining to model the resource allocation scheme (see Section \ref{sec:mechanism}). 

Let $B\subset\mathbb{R}^{N}$ be the convex compact subset of jointly achievable utility points. Let $\mathbf{t} = (t_{1},\ldots,t_{N})$ be the fixed threat vector, an exogeneous parameter that may or not result from a threat game. The bargaining problem consists in looking for a payoff vector $(u_1,...,u_N)$ in $B$ satisfying five axioms: (i) Strong Efficiency, (ii) Individual Rationality, (iii) Scale Invariance, (iv) Independence of Irrelevant Alternatives and, (v) Symmetry.\\
We have the following result (see~\cite{Tou2006} and references therein):
%\begin{theorem}[Two-person Nash Bargaining Solution~\cite{Nash1950a, Nash1950b}]
%	There is a unique solution function $\Phi$ that satisfies Axioms (i) through (v) above. This solution satisfies, for every two-person bargaining problem $(B,\mathbf{t})$: 
%\begin{equation}
%	\Phi(B,\mathbf{t})\in\argmax\limits_{\mathbf{u}\in B,\mathbf{u}\geq \mathbf{t}}(u_{1}-t_{1})(u_{2}-t_{2})
%\end{equation}
%\end{theorem}
%The Nash's solution easily extends to an n-players bargaining and we use the n-person formulation in the following.
%As in \cite{Tou2006} and references therein, we have the following equivalent problem to the Nash's n-players bargaining solution,
\begin{theorem}\label{the:solutionnashproduitutilities}
	Let the utility functions $u_{i}$ be concave, upper-bounded and defined on $X$, a convex and compact subset of $\mathbb{R}^{n}$. Let $X_{0}$ be the set of payoffs s.t. $X_{0} = \{\mathbf{s}\in X|\forall i, u_{i}(s_i)\geq t_{i}\}$ and $J$ be the set of users s.t., $J = \{j = \{1,\ldots,N\}|\exists \mathbf{s}\in X, s.t.\; u_{j}(s_j)>t_{j}\}$. Assume that $\{u_{j}\}_{j\in J}$ are injective. Then there exists a unique Nash bargaining problem as well as a unique Nash bargaining solution $\mathbf{s}$ that verifies $u_{j}(s_j)>t_{j}$, $j\in J$, and is the unique solution of the problem:
	\begin{equation}
		\max_{\mathbf{s}\in X_0} \prod_{j\in J}(u_{j}(s_j)-t_{j}).
	\end{equation}
	%Equivalently, it is the unique solution of $\max \sum_{j\in J}\ln (u_{j}(x)-t_{j})$, $x\in X_{0}$.
\end{theorem}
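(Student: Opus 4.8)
The plan is to establish existence, uniqueness, and the variational characterization of the Nash bargaining solution by separating the problem into a \emph{degenerate} part and a \emph{non-degenerate} part, governed respectively by the index sets $J^c = \{1,\dots,N\}\setminus J$ and $J$. First I would observe that on $X_0$ (which is nonempty by the definition of the threat and is convex and compact as the intersection of $X$ with the closed convex sets $\{u_i(s_i)\ge t_i\}$, using concavity of the $u_i$), every player $j\notin J$ is pinned to $u_j(s_j)=t_j$ by the very definition of $J$; so only the coordinates in $J$ are genuinely at stake. This reduces the axiomatic bargaining problem to the effective feasible set $B_J = \{(u_j(s_j))_{j\in J} : \mathbf{s}\in X_0\}\subset\mathbb{R}^{|J|}$, which inherits convexity and compactness, and which contains a point strictly dominating $(t_j)_{j\in J}$ by the definition of $J$.

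Second, I would verify that the five Nash axioms (Strong Efficiency, Individual Rationality, Scale Invariance, Independence of Irrelevant Alternatives, Symmetry) apply to $B_J$ with threat $(t_j)_{j\in J}$, and invoke the classical Nash/Harsanyi theorem: on a convex compact set admitting a point strictly above the threat, there is a unique point maximizing the product $\prod_{j\in J}(y_j - t_j)$ over $\{y\in B_J : y_j\ge t_j\}$, and this point is the unique solution satisfying the axioms. Existence of the maximizer follows from continuity of the product on the compact set $X_0$ and the fact that the supremum is positive (so the maximizer lies in the region where all factors are positive, hence $u_j(s_j)>t_j$ for all $j\in J$). Uniqueness of the maximizing \emph{utility vector} follows from strict log-concavity of $y\mapsto\sum_{j\in J}\log(y_j-t_j)$ on the positive orthant together with convexity of $B_J$; here the concavity (hence log-concavity) hypothesis on the $u_i$ is exactly what makes the composed set $B_J$ convex so that this argument runs.

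Third, I would upgrade uniqueness of the optimal utility vector to uniqueness of the optimal payoff vector $\mathbf{s}$ using the injectivity assumption on $\{u_j\}_{j\in J}$: since each $u_j$, $j\in J$, is injective, the map $\mathbf{s}\mapsto (u_j(s_j))_{j\in J}$ restricted to the attained optimum determines the $J$-coordinates of $\mathbf{s}$ uniquely; the $J^c$-coordinates are irrelevant to the bargaining outcome (and the statement only asserts uniqueness of the Nash bargaining solution, which is a statement about the negotiated utilities/payoffs of the active players), so this is enough. Finally I would note that the equivalence of the axiomatic solution with the maximizer of $\prod_{j\in J}(u_j(s_j)-t_j)$ over $X_0$ is precisely the content of the displayed optimization problem, completing the proof.

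The main obstacle is the careful handling of the set $J$ and the degenerate coordinates: one must argue cleanly that the axiomatic bargaining problem on $X_0$ genuinely reduces to the lower-dimensional problem on $B_J$ (so that the classical theorem, which is usually stated for full-dimensional feasible sets with a strictly dominating point, applies verbatim), and that ``the'' Nash bargaining solution is well-defined even though the inactive coordinates in $J^c$ are free. A secondary subtlety is making sure the maximizer does not sit on the boundary where some factor $u_j(s_j)-t_j$ vanishes — this is ruled out because $B_J$ contains a strictly dominating point and the product there is strictly positive, forcing the optimum into the interior region $\{u_j(s_j)>t_j,\ \forall j\in J\}$. Everything else (compactness, continuity, concavity $\Rightarrow$ convexity of the utility set) is routine.
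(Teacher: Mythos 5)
The paper never proves this theorem: it is imported from the literature (the sentence introducing it reads ``see~\cite{Tou2006} and references therein''), and the proof appendix covers only Propositions~\ref{prop:controlincentives}--\ref{prop:CompBDAA} and Corollary~\ref{cor:controlincentives}. There is therefore no in-paper argument to compare yours against, and I can only assess the proposal on its own terms. Its architecture is the standard one and is essentially right: on $X_0$ every $j\notin J$ is forced to $u_j(s_j)=t_j$ (the definition of $J$ gives $u_j(s_j)\le t_j$ on all of $X$, individual rationality gives the reverse inequality), so the axiomatic problem reduces to the active players; existence follows from compactness and continuity; the optimum avoids the boundary because the supremum of the product is positive; uniqueness of the utility vector comes from strict log-concavity; and injectivity of the $u_j$ lifts this to uniqueness of the payoffs.

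Two steps need more care than you give them. First, the image set $B_J=\{(u_j(s_j))_{j\in J}:\mathbf{s}\in X_0\}$ is \emph{not} convex in general when the $u_j$ are merely concave; only its comprehensive (free-disposal) hull is. The uniqueness argument should therefore be run directly in payoff space: $\mathbf{s}\mapsto\sum_{j\in J}\log\bigl(u_j(s_j)-t_j\bigr)$ is concave on the convex set $\{\mathbf{s}\in X_0: u_j(s_j)>t_j,\ \forall j\in J\}$, and it is strictly concave along any segment on which at least one $u_j(s_j)$ is non-constant (strict concavity of $\log$ composed with a positive non-constant concave function); this forces the optimal utility vector to be unique, and injectivity then pins down $(s_j)_{j\in J}$. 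Second, the claim that there is a feasible point strictly dominating $(t_j)_{j\in J}$ does not follow ``by the definition of $J$'' alone: the definition only supplies, for each $j\in J$ separately, a witness $\mathbf{s}^{(j)}\in X$ with $u_j(s^{(j)}_j)>t_j$. One must average these witnesses and invoke concavity to obtain a single point at which all factors in the product are simultaneously positive; and since the witnesses are only guaranteed to lie in $X$ rather than $X_0$, the averaging has to handle the individual-rationality constraints explicitly (or the witnesses must first be replaced by individually rational ones). Neither issue is fatal, but both are precisely the places where a blind write-up of this classical result tends to go wrong.
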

An important result about the Nash bargaining solution is that it achieves a generalized proportional fairness which includes as a special case the well-known and commonly-used proportional fairness. The proportional fairness is achieved in the utility space with a null threat vector. Nevertheless, if the players' utility functions are linear in their payoffs and the threats vector is null, the proportional fairness is achieved in the payoff space. In other words, the payoff vector induced by the Nash Bargaining over the coalition throughput is proportional fair. This makes game-theory and in particular the bargaining problem of fundamental importance in networks.
In Appendix~\ref{app:80211-NB}, we show that the resource allocation induced by the MAC protocol of IEEE 802.11 can be modeled as the result of a Nash bargaining. 

\subsection{On the Existence of Core Stable Structures}
In this section, we show the existence of stable coalition structures (users-AP association) when preferences are obtained under some regularity conditions over the set of coalitions and some assumptions over the monotonicity of the sharing rules. 
There exists a stable structure of coalitions whatever the state of nature $\boldsymbol{\theta}$ if and only if the sharing rules may be formulated as arising from the maximization of the product of increasing, differentiable and strictly log-concave individual utility functions in all coalitions. 

\begin{proposition}[\cite{Pyc2012}, Corollary 2] \label{prop:stability}
If the set of coalitions $\mathcal{C}$ is such that $q_{f}\in\{2,\ldots,W-1\}$ and $F\geq 2$, and if the sharing rule $D$ is such that all functions $D_{i,C}$ are striclty increasing, continuous and $\lim_{y\rightarrow +\infty}D_{i,C}(y)=+\infty$, then there is a stable coalition structure for each preference profile induced by the sharing rule $D$ iff there exist increasing, differentiable, and strictly log-concave functions $u_{i}:\mathbb{R}^{+}\rightarrow\mathbb{R}^{+}, i\in\mathcal{N}$, such that $\frac{u_{i}(0)}{u'_{i}(0)} = 0$ and
	\begin{equation} \label{eq:NBresourcealloc}
	(D_{i,C}\circ v(C))_{i\in C} = \argmax\limits_{\mathbf{s}_C\in B_C}\prod\limits_{i\in C}u_{i}(s_{i,C}),
	\end{equation}
where $C\in\mathcal{C}$ and $B_C=\{\mathbf{s}_C=(s_{i,C})_{i\in C}|\sum_{i\in C}s_{i,C}\leq v(C)\}$.
\end{proposition}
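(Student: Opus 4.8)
Since this proposition is explicitly attributed to Pycia~\cite{Pyc2012} (Corollary 2), the plan is not to reprove the characterization from scratch but to indicate how the stated hypotheses place our game into the scope of Pycia's theorem and how his argument proceeds. The setting is a coalition-formation model where each coalition $C$ produces a worth $v(C)$ that is split among its members according to a sharing rule $D$, inducing a preference profile $\mathbf{P}$ over coalitions. Pycia's notion of a sharing rule being \emph{pairwise-aligned} (equivalently, admitting a representation by the maximization of a product of utilities) is exactly what guarantees nonemptiness of the core for every preference profile the rule can induce over all possible states of nature. So the proof is really a translation: one must check (a) the regularity of $D$ assumed here — all $D_{i,C}$ strictly increasing, continuous, and unbounded — matches the domain hypotheses under which Pycia's equivalence is stated, and (b) the structural conditions $q_f\in\{2,\dots,W-1\}$ and $F\geq 2$ guarantee the coalition structure is ``rich enough'' (there are coalitions of every relevant size, and more than one AP), which is the combinatorial nondegeneracy Pycia requires so that pairwise alignment is not only sufficient but also necessary.

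The forward direction (existence of the $u_i$ implies a stable structure always exists) is the substantive half and proceeds as follows. Given the representation~\eqref{eq:NBresourcealloc}, one shows that the induced preferences are \emph{pairwise aligned}: whenever two coalitions $C$ and $C'$ share two common members $i,j$, the way $i$ ranks $C$ versus $C'$ agrees with the way $j$ ranks them. This follows because, inside any coalition, the Nash-bargaining split equalizes the ``relative marginal utilities'' $u_i(s_{i,C})/u_i'(s_{i,C})$ across members (the first-order conditions of the log-concave product maximization), so a member's realized utility in $C$ is a monotone function of a single coalition-level scalar; two members of the same two coalitions therefore order those coalitions identically. The hypothesis $u_i(0)/u_i'(0)=0$ is what forces a singleton coalition (worth $0$) to the bottom of every agent's list in a consistent way, so individual rationality does not break alignment. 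Once preferences are pairwise aligned, one runs Pycia's fixed-point / iterated-elimination argument: define an operator on coalition structures that repeatedly lets the ``most-preferred mutually-agreeing'' block form; pairwise alignment ensures this process is well-defined and terminates at a structure no coalition wants to deviate from, i.e., a core-stable (in fact the paper's $C_W(\mathbf{P})$-stable) structure. The richness conditions on $q_f$ and $F$ are used here only to ensure the relevant blocking coalitions actually exist in $\mathcal{C}$.

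For the converse (necessity), one argues contrapositively: if $D$ is \emph{not} representable as such a product maximization, then $D$ fails pairwise alignment on some pair of coalitions with two shared members, and Pycia shows how to build a state of nature $\boldsymbol\theta$ — and hence worths $v(C)$ — for which the induced preference profile admits a blocking cycle (a ``Condorcet-like'' cycle of coalitions each preferred by a shared member to the next), so that every coalition structure is dominated and the core is empty. This is exactly where $F\geq 2$ and $q_f\geq 2$ are indispensable: one needs at least two APs and coalitions of size at least $2$ (plus $q_f\leq W-1$ so the ``grand'' coalition of one AP is not forced) to embed such a cycle. The main obstacle in writing this out fully is not any single inequality but faithfully reproducing Pycia's equivalence between the \emph{analytic} condition (log-concave product representation with the boundary normalization $u_i(0)/u_i'(0)=0$) and the \emph{combinatorial} condition (pairwise alignment of all induced profiles); since the paper cites this verbatim as Corollary~2 of~\cite{Pyc2012}, the honest ``proof'' here is to verify the hypotheses line up — in particular that our $D_{i,C}$ regularity and our $\mathcal{C}$ from~\eqref{eq:coalitionC} satisfy Pycia's standing assumptions — and then invoke his result.
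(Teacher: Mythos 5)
Your proposal is correct and matches the paper's treatment: the paper does not reprove Pycia's Corollary~2 either, but instead verifies that its hypotheses hold — regularity of the coalition family $\mathcal{C}$ under $q_{f}\in\{2,\ldots,W-1\}$ and $F\geq 2$ (Lemma~\ref{lemma:regularity} in Appendix~\ref{app:regularity}) and regularity/pairwise alignment of the sharing rule — and then invokes the cited result. Your sketch of Pycia's internal argument (pairwise alignment via the equalized fear-of-ruin $u_i/u_i'$ at the Nash bargaining solution for sufficiency, and cycle construction for necessity) is consistent with how the paper later uses these facts in Proposition~\ref{prop:controlincentives} and Proposition~\ref{prop:StabBDAA}.
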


According to Proposition~\ref{prop:stability}, there are two conditions to be satisfied by the coalitions of our resource allocation and user association game: (i) consider scenarios with at least two APs (which is reasonable when talking about load balancing) and (ii) every AP is supposed to be able to serve at least two users and should not be able to serve the whole set of users. For more details, see Appendix~\ref{app:regularity}.

Proposition~\ref{prop:stability} ensures that there exists a stable coalition structure as soon as resource allocation results from a Nash bargaining.  The equal sharing resulting from CSMA/CA MAC protocol in saturated regime, single-flow per device and equal packet length} is obtained by considering $s_{i,C} = r_{i,C}$ and the identity function for $u_i$. 
 The players' throughputs in the general saturated regime with multiple flows and heterogenous packet length is obtained by taking $s_{i,C} = r_{i,C}$ and utility functions as shown in Appendix \ref{app:80211-NB}. 
Time-based fairness is obtained by setting $s_{i,C} = \alpha_{i,C}$ and the identity function for $u_i$. It results in turn in proportional fairness in terms of individual throughputs.

Without controlling the coalition game, some core stable structures can be formed in some  cases. For example, in CSMA/CA under saturated regime, the cell throughput is increasing with the individual physical data rates and individual throughputs $r_{i,C}$ are sub-additive, i.e., decreasing with the addition of users. Assuming that the payoff is the individual throughput, i.e., $s_{i,C} = r_{i,C}$, then each player has the incentive to form the lowest cardinality coalition with highest composition vector. 
In this case, the unique stable structure is a one-to-one matching, in which APs are associated to their best mobile user. 
This will further be mentioned in the name of the \textit{unemployment problem} since it leaves some mobiles users unassociated (unmatched).
There is the need for a control of the players incentives for some equilibrium points with satisfying properties, in terms of unemployment in the present case.
In other words, since the players have the incentive to match in a one-to-one form, one needs to control the underlying cooperative game so as to provide new incentives for a suitable many-to-one form as an equilibrium.

%=============SECTION=============
\section{Mechanism for controlled matching game} \label{sec:mechanism}

In order to tackle the {\it unemployment problem}, we propose in this section a mechanism to control the players incentives for coalitions (see Figure \ref{fig:generalblockdiagram}). This mechanism is made of three steps. We start by considering for every AP the set of acceptable mobile users, i.e., the mobile users with non zero data rate with this AP. 
In the first step (block \textbf{LB}), APs share the load defined in number of users.
This results in objective quotas that should be enforced by the mechanism. 
The second step (blocks $\mathbf{\Omega}$ and $\mathbf{\Phi}$) is a controlled coalition game designed so as to provide the players the incentives to form coalitions with cardinalities given by the quotas and reducing heterogeneity (and thus reducing the anomaly in the IEEE 802.11). The third step (block $\mathbf{\mu}$) is a decentralized coalition formation (or matching) algorithm which results in a stable structure induced by the individual preferences influenced  by the controlled coalitional game. 
%Figure \ref{fig:DCFblockdiagram} shows the block diagram in the particular case where the characteristic function is the total throughput $R_{C}$ obtained by the players of the coalition $C$ at the MAC layer.
%~~~~~~~~~~~~~~~~~~~~
		%FIGURE
%~~~~~~~~~~~~~~~~~~~~
\begin{figure*}[t]
	\centering

	\tikzstyle{int}=[draw, minimum size=0.8cm]
	\tikzstyle{init} = [pin edge={-to,thin,black}]
	\tikzstyle{test} = [pin edge={to-,thin,black}]
	
	\begin{tikzpicture}[node distance=3cm,auto,>=latex']
	    
	    \node [int,pin={[test]above:$\mathcal{F}$}] (z) {LB};
	    \node (b) [left of=z,node distance= 1.5cm, coordinate] {z};
	    \node [int,pin={[test]above:$v$}] (a)[right of=z] {$\Omega$};
	    \node [int,pin={[test]above:$(u_{i}(.))_{i\in\mathcal{N}}$},pin={[init]below:$(s_{i,C}^{*})_{i\in C,C\in\mathcal{C}}$}] (c) [right of=a] {$\Phi$};
	    \node [int] (d) [right of=c] {$\mu$};
	    \node [int] (e) [right of=d] {MAC};
	    \node [coordinate] (end) [right of=e]{};
	    
	    \path[->] (b) edge node {$\mathcal{W}$} (z);
	    \path[->] (z) edge node {$\mathbf{\hat{q}}$} (a);
	    \path[->] (a) edge node {$\tilde{v}$} (c);
	    \draw[->] (c) edge node {$(u_{i,C}^{*})_{i\in C,C\in\mathcal{C}}$} (d) ;
	    \draw[->] (d) edge node {$\mu$} (e) ;
	    \draw[->] (e) edge node {$(r_{i,C})_{i\in C,C\in\mu}$} (end) ;

	% Boxing and labelling noise shapers
	\draw [color=gray,thick](-2,-1.60) rectangle (10,1.60);
	\node at (-2,1.80) [above=10mm, right=0mm] {MECHANISM};
	\draw [color=gray,thick](10.8,-1.60) rectangle (15,1.60);
	\node at (10.8,1.80) [above=10mm, right=0mm] {802.11 MAC};
	\end{tikzpicture}
	\caption{Block diagram of the mechanism in the most general form. The APs share the load in the block \textbf{LB} which gives the APs' objectives $\mathbf{\hat{q}}$. The characteristic function $v$ of the original coalition game is controlled in $\mathbf{\Omega}$ and gives the modified characteristic function $\tilde{v}$. The Nash bargaining $\mathbf{\Phi}$ is played in each coalition for the allocation of the worth of the coalition among its members. The players then emit their preferences over the coalitions on the basis of their shares and enter a stable matching mechanism in block $\mathbf{\mu}$. This block outputs an AP-user association $\mu$. Finally, in the block $\mathbf{MAC}$ the nodes transmit their packets according to the unmodified IEEE 802.11 MAC protocol.}
	\label{fig:generalblockdiagram}
\end{figure*}
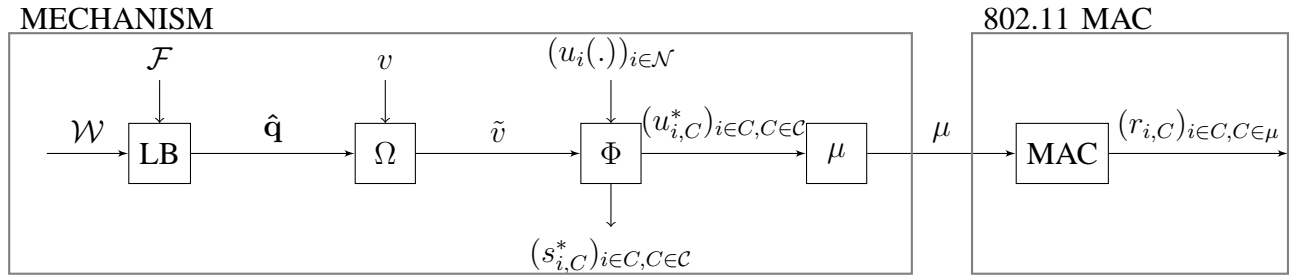
%~~~~~~~~~~~~~~~~~~~~
%		%FIGURE
%~~~~~~~~~~~~~~~~~~~~

Our mechanism can be implemented as a virtual connectivity management layer on top of the IEEE 802.11 MAC protocol. Mobile users and APs form coalitions based on the "virtual rates" provided by this virtual layer. Once associated, users access the channel using the unmodified 802.11 MAC protocol. %No modification is thus required at the MAC layer.

%Our mechanism  can be used in different ways in order to  provide the incentives to form stable structures by following the quotas computed under Nash Bargaining between the APs.  As  the  resource  allocation  decision  depends  on the IEEE 802.11 DCF protocol,  we need to run in parallel our incentive mechanism  as an additional criterion in multi-cell resource allocation.  Depending on how the traffic condition is measured and computed,  we can implement our 3-steps mechanism called Virtual Mechanism (VM) which operates  in  parallel to the real MAC in a cell , and it handles "virtual rate"  based on  three step mechanism instead of real rate.  The advantage of these virtual mechanism  is that it does not cost any channel bandwidth. The mobiles then play according to the unmodified IEEE 802.11 DCF protocol.
%and achieve improved performances (individual throughputs) due to the well-chosen AP-user association.

%Other technical solutions are described in Appendix \ref{appendix:otherimplementations}.

%=========================================
%===============SUBSECTION===============
\subsection{ Load Balancing}\label{subsec:loadbalancing}

The first step of the mechanism is a load balancing.
This step outputs a quota vector of the form $\mathbf{\hat{q}} = (\hat{q}_{1},\ldots,\hat{q}_{F})$ that defines the size of the coalitions the players should be incentivized to form with each AP.
In other words, $\mathbf{\hat{q}}$ gives the number of connections the players should be incentivized to create with each AP.
As in \cite{Tou2015b}, in numerical implementation (see Section~\ref{sec:wlan}), we take a Nash bargaining based decentralized load balancing scheme between the APs to share the users covered by several APs.
This scheme achieves the proportional fair allocation in the utility space.
Nevertheless, any load balancing scheme can be used in this mechanism. 
%\textcolor{red}{IL SERAIT BON DE METTRE UNE PHRASE SUR JEU ECONOMIQUE / JEU POLITIQUE OU PRECISER QUE L'OBJECTIF PEUT RESULTER DUN JEU OU PROCESS D'APPRENTISSAGE, ETC..}
%For the rest of this paper, we assume the quotas vector $\hat{q}$ for the APs.
%In this paper, we focus on providing the players the incentive for both solving the unemployment problem and reducing the impact of the anomaly in the IEEE 802.11 protocol.
%The objectives of the control are defined in terms of cardinalities.
%From now on, we assume the following objective quotas vector $\mathbf{\hat{q}} = (\hat{q}_{1},\ldots,\hat{q}_{F})$.
%Furthermore, observe that the mechanism is even more general than actually described and holds w.r.t. other objectives.

%=========================================
%===============SUBSECTION===============
\subsection{Controlling coalition game}\label{subset:mechanism}

%=============================================
%===============SUBSUBSECTION===============
%\subsubsection{Controlling a Game}
The second step of the mechanism is the control of the coalition game.
The control step of the mechanism tackles the problem of the control of the set of stable matchings.  
We observed that when a coalition game is defined by a characteristic function and a sharing rule inducing sub-additive strictly positive individual payoffs (except for coalitions of size one or those containing players with zero data rates), the stable structures to be formed are made of coalitions of size two.
This step of the mechanism develops an analytical framework and methodology for the control of the equilibria by the way of a control over the players' incitations for individual strategies.
%We answer a very sensible question: who is the controller?\\%sensible = sense raisonnable
	\begin{definition}[Controller]
				The controller is any entity (player or other) having the legitimacy and ability to change the definition of the game (players, payoffs, worths, information, coalitions). 
	\end{definition}
	The \textit{controller} may not be taking part in the game (e.g. the network operator in a wireless network, the government for a firms and workers association problem) or any player of the game with some kind of additional decisional abilities.
In other words, it may be any entity having the ability to create or modify the individual incitations of the players for some strategy and thus the ability to change the definition of the game.
These changes in the definition of the game in view of manipulating the players' equilibria strategies are called control transformation,
	\begin{definition}[Control transformation]
				A \textbf{control transformation} $\Omega$ is a mapping from the set of coalition games in characteristic form in itself.
	\end{definition}
In the purpose of this paper it is sufficient to restrict the definition of the control transformations to the domain of coalition games in characteristic form.	
In fact, we further assume that the controller cannot arbitrarily move from one game to another without constraints.
We assume that he or she can influence the equilibria by partial changes in the definition of the game (characteristic function, individual payoffs, ...) but can neither change the fundamental rules of the game (e.g. the rules of matching games) nor some essential elements such as the players taking part in the game or their strategy spaces.
If $\Gamma$ is a coalition game in characteristic form, then $\Omega(\Gamma)$ is a coalition game in characteristic form modified by the controller according to its (constrained) abilities.
The limits of the abilities of such a controller are to be chosen by the game theorist or the designer of the mechanism so as to satisfy the fundamental hypothesis and description of the system he is looking at.
The controller and the control transformation may be defined as the result of another game at a higher level (see the application with bargaining APs for quotas).
As an example of work on the design of an incitations operator, Auman and Kurz \cite{Aum1977} assess the problem of designing the joint taxation and redistribution scheme in the framework of a political majority-minority game. The majority is the controller and the incitations are induced by a multiplicative tax over the worths of the coalitions.
	
%%=============================================
%%===============SUBSUBSECTION===============
%\subsubsection{Illustrative Examples}

In Appendix~\ref{app:examples-control}, we give two simple example of the mechanism we propose to control the player's individual incentives.

%In order to  tackle the unemployment problem in the original coalition matching game, we  consider the many-to-one matching games $(\mathcal{N} = \mathcal{W}\cup\mathcal{F},\{D_{i}\}_{i\in\mathcal{N}},\mathbf{P})$ based on the coalitional game in characteristic form $(\mathcal{N} = \mathcal{W}\cup\mathcal{F},v,\{u_{i}\}_{i\in\mathcal{N}},NB)$ with concave individual utilities and Nash bargaining as a sharing rule. 
We now search for operators modifying the characteristic function $v$ so as to provide  players the incentives to form stable structures with coalitions of sizes $\mathbf{\hat{q}}$.
%\footnote{The notation $\hat{q}_{f}$ is used to show that the proposed NB load balancing scheme can be replaced by any other scheme providing quotas as an input of the control part.}

An important lever for controlling our matching game and designing operator $\mathbf{\Omega}$ is the fear-of-ruin (FoR). Formally, the FoR of user $i$ in coalition $C$ is defined as: 
\begin{equation}
\chi_{i}(s_{i,C})\triangleq\frac{u_{i}(s_{i,C})}{u_{i}^{'}(s_{i,C})}. 
\end{equation}
The FoR of coalition $C$ is obtained as the inverse of the Lagrange multiplier associated to the constraint $\sum_{i\in C}s_{i,C}\leq v(C)$ at the optimum of the Nash bargaining optimization problem~(\ref{eq:NBresourcealloc}). Two interesting characteristics of the FoR are that (i) in a coalitional game with Nash bargaining as sharing rule, the FoR is constant over the players in a coalition, i.e., $\chi_{i}(s_{i,C}) = \chi_{C}$ $\forall i\in C$ at the bargaining solution point $s_{i,C}$ and (ii) with concave increasing utility functions, the individual payoffs increase in the common FoR~\cite{Pyc2012}. Thus, the players have the incentives to form coalitions maximizing their FoR. In terms of control opportunities, this introduces the FoR as a lever to control the set of individual payoff-based incentives for coalitions. As an example, assume two coalitions $C$ and $C'$ and their FoRs: $\chi_{C}<\chi_{C'}$. Players in $C\cap C'$ prefer $C'$ to $C$. Changing the values of the FoRs to obtain $\chi_{C}>\chi_{C'}$ changes the individual incentives of these players so that they now prefer $C$ to $C'$. 
%In the case of IEEE802.11 WLANs with a decentralized association scheme, we want the players to have the incentives to form coalitions of sizes given by the quotas vector $\mathbf{\hat{q}}$ (resulting from the load sharing block $\mathbf{LS}$). 

%======================================
%==============PROPOSITION============
\begin{restatable}{proposition}{propositioncontrol}\label{prop:controlincentives}
Assume a coalition game $\Gamma = (\mathcal{F}\cup\mathcal{W}, v, \{u_{i}\}_{i\in N})$ in characteristic form with the Nash bargaining sharing rule over $v(C)$
%the $v(C)-simplex$\footnote{The $v(C)$-simplex is defined as the convex and compact set of points in $\mathbb{R}^{|C|}$: $B_C=\{\mathbf{s}_C=(s_{i,C})_{i\in C}|\sum_{i\in C}s_{i,C}\leq v(C)\}$.} 
for every coalition $C$ in $\mathcal{C}$. Furthermore assume strictly increasing and concave utility functions\footnote{Such utility functions are bijective and thus injective. Theorem \ref{the:solutionnashproduitutilities} applies.} $u_{i}:\mathbb{R}^{.+}\rightarrow\mathbb{R}^{+}, i\in\mathcal{N}$. 
The set of transformations $\Omega$ from the set of characteristic functions in itself that provide the players the incentive for some subset $\mathcal{C}'$ of coalitions in $\mathcal{C}$ must satisfy:
	\begin{equation}
		F_{C'}\circ\Omega(v)(C') < F_{C}\circ\Omega(v)(C) \quad \forall C'\in\mathcal{C}',\forall C\in\mathcal{C}\backslash \mathcal{C}'
	\end{equation}
	s.t. $C'\cap C \neq \emptyset$ and where $F_{C} = \left(\sum\limits_{i\in C}\left(\frac{u_{i}^{'}}{u_{i}}\right)^{-1}\right)^{-1}$.
\end{restatable}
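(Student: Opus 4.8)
The plan is to reduce the incentive condition to a comparison of the common fear-of-ruin of the coalitions under the controlled game, and then to recognise the quantity $F_C\circ\Omega(v)(C)$ as exactly the reciprocal of that common FoR. Throughout, write $\tilde{v}=\Omega(v)$ for the controlled characteristic function; note that $\Omega$ changes only the worths, so each $F_C$ (which depends only on the utilities of the members of $C$) is unchanged and only its argument moves from $v(C)$ to $\tilde{v}(C)$.

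First I would analyse a single coalition $C\in\mathcal{C}$ in the controlled game. By Theorem~\ref{the:solutionnashproduitutilities} — whose hypotheses hold because the $u_i$ are strictly increasing and concave, hence injective — the Nash bargaining over $\tilde{v}(C)$ has a unique solution $\mathbf{s}_C$, and because the $u_i$ are strictly increasing the budget constraint $\sum_{i\in C}s_{i,C}\le\tilde{v}(C)$ is active. Forming the Lagrangian of $\max\sum_{i\in C}\ln u_i(s_{i,C})$ subject to this constraint, the first-order conditions read $u_i'(s_{i,C})/u_i(s_{i,C})=\lambda_C$ for every $i\in C$; hence the fear-of-ruin $\chi_i(s_{i,C})=u_i(s_{i,C})/u_i'(s_{i,C})$ is constant over the members of $C$ with common value $\chi_C=1/\lambda_C$, and $s_{i,C}=(u_i'/u_i)^{-1}(\lambda_C)$ (log-concavity of $u_i$ makes $u_i'/u_i$ strictly decreasing, so this inverse is a genuine function). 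Summing over $i\in C$ and using that the constraint binds gives $\sum_{i\in C}(u_i'/u_i)^{-1}(\lambda_C)=\tilde{v}(C)$, i.e. $\lambda_C=F_C(\tilde{v}(C))$ with $F_C=\bigl(\sum_{i\in C}(u_i'/u_i)^{-1}\bigr)^{-1}$. Thus $F_C\circ\Omega(v)(C)=1/\chi_C$ is precisely the inverse of the common FoR of $C$ in the controlled game.

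Then I would combine this identity with the monotonicity property recalled just before the statement (item (ii), from~\cite{Pyc2012}): with concave increasing utilities each payoff $s_{i,C}$ is increasing in the common FoR $\chi_C$, which is in any case immediate from $s_{i,C}=(u_i'/u_i)^{-1}(1/\chi_C)$ and the monotonicity of $(u_i'/u_i)^{-1}$. Therefore, for any player $i\in C'\cap C$, the equivalences $u_i(s_{i,C'})>u_i(s_{i,C})\iff s_{i,C'}>s_{i,C}\iff\chi_{C'}>\chi_C\iff F_{C'}(\tilde{v}(C'))<F_C(\tilde{v}(C))$ hold, and crucially the last inequality does not involve the particular shared member $i$. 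Since "$\Omega$ provides the players the incentive for the subset $\mathcal{C}'$" means exactly that for every pair $C'\in\mathcal{C}'$, $C\in\mathcal{C}\setminus\mathcal{C}'$ with $C'\cap C\ne\emptyset$ every shared player strictly prefers $C'$ to $C$, the admissible transformations are precisely those satisfying $F_{C'}\circ\Omega(v)(C')<F_C\circ\Omega(v)(C)$ for all such pairs, which is the claim (the converse implication following from the same chain of equivalences).

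The step I expect to be the main obstacle is the regularity bookkeeping around $F_C$: one has to ensure $u_i'/u_i$ is \emph{strictly} decreasing — this is where strict log-concavity of the $u_i$, together with the normalisation $u_i(0)/u_i'(0)=0$ used in Proposition~\ref{prop:stability}, are needed — so that $(u_i'/u_i)^{-1}$ and hence $F_C$ are well-defined bijections on the relevant intervals, and that the controlled worths $\tilde{v}(C)$ stay in the range on which these inverses are defined, i.e. that a Nash bargaining solution of the controlled game still exists in every coalition. Everything else reduces to the standard Lagrangian computation and the equivalence chain above.
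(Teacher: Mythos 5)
Your proposal is correct and follows essentially the same route as the paper's proof: both derive the first-order conditions of the Nash bargaining problem to show the fear-of-ruin is constant across a coalition with $\lambda_0 = F_C(\tilde v(C))$, use monotonicity of the payoffs in the common FoR, and thereby collapse the component-wise payoff comparison for shared members into the single scalar inequality $F_{C'}\circ\Omega(v)(C') < F_C\circ\Omega(v)(C)$. The only minor difference is that the paper spells out the full KKT/Slater bookkeeping, and your worry about needing strict log-concavity is unnecessary --- strict increase, concavity and positivity of $u_i$ already make $u_i'/u_i$ strictly decreasing, exactly as the paper computes.
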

%======================================
%==============PROPOSITION============
\begin{proof}
	See Appendix~\ref{appendix:proofs}.
\end{proof}
In order to derive our last result, we need to define the concept of single-peaked preferences.
%%======DEFINITION======
%\begin{definition}[Single-peaked Preferences: Continuous version]
%	The allocation space is the unit interval $A = [0,1]$. An outcome is a single-point $x\in A$.
%	Each agent $i\in \mathcal{N}$ has a preference ordering $\succ_{i}$ (i.e. a weak order) over the outcomes in $[0,1]$. The preference relation $\succ_{i}$ is \textit{single-peaked} if there exists a point $p_{i}\in A$ (the \textit{peak} of $\succ_{i}$) such that for all $x\in A\{p_{i}\}$ and all $\lambda\in [0,1)$, $\lambdax+(1-\lambda)p_{i} \succ_{i} x$. Under a single-peaked preference relation, preference is strictly decreasing as one moves away from $p_{i}$.
%\end{definition}
%%======DEFINITION======
Let $X = \{x_{1},\ldots,x_{n}\}$ denote a finite set of alternatives, with $n\geq 3$.
%======DEFINITION======
\begin{definition}[Peak of preferences, \cite{Esc2008}]
	A preference relation $\succ$ on $X$ is a linear order on $X$. The peak of a preference relation $\succ$ is the alternative $x^{*} = peak(\succ)$ such that $x^{*}\succ x$ for all $x\in X\backslash\{x^{*}\}$. 

\end{definition}
%======DEFINITION======
%======DEFINITION======
\begin{definition}[Single-Peaked preferences, \cite{Esc2008}]
	An axis $O$ (noted by $>$) is a linear order on $X$.
	Given two alternatives $x_{i},x_{j}\in X$, a preference relation $\succ$ on $X$ whose peak is $x^{*}$, and an axis $O$, we say that $x_{i}$ and $x_{j}$ are on the same side of the peak of $\succ$ iff one of the following two condition is satisfied: (i) $x_{i}> x^{*}$ and $x_{j}>x^{*}$; (ii) $x^{*}>x_{i}$ and $x^{*} > x_{j}$.\\
	A preference relation $\succ$ is single-peaked  with respect to an axis $O$ if and only if for all $x_{i}$, $x_{j}\in X$ such that $x_{i}$ and $x_{j}$ are on the same side of the peak $x^{*}$ of $\succ$, one has $x_{i}\succ x_{j}$ if and only if $x_{i}$ is closer to the peak than $x_{j}$, that is, if $x^{*}>x_{i}>x_{j}$ or $x_{j}>x_{i}>x^{*}$. 
\end{definition}
%======DEFINITION======
We use the discrete version of this definition over $\mathbb{N}^{+}$.
We immediately obtain the following corollary,
%======================================
%==============COROLLARY============
\begin{restatable}{corollary}{corollarycontrol}\label{cor:controlincentives}
Assume a coalition game $\Gamma = (\mathcal{F}\cup\mathcal{W}, v, \{u_{i}\}_{i\in N})$ in characteristic form with the Nash bargaining sharing rule over the $v(C)$
%-simplex 
in every coalition $C\in \mathcal{C}$. Furthermore assume strictly increasing and concave utility functions $u_{i}:\mathbb{R}^{+}\rightarrow\mathbb{R}^{+}, i\in\mathcal{N}$. 
The set of transformations $\Omega$ from the set of characteristic functions in itself that induce single-peaked preferences (peak at $\hat{q}_{f}$) in cardinalities over the coalitions with an AP $f\in\mathcal{F}$ must satisfy:
	\begin{equation}\label{eq:fearofruinrequirement}
			%~~~~~~~~~~~EQUATION~~~~~~~~~~~
			\mathop{\max}\limits_{\substack{C\in\mathcal{C}_{f}\\ s.t. |C|=q}}F_{C}\circ\Omega(v)(C)
			<
			\mathop{\min}\limits_{\substack{C\in \mathcal{C}_{f}\\ s.t. |C|= q+1}}F_{C}\circ\Omega(v)(C),\quad \forall {q}\geq \hat{q}_{f}
			%~~~~~~~~~~~EQUATION~~~~~~~~~~~
	\end{equation}
	and
	\begin{equation}
			%~~~~~~~~~~~EQUATION~~~~~~~~~~~
			\mathop{\max}\limits_{\substack{C\in\mathcal{C}_{f}\\ s.t. |C|=q}}F_{C}\circ\Omega(v)(C)
			<
			\mathop{\min}\limits_{\substack{C\in \mathcal{C}_{f}\\ s.t. |C|= q-1}}F_{C}\circ\Omega(v)(C),\quad \forall {q}\leq \hat{q}_{f}
			%~~~~~~~~~~~EQUATION~~~~~~~~~~~
	\end{equation}
	where $F_{C} = \left(\sum\limits_{i\in C}\left(\frac{u_{i}^{'}}{u_{i}}\right)^{-1}\right)^{-1}$.
\end{restatable}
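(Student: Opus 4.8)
The plan is to derive the corollary directly from Proposition~\ref{prop:controlincentives} together with the elementary characterization of single-peakedness in terms of pairwise comparisons of adjacent cardinalities.

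First I would recall the ``engine'' underlying Proposition~\ref{prop:controlincentives}. Since in every coalition $C\in\mathcal{C}$ the sharing rule is a Nash bargaining and the $u_i$ are strictly increasing and concave, the common fear-of-ruin $\chi_C$ is the inverse of the Lagrange multiplier of the constraint $\sum_{i\in C}s_{i,C}\leq\Omega(v)(C)$, that multiplier being exactly $F_C\circ\Omega(v)(C)$; and by~\cite{Pyc2012} every member's payoff, hence utility, is strictly increasing in $\chi_C$. Consequently, for any two coalitions $C,C'$ sharing a player $i\in C\cap C'$, player $i$ strictly prefers $C'$ to $C$ if and only if $F_{C'}\circ\Omega(v)(C')<F_C\circ\Omega(v)(C)$. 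In particular, all coalitions in $\mathcal{C}_f$ contain $f$ and therefore pairwise overlap, so the overlap hypothesis $C'\cap C\neq\emptyset$ of Proposition~\ref{prop:controlincentives} is automatic within $\mathcal{C}_f$, and the preference of $f$ (and of any mobile user) restricted to the coalitions of $\mathcal{C}_f$ it belongs to is the linear order induced by the real numbers $F_C\circ\Omega(v)(C)$, smaller being better.

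Next I would make the target property precise. Ordering the attainable cardinalities $\{1,\dots,q_f+1\}$ naturally, the induced preference over cardinalities is single-peaked with peak $\hat q_f$ precisely when (i) every coalition of cardinality $\hat q_f$ is preferred to every coalition of any other cardinality and (ii) for every $q\geq\hat q_f$ each coalition of cardinality $q$ is preferred to each coalition of cardinality $q+1$, and symmetrically for every $q\leq\hat q_f$ each coalition of cardinality $q$ is preferred to each of cardinality $q-1$. Because the preference on $\mathcal{C}_f$ is a (tie-broken, hence strict) linear order, (i) follows from (ii) by chaining, so it is enough to impose (ii). Translating (ii) through the fear-of-ruin criterion above, ``every cardinality-$q$ coalition of $\mathcal{C}_f$ is strictly preferred to every cardinality-$(q+1)$ coalition'' is equivalent to $F_{C'}\circ\Omega(v)(C')<F_C\circ\Omega(v)(C)$ for all $C',C\in\mathcal{C}_f$ with $|C'|=q$, $|C|=q+1$, which, there being only finitely many such coalitions, is equivalent to $\max_{|C'|=q}F_{C'}\circ\Omega(v)(C')<\min_{|C|=q+1}F_C\circ\Omega(v)(C)$, i.e.\ the first displayed inequality~(\ref{eq:fearofruinrequirement}). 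The lower side gives, in the same way, the second displayed inequality. Running the equivalences in both directions shows that these two families of inequalities are necessary and sufficient for $\Omega$ to induce single-peaked preferences with peak $\hat q_f$ over $\mathcal{C}_f$.

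I do not expect a genuine obstacle: the statement is indeed ``immediate'' once Proposition~\ref{prop:controlincentives} is in hand. The only points demanding some care are (a) the reduction from ``single-peaked'' to the finite conjunction of adjacent-cardinality comparisons, which relies on transitivity of the induced order and on the fact that the tie-broken preferences are strict linear orders (so that a chain of strict improvements remains a strict improvement), and (b) the observation that the overlap condition of Proposition~\ref{prop:controlincentives} costs nothing inside $\mathcal{C}_f$ because $f$ lies in all those coalitions, which is exactly what lets a single scalar index $F_C\circ\Omega(v)(C)$ govern every comparison that enters the single-peakedness requirement.
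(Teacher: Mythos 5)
Your proof is correct and follows essentially the same route as the paper's: reduce single-peakedness to the finite family of adjacent-cardinality comparisons and then translate each comparison into the scalar fear-of-ruin inequality supplied by Proposition~\ref{prop:controlincentives}, with the max/min formulation handling the multiplicity of coalitions of a given size. The only difference is cosmetic — the paper passes explicitly through the utility and payoff vectors before invoking the proposition, whereas you work directly with the index $F_C\circ\Omega(v)(C)$ and additionally spell out two points the paper leaves implicit (that the overlap hypothesis is automatic within $\mathcal{C}_f$, and that the peak condition follows from the adjacent comparisons by transitivity).
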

%======================================
%==============PROPOSITION============
\begin{proof}
	See Appendix~\ref{appendix:proofs}.
\end{proof}

%==================================================================
%==================================================================
\subsection{Access Point Association }

The third step of the mechanism is the joint resource allocation and users association (matching) game where the players (APs and mobile users) share the resource in the coalitions according to a Nash bargaining and then match with each others. 
The coalition game played has been described in Section~\ref{sec:matching} and Section~\ref{sec:existencestable}.
This step corresponds to the blocks $\Phi$ and $\mu$ of the block diagram in Figure~\ref{fig:generalblockdiagram}.

\subsubsection{Stable Matching Mechanism}\label{sec:BDAA}

\begin{algorithm}[h]
{\footnotesize
\KwData{For each AP: The set of acceptable (covered) users and AP-user data rates.\\ For each user: The set of acceptable (covering) APs.}
\KwResult{A core stable structure $\mathcal{S}$}
\Begin{
		%find the subgraph $\mathcal{G}_{max} = (X_{max},U_{max})$ of maximum-weighted nodes\;
		\textit{Step 1: Initialization}\;
			\Indp
			\textbf{Step 1.a:} All APs and users are marked {\it unengaged}.  	$L(f)=L^*(f)=\emptyset$, $\forall f$\;
			\textbf{Step 1.b:} Every AP $f$ computes possible coalitions with its acceptable users, the respective users payoffs and emits its preference list $P^{\#}(f)$\;
			\textbf{Step 1.c:} Every AP $f$ transmits to its acceptable users the highest payoff they can achieve in coalitions involving $f$\;
			\textbf{Step 1.d:} Every user $w$ emits its reduced list of preference $P'(w)$\;				
				
		\Indm 	 
		\textit{Step 2 (BDAA)}\;
			\Indp
				\textbf{Step 2.a, Mobiles proposals:} According to $P'(w)$, every unengaged user $w$ proposes to its most preferred acceptable AP for which it has not yet proposed. If this AP was engaged in a coalition, all players of this coalition are marked {\it unengaged} \;
				\textbf{Step 2.b, Lists update:} Every AP $f$ updates its list with the set of its proposers: $L(f) \longleftarrow L(f)\cup\{\text{proposers}\}$ and $L^{*}(f) \longleftarrow L(f)$\;
				\textbf{Step 2.c, Counter-proposals:} Every AP $f$ computes the set of coalitions with users in the dynamic list $L^{*}(f)$ and counter-proposes to the users of their most preferred coalition according to $P^{\#}(f)$\;
				\textbf{Step 2.d, Acceptance/Rejections:} Based on these counter-proposals and the best achievable payoffs offered by APs in Step 1.c to which they have not yet proposed, users accept or reject the counter-proposals\;
				\Indp
				\textbf{Step 2.e:} If all users of the most preferred coalition accept the counter-proposal of an AP $f$, all these users and $f$ defect from their previous coalitions\; all players of these coalitions are marked {\it unengaged}\; users that have accepted the counter-proposal and $f$ are marked {\it engaged in this new coalition}\;
				\textbf{Step 2.f:} Every unengaged AP $f$ updates its dynamic list by removing users both having rejected the counter-proposal and being engaged to another AP: \\$L^{*}(f) \longleftarrow L^{*}(f)\backslash\{\text{engaged rejecters}\}$\;
				\Indm 
				\textbf{Step 2.g:} Go to Step 2.c while the dynamic list $L^{*}$ of at least one AP has been strictly decreased (in the sense of inclusion) in Step 2.f\;
				\textbf{Step 2.h:} Go to Step 2.a while there are unengaged users that can propose\;		
				\textbf{Step 2.i:} All players engaged in some coalition are matched.	
}
\caption{Backward Deferred Acceptance}
\label{algo:BDAA}
}
\end{algorithm}

We now show that a modified version of the Gale and Shapley's deferred acceptance algorithm in its college-admission form with APs preferences over groups of users and users preferences over individual APs is a stable matching mechanism for the many-to-one matching games with complementarities, peer effects considered in this paper  (see Algorithm 1: Backward Deferred Acceptance). 

%=========================================
%===============SUBSECTION===============
%\subsection{Algorithm}
BDAA is similar to the DAA in many aspects. It involves two sets of players that have to be matched. Every player from one side has a set of unacceptable players from the other side. In our case, an AP and a mobile user are acceptable to each others if the user is under the AP coverage. As in DAA, the algorithm proceeds by proposals and corresponding acceptances or rejections. The main difference resides in the notion of counter-proposals, introduced to tackle the problem of complementarities. 

%~~~~~~~~~~~~~~~~~~~~
		%FIGURE
%~~~~~~~~~~~~~~~~~~~~
\begin{figure*}[t]
	\centering
	\tikzstyle{block} = [draw, rectangle, 
	    minimum height=1cm, minimum width=2cm]
	\tikzstyle{sum} = [draw, circle, node distance=3cm]
	\tikzstyle{input} = [coordinate]
	\tikzstyle{output} = [coordinate]
	\tikzstyle{pinstyle} = [pin edge={to-,thin,black}]
	
	% The block diagram code is probably more verbose than necessary
	\begin{tikzpicture}[auto, node distance=2cm,>=latex']
	
	    % We start by placing the blocks
	    \node [input, name=input] {};
	    \node [block, right of=input] (preferencesfirms) {$\mathbf{P^{\#}}$};
	    \node [block, right of=preferencesfirms, node distance=3cm] (preferencesworkers) {$\mathbf{P^{'}}$}; 
	    \node [sum, node distance=1.5cm, right of=preferencesworkers] (sum) {};
	    \node [block, right of=sum, node distance=1.5cm] (proposals) {Proposals};
	    \node [sum, right of=proposals,node distance=1.5cm] (backcounterproposals) {};   
	    \node [block, right of=backcounterproposals,node distance=2.5cm] (counterproposals) {Counter-proposals};
	    	\coordinate [below of=counterproposals, node distance=1.2cm] (tmp);
		\coordinate [below of=tmp, node distance=0.7cm] (tmpbis);
		
	    \node [sum, node distance=2.5cm, right of=counterproposals] (sumend) {};
	    \node [sum, node distance=1cm, right of=sumend] (sumendbis) {};
	             
	    % We draw an edge between the controller and system block to 
	    % calculate the coordinate u. We need it to place the measurement block. 
	    \node [output, right of=sumend] (output) {};
	    %\node [below of=u] (measurements) {};
	
	    % Once the nodes are placed, connecting them is easy. 
	    \draw [draw,->] (input) -- (preferencesfirms);
	    \draw [draw,->] (preferencesfirms) -- (preferencesworkers);
	    \draw [->] (preferencesworkers) -- (sum);
	    \draw [->] (sum) --  (proposals);
	    \draw [->] (proposals) --  (backcounterproposals);
	    \draw [->] (backcounterproposals) -- (counterproposals);
	    \draw [->] (counterproposals) --  (sumend);
	    \draw [->] (sumend) -- (sumendbis);
	    \draw [->] (sumendbis) -- node [name=mu] {$\mu$}(output);
	    \draw [->] (sumend) |- node[pos = 0.9, above] {\footnotesize counter-proposing loop}  (tmp) -| (backcounterproposals);
	    
	    \draw [->] (sumendbis) |- node[pos = 0.9, above] {\footnotesize proposing loop} (tmpbis) -|  (sum);
	\end{tikzpicture}
	
	\caption{Block diagram of the BDAA.}
	\label{fig:blockdiagramBDAA}
\end{figure*}
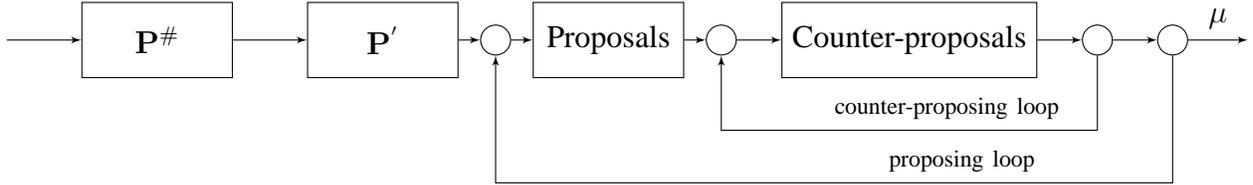
%%%~~~~~~~~~~~~~~~~~~~~
%%		%FIGURE
%~~~~~~~~~~~~~~~~~~~~

The block diagram representation of the algorithm is shown in Figure (\ref{fig:blockdiagramBDAA}).
In block $\mathbf{P^{\#}}$ the APs emit their preferences over the coalitions. In block $\mathbf{P^{'}}$ the mobiles emit their preferences over the APs. In block \textit{Proposals} the mobiles propose to the APs. In block \textit{counter-proposals} the APs counter-propose. 
The counter-proposing round continues up to convergence. The next proposing round starts.\\
We enter the details of the algorithm.
Having the information of the data rates with users under their coverage, APs are able to compute all the possible coalitions they can form and the corresponding allocation vectors (throughputs). They can thus build their preference lists (Steps 1b). Then, every AP $f$ transmits to each of its acceptable users  the maximum achievable throughput (based on MAC layer and virtual mechanism) it can achieve in the coalitions it can form with $f$ (Step 1.c). Every user $w$ can thus build its reduced list of preferences over individual APs: $w$ prefers $f_{i}$ to $f_{j}$ if the maximum achievable throughput with $f_{i}$ is strictly greater than its maximum achievable throughput with $f_{j}$ (Step 1.d). BDAA then proceeds by rounds during which users make proposals, AP make counter-proposals and users accept or reject (from Step 2.a to Step 2.h). Every AP that receives a new proposal shall reconsider the set of its opportunities and is thus marked unengaged (Step 2.a). $L(f)$ is the list of all users that have proposed at least once to AP $f$. $L^*(f)$ is a dynamic list that is reinitialized to $L(f)$ before every AP counter-proposal (Step 2.b). In each round of the algorithm, every unengaged user proposes to its most preferred AP for which it has not yet proposed (Step 2.a). Every AP receiving proposals adds the proposing players to its cumulated list of proposers and reinitializes its dynamic list (Step 2.b). Using $P^{\#}(f)$ it then searches for its most preferred coalition involving only users from the dynamic list and emits a counter-proposal to these users.
This counter-proposal contains the throughput every user can achieve in this coalition (Step 2.c). Each user compares the counter-proposals it just received with the best achievable payoffs obtained with the APs it has not proposed to yet (Step 2.d).
If one of these best achievable payoffs is strictly greater than the best counter-proposal, the users rejects the counter-proposals and continues proposing (Step 2.d, Step 2.h). Otherwise, the user accepts its most preferred counter-proposal (Step 2.d).
Given a counter-proposal, if all users accept it, then they are engaged to the AP. All coalitions in which these users and the AP were engaged are broken and their players are marked unengaged (Step 2.e). If at least one user does not agree, then the AP is unengaged (Step 2.e),  it updates its dynamic list by removing the mobiles having rejected its counter-proposal and being engaged to another AP (Step 2.f). The counter-proposals continues up to the point when no AP can emit any new counter-proposal (Step 2.g). The current round ends and the algorithm enters a new round (Step 2.h). The algorithm stops when no more users are rejected (Step 2.h). A stable matching is obtained (Step 2.i).

%======================================
%==============PROPOSITION============
\begin{restatable}{proposition}{convergenceBDAA}\label{prop:ConvBDAA}
Given a many-to-one matching game, BDAA converges, i.e., outputs a matching in a finite number of steps.
\end{restatable}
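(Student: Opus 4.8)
The plan is to exhibit two monotone, integer-valued potential functions — one governing the inner counter-proposing loop (Steps 2.c--2.g), one governing the outer proposing loop (Steps 2.a--2.h) — each bounded above by a quantity of order $WF$, and then to note that every elementary operation performed between two consecutive changes of these potentials acts on finite sets only. Termination of BDAA then follows by combining the two bounds.

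First I would treat the inner loop. Fix a moment just after Step 2.b, so that the dynamic lists $L^{*}(f)$ have been (re)initialised to $L(f)$, and consider $\Phi_{\mathrm{in}} = \sum_{f\in\mathcal{F}} |L^{*}(f)|$. This is a non-negative integer, bounded above by $\sum_{f\in\mathcal{F}}|L(f)| \le WF$. Within a counter-proposing phase, the lists $L^{*}(f)$ are modified only in Step 2.f, where elements are removed, so $\Phi_{\mathrm{in}}$ is non-increasing; and by the exit test of Step 2.g the block 2.c--2.f is repeated only if at least one $L^{*}(f)$ was \emph{strictly} decreased in the last Step 2.f, so every non-terminal inner iteration strictly decreases $\Phi_{\mathrm{in}}$. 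Hence the inner loop performs at most $WF$ iterations before exiting. The point that must be checked carefully here is that no new proposal is ever emitted inside Steps 2.c--2.g: proposals occur only in Step 2.a, so during a counter-proposing phase the lists $L(f)$ — and therefore the upper bound on $\Phi_{\mathrm{in}}$ — are frozen.

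Next I would treat the outer loop by the classical deferred-acceptance argument adapted to our setting. Let $\Phi_{\mathrm{out}}$ be the number of pairs $(w,f)\in\mathcal{W}\times\mathcal{F}$ such that $w$ has already proposed to $f$ at least once. By Step 2.a a user only proposes to an acceptable AP ``for which it has not yet proposed'', so $\Phi_{\mathrm{out}}$ never decreases, even though a user may be marked unengaged and re-enter the proposing stage many times (when an AP it was coalesced with is broken up in Step 2.a or Step 2.e); and $\Phi_{\mathrm{out}}$ is bounded by the number of acceptable pairs, hence by $WF$. Finally, by Step 2.h a new outer iteration is started only while some unengaged user can still propose, and such a user then emits a fresh proposal in Step 2.a, so $\Phi_{\mathrm{out}}$ strictly increases at every outer iteration. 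Therefore there are at most $WF$ outer iterations.

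Combining, BDAA performs at most $O(WF)$ outer iterations, each containing at most $O(WF)$ inner iterations, and in each iteration the operations involved — enumerating the coalitions of $\mathcal{C}_{f}$ restricted to the finite list $L^{*}(f)$, evaluating the Nash-bargaining shares via Theorem~\ref{the:solutionnashproduitutilities}, and comparing finitely many reals — are all finite. Hence the algorithm halts after a finite number of steps, and at Step 2.i it returns a family of pairwise-disjoint coalitions drawn from $\mathcal{C}$ that, by construction (each user either engaged to a single AP or matched to itself, each AP engaged to at most $q_{f}$ users by~\eqref{eq:coalitionC}, engagement being mutual), satisfies conditions (i)--(iii) of the many-to-one matching definition, i.e.\ is a matching. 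The main obstacle is precisely the interaction between the two loops: one must verify the invariant that the counter-proposing phase never triggers a proposal (so $\Phi_{\mathrm{in}}$ stays bounded by the frozen $L(f)$), and the invariant that repeatedly un-engaging and re-engaging users never lets $\Phi_{\mathrm{out}}$ decrease; once these two invariants are in place, finiteness is immediate from the potential arguments above.
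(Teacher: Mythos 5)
Your proof is correct and follows essentially the same route as the paper's: the paper likewise argues that the dynamic lists $L^{*}(f)$ are weakly decreasing under inclusion (via a case analysis on AP engagement status) so the counter-proposing loop terminates by the exit test of Step 2.g, and that the outer loop terminates because no user proposes twice to the same AP. Your version merely packages these two facts as explicit integer-valued potentials bounded by $WF$, which is a slightly more quantitative but otherwise identical argument.
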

 %======================================
%==============PROPOSITION============
\begin{proof}
	See Appendix~\ref{appendix:proofs}.
\end{proof}
%======================================
%==============PROPOSITION============
\begin{restatable}{proposition}{stabilityBDAA}\label{prop:StabBDAA}
	Suppose the family of coalitions $\mathcal{C}$ as defined in \eqref{eq:coalitionC}, and a sharing rule as defined in proposition \ref{prop:stability}. Furthermore assume a tie-breaking rule such that there is no indifference (strict preferences). 
	 BDAA converges to the unique core stable matching.
\end{restatable}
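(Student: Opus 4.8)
The plan is to first collapse the players' preferences into a single strict order on coalitions, then prove separately that (1) the core stable matching is unique and (2) that BDAA reconstructs precisely that matching; termination is then supplied by Proposition~\ref{prop:ConvBDAA}. The key simplification: since the sharing rule in every coalition $C$ is the Nash bargaining over $v(C)$ with strictly increasing, differentiable, strictly log-concave $u_i$, the fear of ruin $\chi_i(s_{i,C})=u_i(s_{i,C})/u_i'(s_{i,C})$ is constant across the members of $C$, equal to $F_C=\bigl(\sum_{i\in C}(u_i'/u_i)^{-1}\bigr)^{-1}$ (Section~\ref{subset:mechanism}, \cite{Pyc2012}); each $\chi_i$ is moreover strictly increasing, so at the bargaining solution $s_{i,C}=\chi_i^{-1}(F_C)$ depends on $C$ only through $F_C$ and does so monotonically. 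Hence every player, AP or user, strictly prefers $C$ to $C'$ iff $F_C>F_{C'}$, and, breaking ties by the tie-breaking rule, all players share one strict linear order $\gg$ on $\mathcal{C}$. In particular $w$'s reduced list $P'(w)$ ranks $g$ above $g'$ iff the $\gg$-largest coalition of $\mathcal{C}_g$ containing $w$ is $\gg$ the $\gg$-largest one of $\mathcal{C}_{g'}$ containing $w$.

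\emph{Step 1 (uniqueness).} Proposition~\ref{prop:stability} already gives a non-empty core for every $\boldsymbol{\theta}$. Alignment plus strictness identifies it with the top-coalition partition $\mu^{\star}$ built by letting $C^{(1)}$ be the $\gg$-maximal coalition of $\mathcal{C}$, removing its members, letting $C^{(2)}$ be the $\gg$-maximal coalition contained in the remaining players, and so on. Indeed, if a matching $\nu$ were blocked by a coalition $C$, then $F_C>F_{\nu(i)}$ for every $i\in C$; at the first index $k$ with $C\cap C^{(k)}\neq\emptyset$ the players of $C^{(1)},\dots,C^{(k-1)}$ are matched in $\nu$ exactly as in $\mu^{\star}$, so $C$ is available at step $k$, contradicting the $\gg$-maximality used to pick $C^{(k)}$. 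Applied with $\nu=\mu^{\star}$ this shows $\mu^{\star}$ is unblocked; applied to any $\nu\neq\mu^{\star}$ it produces a block; so $C(\mathbf{P})=\{\mu^{\star}\}$, and $C_W(\mathbf{P})$ coincides with it under strict preferences.

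\emph{Step 2 (BDAA outputs $\mu^{\star}$).} I would induct on the number of coalitions of $\mu^{\star}$, showing $C^{(1)}$ gets ``locked in'' and that afterwards BDAA is a run on the reduced instance $\mathcal{N}\setminus C^{(1)}$. Locking: every member of $C^{(1)}$ has $C^{(1)}$ as its $\gg$-best coalition, hence ranks the AP $f^{(1)}$ of $C^{(1)}$ first and proposes to $f^{(1)}$ in the first round, so after Step~2.b, $C^{(1)}\setminus\{f^{(1)}\}\subseteq L(f^{(1)})$; in Step~2.c, $f^{(1)}$ counter-proposes its $\gg$-best formable coalition, which is $C^{(1)}$ (it is formable from $L(f^{(1)})$ and $\gg$-maximal in $\mathcal{C}$); every member of $C^{(1)}$ accepts in Step~2.d (nothing, engaged or on any AP, is $\gg C^{(1)}$), so Step~2.e engages all of $C^{(1)}$. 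Should a later stray proposal transiently break $C^{(1)}$, the same three facts force $f^{(1)}$ to counter-propose $C^{(1)}$ again (the only-ever-growing list $L(f^{(1)})$ cannot, within the quota, yield anything $\gg C^{(1)}$) and its members to re-accept, so $C^{(1)}$ re-forms identically and lies in the terminal matching. Once $C^{(1)}$ is locked in, its members never propose again, and any counter-proposal some $g\neq f^{(1)}$ sends to a member of $C^{(1)}$ is rejected and merely prunes $g$'s dynamic list; hence the remaining dynamics are exactly BDAA on $\mathcal{N}\setminus C^{(1)}$, whose top coalition is $C^{(2)}$, and the induction closes. Individual rationality holds throughout (users accept only acceptable coalitions with strictly positive payoff, APs counter-propose only acceptable coalitions), so the terminal matching lies in the core.

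\emph{Conclusion and main obstacle.} Proposition~\ref{prop:ConvBDAA} gives termination, Step~2 gives that the output is $\mu^{\star}$, and Step~1 gives that $\mu^{\star}$ is the unique core stable matching, which is the claim. The delicate part is Step~2: one must verify rigorously that a locked-in coalition, though transiently dissolved whenever a late proposal reaches its AP, is reconstructed \emph{exactly} — this uses the global $\gg$-maximality of $C^{(1)}$ together with the quota constraint to exclude every competing coalition formable from $L(f^{(1)})$ — and that after lock-in the residual presence of $C^{(1)}$'s members in other APs' lists generates only rejected counter-proposals, so that the rest of the execution is step-for-step a run of BDAA on the reduced player set. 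A secondary check is that the composed order ``$F_C$, then tie-break'' is genuinely a common strict order on $\mathcal{C}$, so that ``every member of $C^{(1)}$ ranks $f^{(1)}$ first'' and ``$f^{(1)}$ counter-proposes precisely $C^{(1)}$'' hold simultaneously; this is exactly where strict preferences (from the tie-breaking rule) and the alignment inherited from \cite{Pyc2012} are both invoked.
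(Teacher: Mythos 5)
Your proposal is correct and follows essentially the same route as the paper: establish a common strict order on coalitions from the alignment induced by the Nash-bargaining sharing rule (constant fear of ruin $F_C$ across members), identify the unique core matching with the iterated-top-coalition partition, and then induct over BDAA rounds to show each top coalition is proposed, counter-proposed, accepted, and never subsequently blocked. The only cosmetic difference is that you re-derive the alignment and the ordered decomposition directly from the monotonicity of the fear of ruin, where the paper simply cites Lemma~3 and Proposition~5 of \cite{Pyc2012}; your explicit treatment of the transient dissolution and reconstruction of a locked-in coalition is a welcome sharpening of a step the paper states more tersely.
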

%======================================
%==============PROPOSITION============
\begin{proof}
	See Appendix~\ref{appendix:proofs}.
\end{proof}
%======================================
%==============PROPOSITION============
\begin{restatable}{proposition}{complexityBDAA}\label{prop:CompBDAA}
The complexity of BDAA is $O(n^5)$ in the number of proposals of the players, where $n=\max(F, W)$.
\end{restatable}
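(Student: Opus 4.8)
The plan is to count the two kinds of messages exchanged by BDAA --- a user proposal issued in Step~2.a and an AP counter-proposal issued in Step~2.c --- and to show that their total is $O(n^5)$. The algorithm has a three-level nested-loop structure: an outer \emph{proposing} loop (Steps~2.a--2.h), inside each proposing round a middle \emph{counter-proposing} loop (Steps~2.c--2.g), and inside each counter-proposing iteration a single pass in which every AP emits at most one counter-proposal to its currently most preferred feasible coalition. Counting one unit per user proposal and one unit per AP counter-proposal (a counter-proposal addressed to a whole coalition counting as one message), the total cost is bounded by $(\#\text{user proposals}) + (\#\text{proposing rounds})\cdot(\#\text{counter-proposing iterations per round})\cdot(\#\text{counter-proposals per iteration})$, and I would bound each factor in turn.

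For the outer loop: by construction of Step~2.a a user only proposes to an acceptable AP \emph{for which it has not yet proposed}, and this record is never erased, even when the user is marked unengaged after a coalition is dismantled in Steps~2.a or~2.e. Hence each of the $W$ users issues at most $F$ proposals during the whole run, so $\#\text{user proposals}\le WF = O(n^2)$. Since the algorithm re-enters Step~2.a (i.e. starts a new proposing round) only while some unengaged user can still propose, and such a user does propose, every proposing round contains at least one new user proposal; therefore $\#\text{proposing rounds}\le WF = O(n^2)$.

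For the middle loop within one fixed proposing round, I would use the potential $\Phi = \sum_{f\in\mathcal{F}}|L^{*}(f)|$. Right after Step~2.b (which is executed once per round, at its start) we have $L^{*}(f)=L(f)$ and $|L(f)|\le W$, so $\Phi\le WF$. No instruction between Steps~2.c and~2.g \emph{adds} a user to any $L^{*}(f)$: Step~2.b --- the only step that grows or resets these lists --- lies outside the counter-proposing loop, Step~2.e only changes engagement flags, and Step~2.f only deletes users from $L^{*}$. Hence $\Phi$ is non-increasing inside the loop, and the loop guard in Step~2.g forces at least one $L^{*}(f)$, hence $\Phi$, to drop by at least one per iteration; so the round contains at most $WF = O(n^2)$ counter-proposing iterations. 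In each iteration each of the $F$ APs emits at most one counter-proposal, giving at most $F = O(n)$ counter-proposals per iteration. Multiplying the three factors, $\#\text{counter-proposals}\le WF\cdot WF\cdot F = O(W^{2}F^{3}) = O(n^{5})$, which dominates the $O(n^2)$ user proposals, so BDAA exchanges $O(n^5)$ proposals in total.

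The main obstacle is the argument of the previous paragraph: one must check carefully that $\Phi$ is a genuine strictly-decreasing potential for the counter-proposing loop --- in particular that nothing re-inserts users into any dynamic list during Steps~2.c--2.g and that the guard of Step~2.g really certifies a strict decrease of $\Phi$ --- since this is precisely what makes the inner loop polynomially (not merely finitely) bounded; mere termination of BDAA is already Proposition~\ref{prop:ConvBDAA}. Two points are worth stating explicitly for completeness: first, the convention that an AP's counter-proposal to a coalition is one message (counting per addressed user would multiply the bound by at most $\max_{f}q_{f}\le W$, changing the exponent but not polynomiality); second, the one-time set-up costs --- each AP enumerating its candidate coalitions and building $P^{\#}(f)$, each user building $P'(w)$ --- are not counted in ``the number of proposals'' and are bounded separately.
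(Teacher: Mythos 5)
Your proof is correct and follows essentially the same decomposition as the paper's: at most $WF$ user proposals bounding the outer rounds, at most $WF$ counter-proposing iterations per round, and at most $F$ counter-proposals per iteration, giving $W^2F^3=O(n^5)$. The only difference is that you make the inner-loop bound rigorous via the potential $\Phi=\sum_f|L^*(f)|$, which the paper's terser argument leaves implicit.
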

%======================================
%==============PROPOSITION============
\begin{proof}
	See Appendix~\ref{appendix:proofs}.
\end{proof}

In Appendix~\ref{app:BDAA-interpretations}, we provide an interpretation of BDAA in the economic framework. In Appendix~\ref{appendix:Example}, we give an example of application of the BDAA.

%=========================================
%===============SECTION==================
\section{Numerical Results} \label{sec:wlan}

\subsection{Simulations Parameters and Scenarios}

The numerical computations are performed under the assumption of equal packet sizes and saturated queues (each node has always packets to transmit). Under this assumption the sharing rule is equal sharing. Analytical expressions of the throughputs (individual and total throughputs) are taken from~\cite{Kum2007} with the parameters of Table \ref{table:simulationparameters}. 
\begin{figure}[b]
\begin{center}
  \subfigure[Scenario 1]{\epsfig{figure=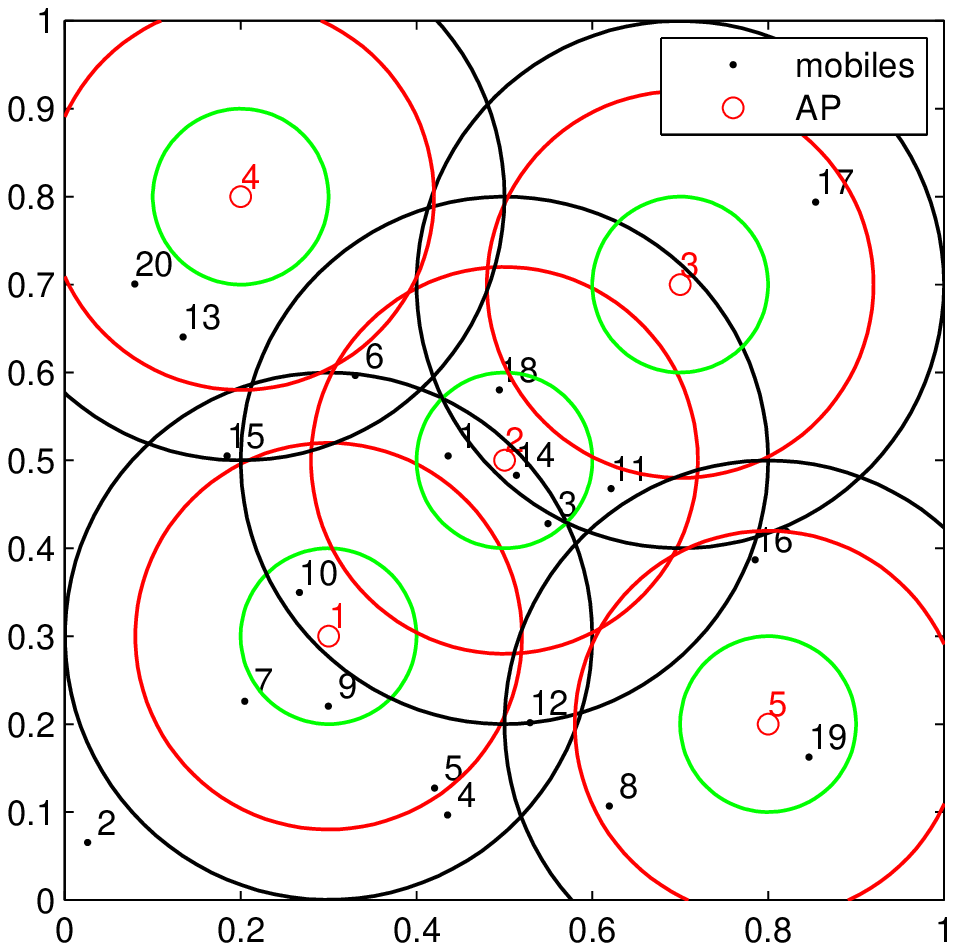,width=7cm}}
  \subfigure[Scenario 2]{\epsfig{figure=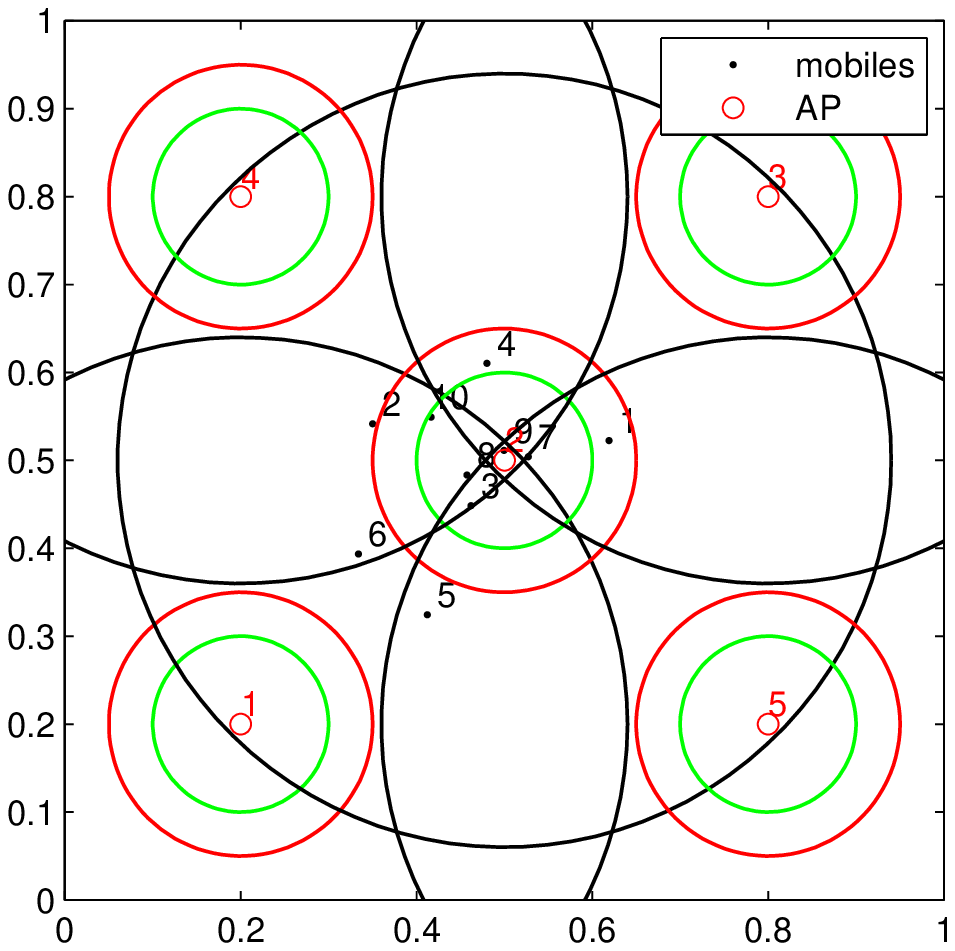,width=7cm}}
\end{center}
 \caption{Scenario 1 (left): A spatial distribution of  APs (smallest red circles) $\mathcal{F} = \{f_{1},\ldots,f_{5}\}$ and devices (black points) $\mathcal{W} = \{w_{1},\ldots,w_{20}\}$. Scenario 2 (right): A spatial distribution of  APs (smallest red circles) $\mathcal{F} = \{f_{1},\ldots,f_{5}\}$ and devices (black points) $\mathcal{W} = \{w_{1},\ldots,w_{10}\}$. Circles show the coverage areas corresponding to different data rates.} 
 \label{fig:scenarios}
\end{figure}
We further assume that a node compliant with a IEEE 802.11 standard (in chronological order: b, g, n) is compliant with earliest ones. By convention, if all nodes of a cell have the same data rate, we use the MAC parameters of the standard whose maximum physical data rate is the common data rate. Otherwise, we use the MAC parameters of the standard whose maximum physical data rate is the lowest data rate in the cell.

Assume the spatial distributions of nodes of Figure \ref{fig:scenarios}. The first scenario (a) shows the case of $5$ APs with a uniform spatial distribution of 20 mobile users. The second scenario (right) has non-uniform distribution of 10 mobile users in the plane. The green (inner), red (intermediate) and black (outer) circles show the spatial region where the mobiles achieve a data rate of 300~Mbits/s, 54~Mbits/s and 11~Mbits/s respectively. Scenario 2 exhibits a high overlap between AP coverages. 
\begin{table}[b]
\vspace{0.3cm}
\footnotesize
\begin{center}
\begin{tabular}{|c|c|c|c|c|c|c|}
%\cline{2-4}
%\multicolumn{1}{c|}{} &\multicolumn{3}{c|}{Standard} &\multicolumn{1}{c}{} \\
\cline{2-4}
\multicolumn{1}{c|}{} &802.11n& 802.11g& 802.11b&\multicolumn{1}{c}{} \\
\hline
Parameter&\multicolumn{3}{c|}{value} &unit\\
\hline
$\Theta$& \{300, 54, 11\}& \{54, 11\}& \{11\}& Mbits/s\\
\hline
slot duration&9&9&20&$\mu$s\\
\hline
$T_0$&3&5&50&slots\\
\hline
$T_C$&2&10&20&slots\\
\hline
$L$&8192&8192&8192&bits\\
\hline
$K$&2&2&2&\multicolumn{1}{|>{\columncolor{kugray5}}c|}{}\\
\cline{1-4}
$b_0$&16&16&16&\multicolumn{1}{|>{\columncolor{kugray5}}c|}{}\\
\cline{1-4}
$p$&2&2&2&\multicolumn{1}{|>{\columncolor{kugray5}}c|}{}\\
\hline
\end{tabular}
\end{center}
\caption{Simulation Parameters.}
\label{table:simulationparameters}
\end{table}%

\subsection{Numerical Work}

\subsubsection{No mechanism}
We show in Figure~\ref{fig:nocost} a stable matching.
No associated player has an incentive to deviate and form a coalition of size superior to two. The figure shows the natural incentives of the system in forming low cardinalities coalitions with good compositions. 
\begin{figure}[h]
	\centering
	\includegraphics[width = 7cm]{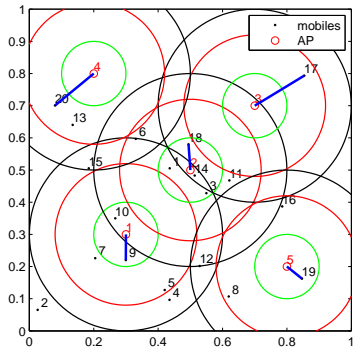}
	 \caption{A stable matching in the uncontrolled case.}
	 \label{fig:nocost}
\end{figure}
This can also be observed on Figure~\ref{fig:throughputs} which shows the individual throughputs obtained in the coalitions.
The coalitions are sorted by cardinalities from low to high.
In plot~(a) no mechanism is used. In plot~(b) a gaussian tax rate is applied. See Section~\ref{subsec:GaussianTaxRate}.

\begin{figure}[t]
\begin{center}
  \subfigure[Individual throuputs vs. coalition index.]{\epsfig{figure=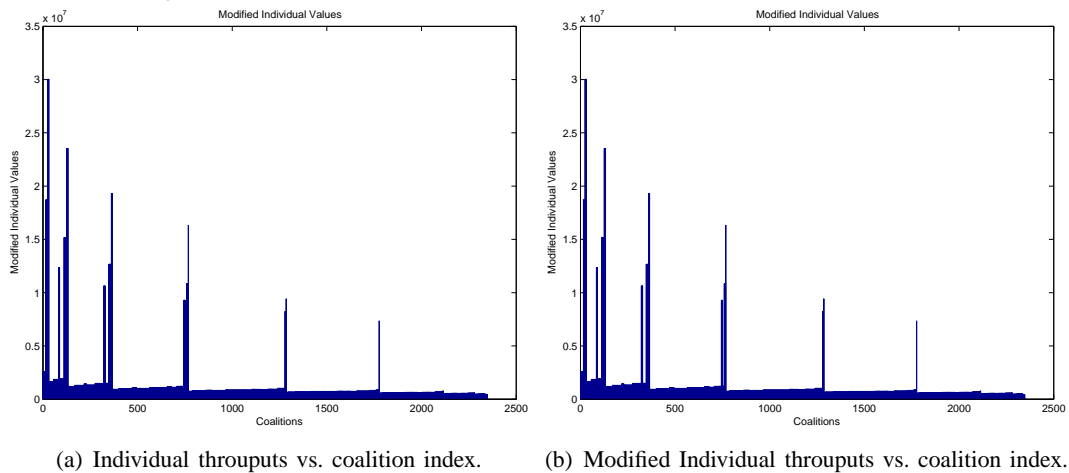,width=7cm}}
  \subfigure[Modified Individual throuputs vs. coalition index.]{\epsfig{figure=modified_individual_values.eps,width=7cm}}\\
\end{center}
 \caption{(a)Scenario 1. Structure of the payoffs in the uncontrolled matching game. (b)Scenario 1. Structure of the payoffs in the controlled matching game with a multiplicative tax rate of variance $\sigma_{f} = 0.3, \forall f\in\mathcal{F}$.}
 \label{fig:throughputs}
\end{figure}
 Figure~\ref{fig:nocost} and Figure~\ref{fig:throughputs}~(a)  show the natural incentives of the system in forming low cardinalities coalitions with good compositions. As a result, a one-to-one matching is obtained. Using our mechanism, this structure of throughputs will be changed (as in Figure \ref{fig:throughputs}~(b)) to move the incentives according to $\mathbf{\hat{q}}$ and thus provide the players the incentives to associate according to a many-to-one matching rather than a one-to-one.

\subsubsection{Gaussian Tax Rate in Cardinalities}\label{subsec:GaussianTaxRate}
As an example of family of cost functions, we can use multiplicative symmetric unimodal cost functions. 
The multiplicative cost functions are commonly called tax rates and are defined such that
for any AP $f\in\mathcal{F}$ and any coalition $C$ containing $f$, we must have:
\begin{equation}
	\tilde{v}(C) = \mathbf{\Omega}(v(C))\triangleq c_f(|C|)v(C)
\end{equation}
We particularly consider Gaussian tax rates such that:
\begin{equation}
	\tilde{v}(C) =   e^{-\frac{(|C|-\hat{q}_{f})^{2}}{2\sigma_{f}^{2}}}v(C)
\end{equation}
where $\sigma_{f}$ is the variance of the function $c_f$.
The Gaussian cost function is convenient in the sense that it does not penalize the mean-sized coalitions and it provides a great amount of flexibility by the way of its variance. Decreasing or increasing the variance $\sigma_{f}$ indeed allows for a strict or relaxed control of the incentives for the objective quotas.
%We will further observe that the sharpening around the mean allows the costs applicants to control the strict satisfaction or relaxation of condition \eqref{eq:fearofruinrequirement}.
%~~~~~~~~~~~~~~~~~~~~
		%FIGURE
%~~~~~~~~~~~~~~~~~~~~
\begin{figure}[t]
\begin{center}
  \subfigure[Stable matching resulting from Gaussian costs and BDAA.]{\epsfig{figure=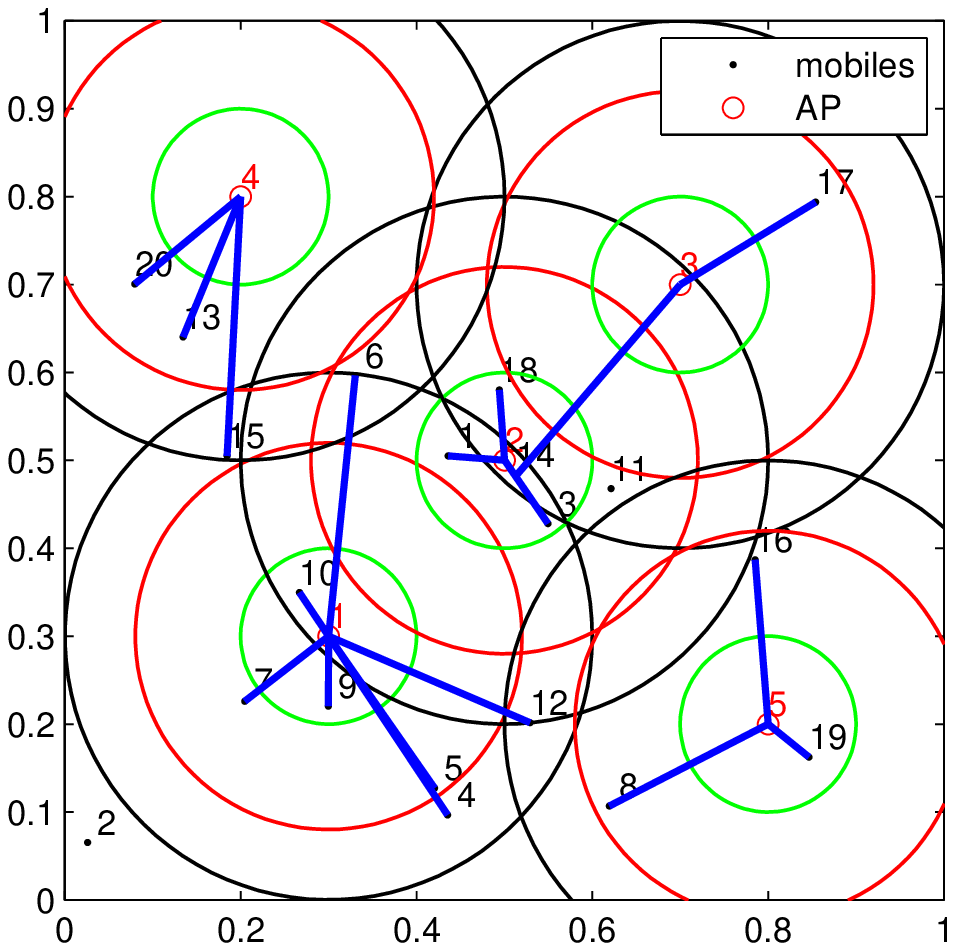,width=5cm}}
  \subfigure[A global optimum association with Gaussian costs.]{\epsfig{figure=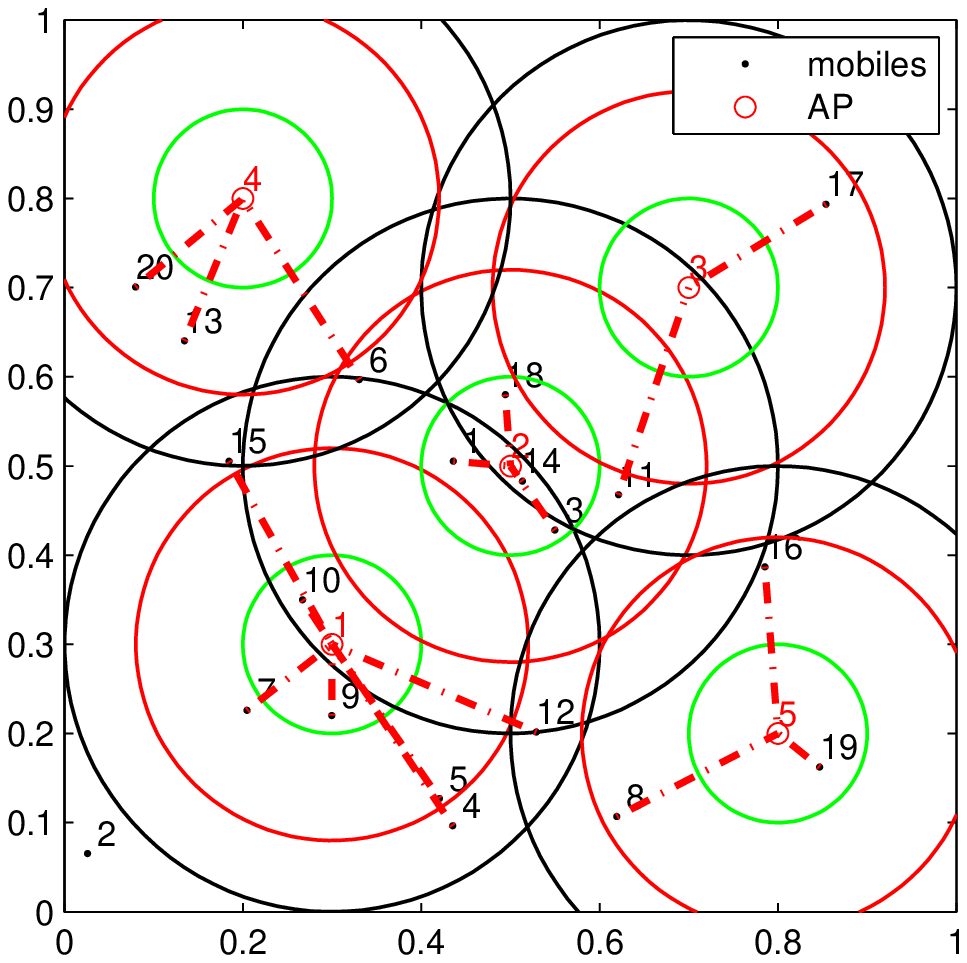,width=5cm}}
  \subfigure[A global optimum association without costs.]{\epsfig{figure=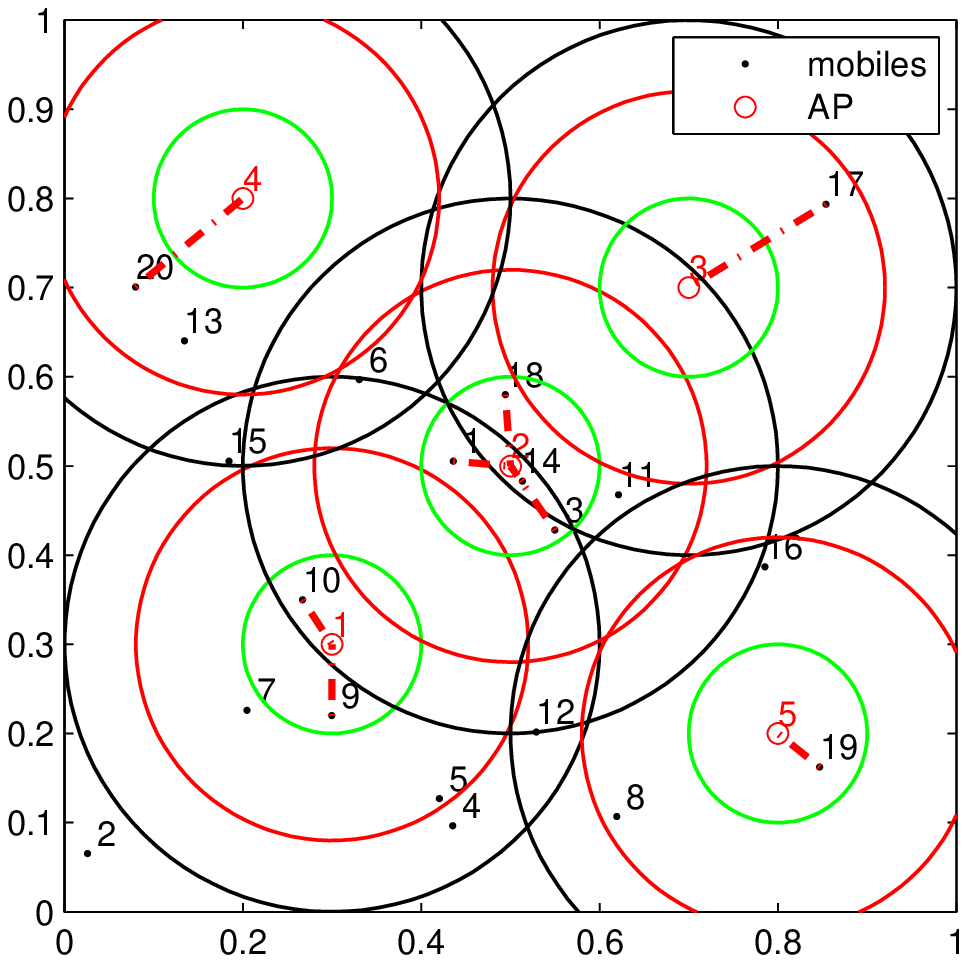,width=5cm}}
   %\subfigure[Scenario 1. Best RSSI.]{\epsfig{figure=images/best_RSSI_5AP_20M.eps,width=8cm}}
\end{center}
 \caption{Controlled matching game in scenario 1. Comparison of the association obtained from (a) BDAA, (b) a global optimum for Gaussian costs with variance $\sigma=0.2$, (c) a global optimum without costs.}
 \label{fig:BDAAvsothers}
\end{figure}
%~~~~~~~~~~~~~~~~~~~~
		%FIGURE
%~~~~~~~~~~~~~~~~~~~~

Focusing on the first scenario (Figure~\ref{fig:scenarios}~(a)), we consider the three matchings shown in Figure~\ref{fig:BDAAvsothers}. The first one (a) is the stable matching resulting from the mechanism (including BDAA and Gaussian costs); The second matching (b) maximizes the sum of modified 
throughputs (i.e. including Gaussian costs); The third matching (c) maximizes the sum of unmodified throughputs (i.e. without costs).

We first observe that the proposed mechanism induces a stable matching with a drastic reduction of the unemployment problem w.r.t. the result of Figure~\ref{fig:nocost}. The natural incentives of the system resulting in a one-to-one matching have been countered and a many-to-one matching is obtained. 
The unemployment has been reduced from $73\%$ to $5\%$ in this particular scenario. 
The second point to be raised is that the proposed mechanism allows to obtain (with a polynomial complexity) a stable matching with a high modified total throughput, close to the optimal modified total throughput that is however not stable. 
For this scenario, we achieve through our mechanism  $99\%$ of the total modified maximum throughput (see Figure~\ref{fig:BDAAvsothers} (b)). 
This means that the cost for stability is very small in this particular scenario.
Furthermore, the total throughput performance of the system at the MAC layer (i.e. unmodified throughputs obtained in block MAC of the block diagram representation of the mechanism, see Figure~\ref{fig:generalblockdiagram}) is $97\%$ the total unmodified maximum throughput (see Figure~\ref{fig:BDAAvsothers} (b)) and
$47\%$ of the total maximum throughput of the uncontrolled system (see Figure~\ref{fig:BDAAvsothers} (c)). 
This quantifies the cost for control, stability and low unemployment in this scenario. 
The third point is that the quotas have been enforced by the mechanism (via the cost function) since the quotas vector from the load balancing is $\mathbf{\hat{q}} = (8.0, 4.5, 3.33, 3.83, 4.33)$ (obtained by Nash bargaining\footnote{Achieves a proportional fair allocation in the utility space of the APs. Induces the number of players to be connected to each AP.} over the share $[0,1]$ of the players at the intersection of the coverages of the APs) and the formed coalitions are of sizes 8, 4, 3, 4 and 4.\\ 

We go into more details on the difference between the quotas vector and the integer-sized coalitions in the stable matching. 
%Another remark has to be formulated about the choice of the cost function. The load sharing and thus the negotiated quotas may result in non-integer values. In the case of cost functions valued one at the quota strictly inferior to one elsewhere, it appears that all coalitions are penalized since the only non-penalized cardinality is non-integer which is (by definition) not achievable. Thus, the total throughput of the coalitions and the individual throughputs are reduced. As a consequence, in case of a virtual mechanism parametrized by the MAC layer quantities, the virtual throughputs will always be inferior to the throughputs achieved at the MAC layer. This difference can easily be solved by the use of cost functions valued one over intervals or at integer cardinalities. In our gaussian case, increasing the variance reduce this effect but flattening the function at its top over an interval totally solves the drawback.
Focusing on AP3 with the quota $q_{3}=3.33$, one may observe that in case of a Gaussian cost function with unit variance, the condition for an integer quotas $3$ is only satisfied for sizes of coalitions superior or equal to $4$. This meaning that the use of a gaussian cost function centered on $3.33$ and unit variance even though increasing the penalty with the distance in sizes to $\hat{q}_{f}$ cannot guarantee the systematic incentive to form coalitions of size 3 with AP3. There exists some coalitions of size $2$ giving the players more individual throughputs than the worst coalition of size $3$. In such  case, the players will have the incentive to form the coalition with the highest individual value among those of cardinalities $2$ and $3$.
%~~~~~~~~~~~~~~~~~~~~
		%FIGURE
%~~~~~~~~~~~~~~~~~~~~
\begin{figure}[t]
\begin{center}
  \subfigure[Unemployment rates]{\epsfig{figure=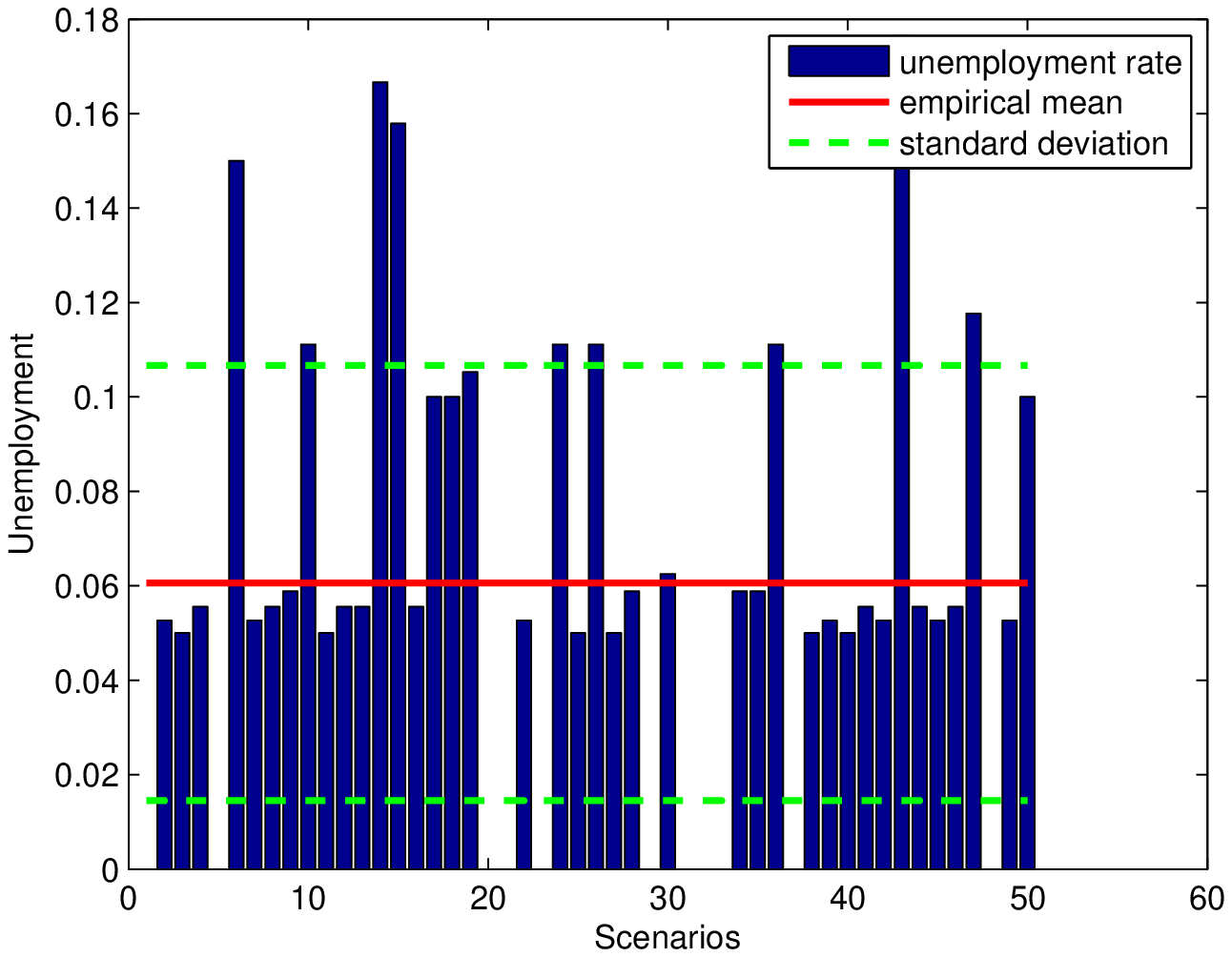,width=5cm}}
  \subfigure[Modified social welfares]{\epsfig{figure=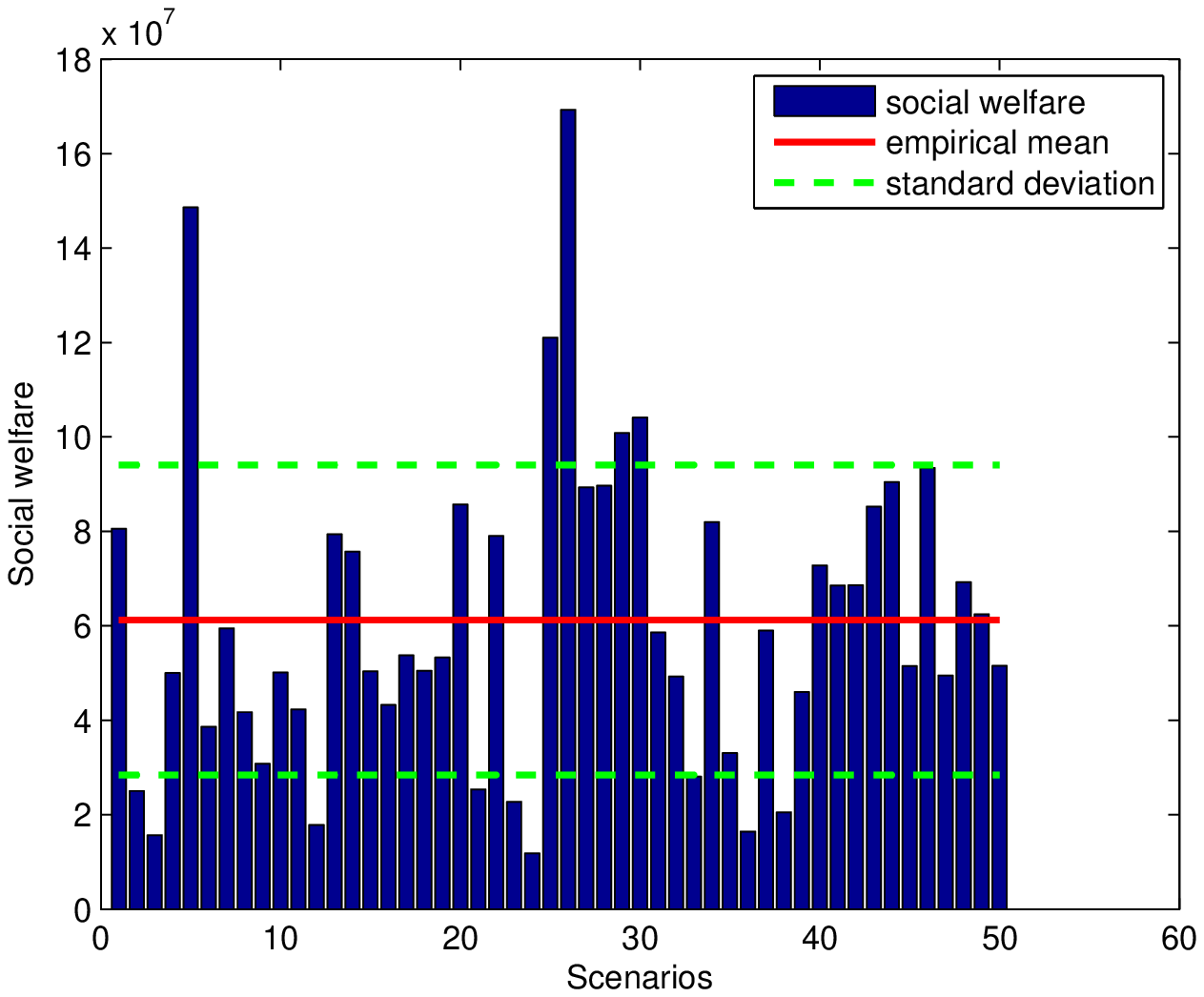,width=5cm}}
  \subfigure[Computation time of BDAA]{\epsfig{figure=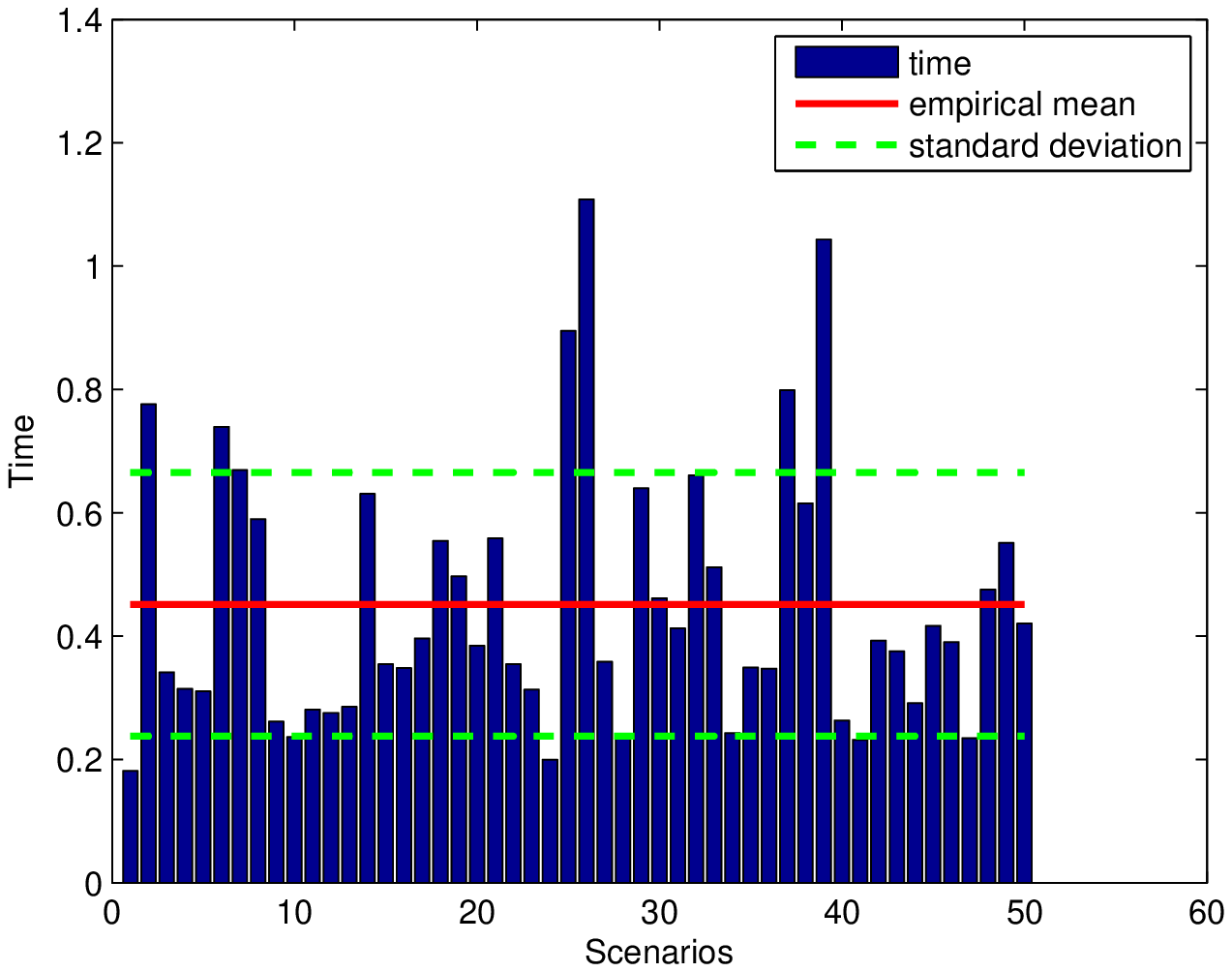,width=5cm}}
   %\subfigure[Scenario 1. Best RSSI.]{\epsfig{figure=images/best_RSSI_5AP_20M.eps,width=8cm}}
\end{center}
 \caption{(a)Unemployment rates, (b)social welfares (the social welfare of a matching is measured as the total throughput of the system at equilibrium) and (c)computation times of BDAA over a sample of 50 scenarios obtained by spatial random uniform distribution of the mobile devices. APs are spatially distributed as in Scenario 1. For each plot, the red line gives the empirical mean $\hat{m}$ of the sample and the green dotted lines the interval $[\hat{m}-\sigma, \hat{m}+\sigma]$ where $\hat{m}$ is the empirical mean of sample and $\sigma$ is the standard deviation.}
 \label{fig:AveragesRandomPositionsMobiles}
\end{figure}
%~~~~~~~~~~~~~~~~~~~~
		%FIGURE
%~~~~~~~~~~~~~~~~~~~~
%~~~~~~~~~~~~~~~~~~~~
		%FIGURE
%~~~~~~~~~~~~~~~~~~~~
\begin{figure}[h]
\begin{center}
  \subfigure[Unemployment rates]{\epsfig{figure=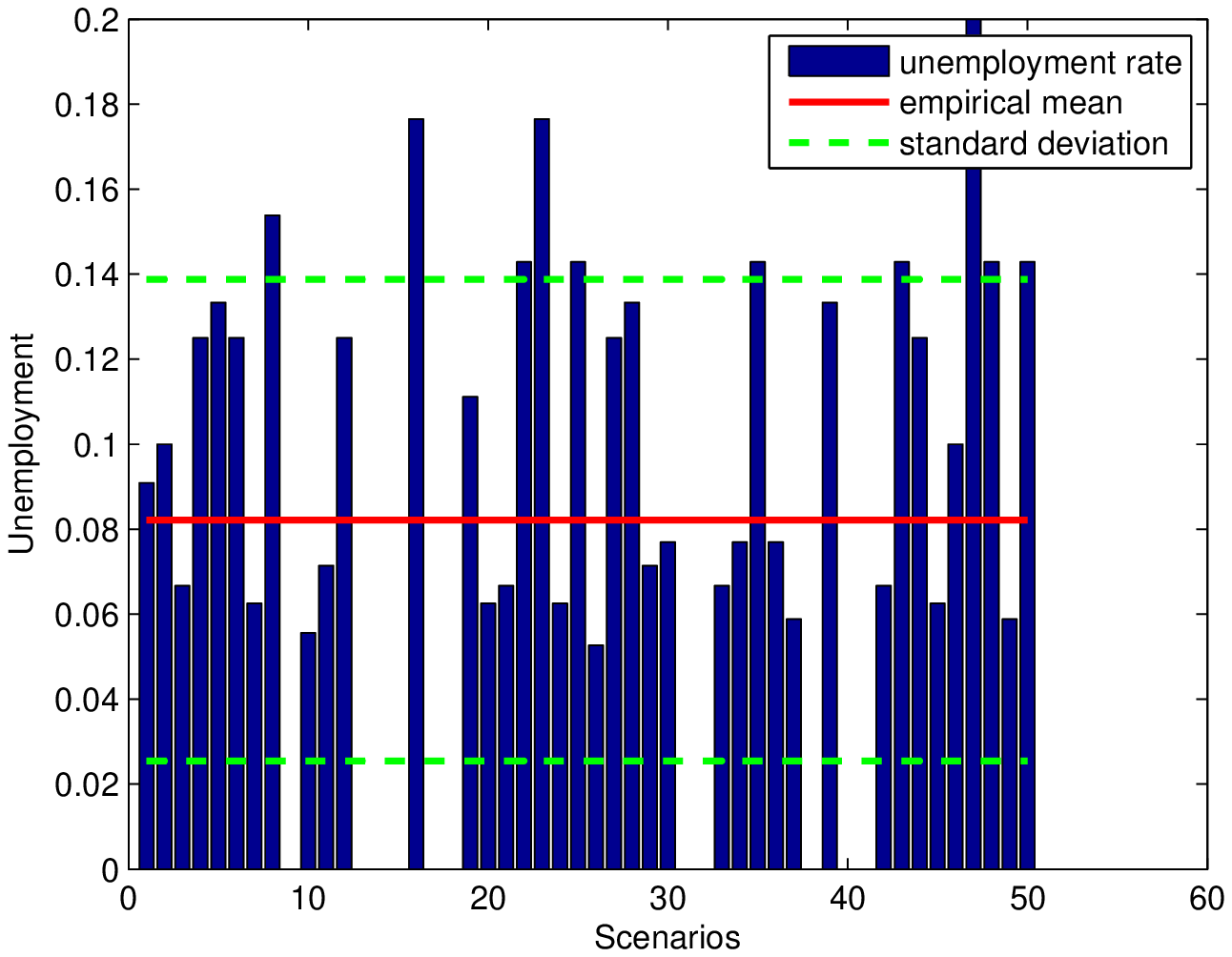,width=5cm}}
  \subfigure[Modified social welfares]{\epsfig{figure=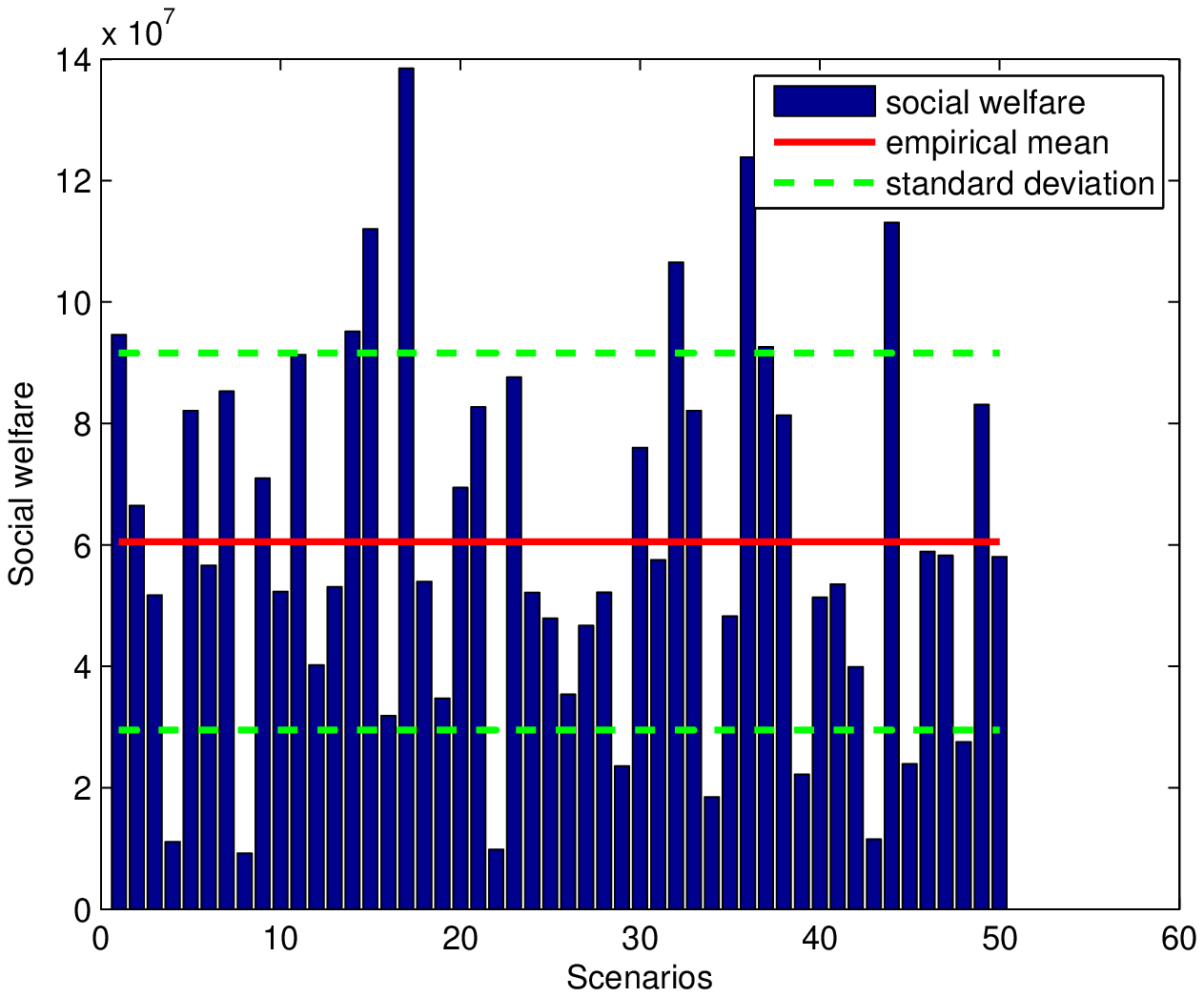,width=5cm}}
  \subfigure[Computation times of BDAA]{\epsfig{figure=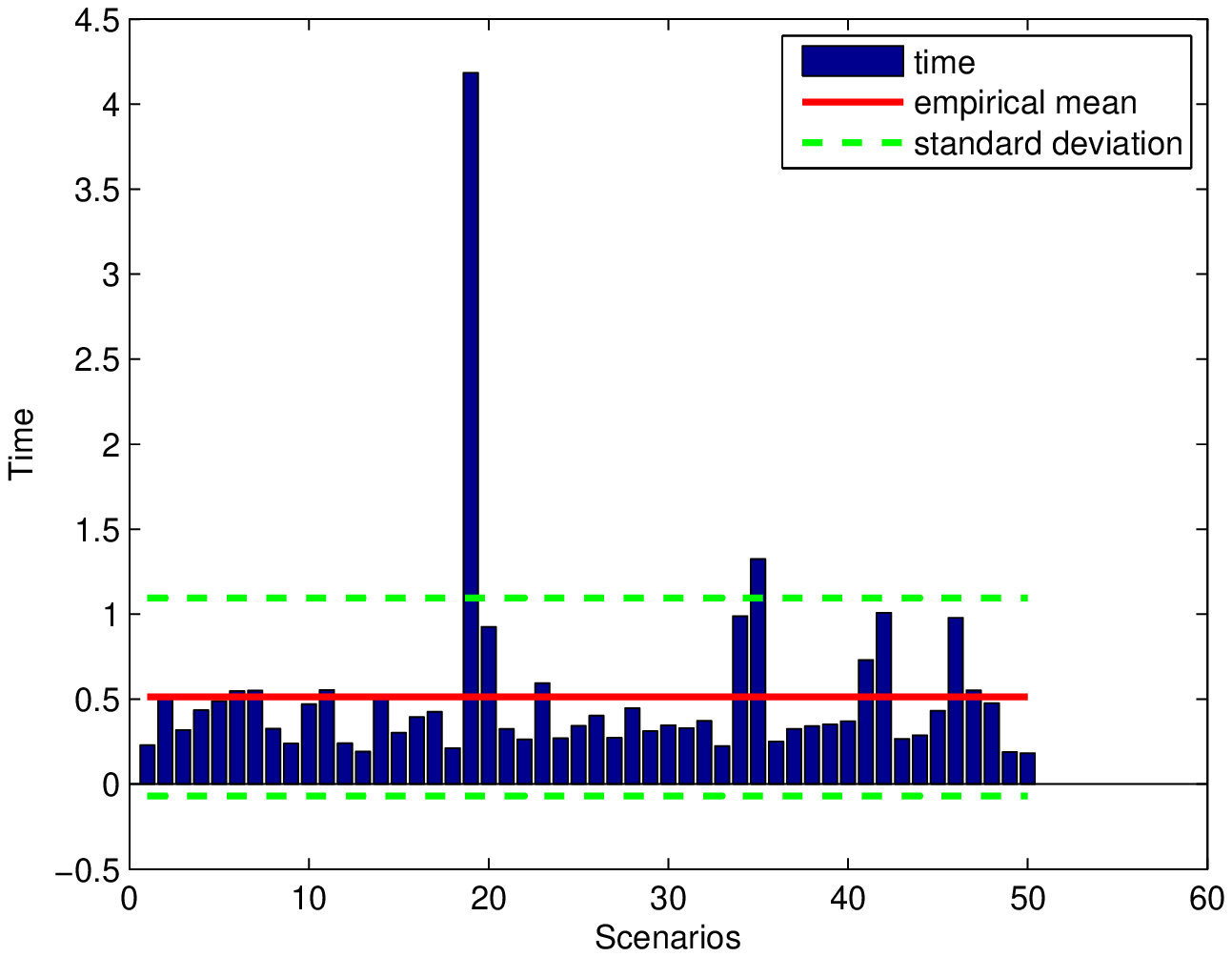,width=5cm}}
   %\subfigure[Scenario 1. Best RSSI.]{\epsfig{figure=images/best_RSSI_5AP_20M.eps,width=8cm}}
\end{center}
 \caption{(a)Unemployment rates, (b)social welfares and (c)computation times of BDAA over a sample of 50 random networks obtained by spatial random uniform distribution of the mobile devices and APs. For each plot, the red line gives the empirical mean $\hat{m}$ of the sample and the green dotted lines the interval $[\hat{m}-\sigma, \hat{m}+\sigma]$ where $\hat{m}$ is the empirical mean of sample and $\sigma$ is the standard deviation.}
 \label{fig:AveragesRandomNetworks}
\end{figure}
%~~~~~~~~~~~~~~~~~~~~
		%FIGURE
%~~~~~~~~~~~~~~~~~~~~
In Figure \ref{fig:AveragesRandomPositionsMobiles}, we show the performance of the mechanism over a set of 50 scenarios generated by spatial random uniform distribution of the mobile devices. The APs are spatially distributed as in Scenario 1 (see Figure \ref{fig:AveragesRandomNetworks} for a random distribution of both the mobile devices and APs). The red line shows the empirical mean of the sample and the green dotted lines show the interval  $[\hat{m}-\sigma, \hat{m}+\sigma]$ where $\hat{m}$ is the empirical mean of sample and $\sigma$ is the standard deviation. The empirical mean of the unemployment rate is 6\%, the mean modified social welfare is 61Mbits/s and the mean computation time of BDAA is 0.45s. Observe that in 22\% of the realizations the unemployment is null and that in 70\% of the realizations it is below the mean. 
In terms of computation times of BDAA, the mean performance is reasonably low (0.45s to match 20 mobiles to 5 APs) and the algorithm performs even better in 62\% of the scenarios.

In Figure \ref{fig:AveragesRandomNetworks}, we show the performance of the mechanism over a set of 50 scenarios generated by spatial random uniform distribution of the mobile devices and APs. The red lines show the empirical mean of the sample and the green dotted lines show the interval  $[\hat{m}-\sigma, \hat{m}+\sigma]$ where $\hat{m}$ is the empirical mean of sample and $\sigma$ is the standard deviation. The empirical mean of the unemployment rate is 8\%, the mean modified social welfare is 60Mbits/s and the mean computation time of BDAA is 0.51s. Observe that in 22\% of the realizations the unemployment is null and that in 56\% of the realizations it is below the mean. In terms of computation times of BDAA, the mean is higher than in the previous case but the algorithm performs better than the mean in 68\% of the scenarios.

%~~~~~~~~~~~~~~~~~~~~
		%FIGURE
%~~~~~~~~~~~~~~~~~~~~
\begin{figure}[h]
\begin{center}
  \subfigure[Ratios of modified performances]{\epsfig{figure=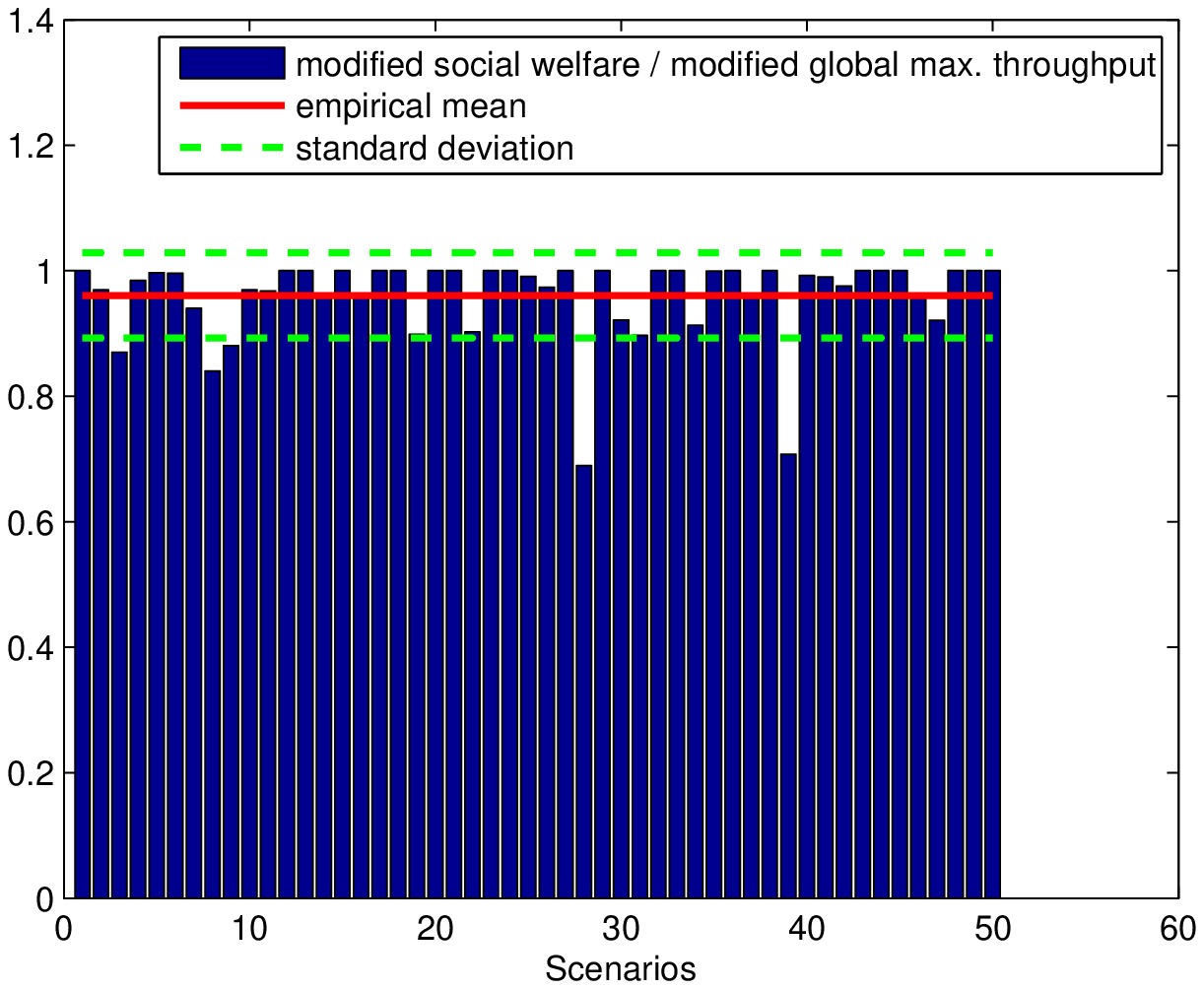,width=7cm, height=5cm}}
  \subfigure[Ratios of unmodified performances]{\epsfig{figure=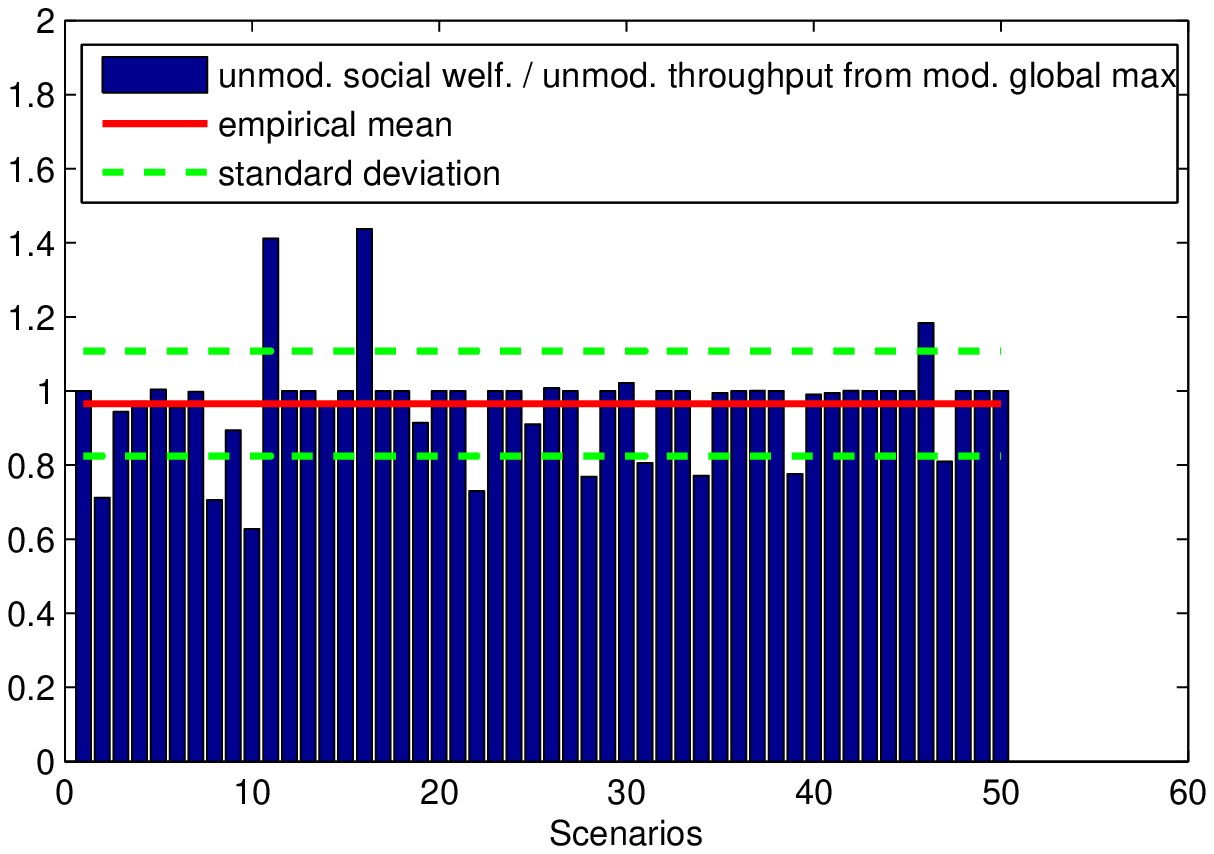,width=7cm, height=5cm}}
\end{center}
 \caption{(a)Ratios of the modified social welfares to the maximum modified (mechanism level) total throughput, (b)Ratios of the unmodified (MAC level) social welfares to the unmodified total throughputs corresponding to the matching with maximum modified total throughput. Sample of 50 random networks obtained by spatial random uniform distribution of the mobile devices and APs. For each plot, the red line gives the empirical mean $\hat{m}$ of the sample and the green dotted lines the interval $[\hat{m}-\sigma, \hat{m}+\sigma]$ where $\hat{m}$ is the empirical mean of sample and $\sigma$ is the standard deviation.}
 \label{fig:AverageRatiosRandomNetworks}
\end{figure}
%~~~~~~~~~~~~~~~~~~~~
		%FIGURE
%~~~~~~~~~~~~~~~~~~~~
In Figure \ref{fig:AverageRatiosRandomNetworks}, Plot (a), we show the ratios of the modified (mechanism level) social welfare (by definition, the total throughput resulting from BDAA) to the maximum total modified throughput. 
The mean performance of BDAA achieves 96\% of this maximum.  Furthermore, observe that the global maximum is achieved by BDAA in 46\% of the random networks. Finally, the ratio is below $\hat{m}-\sigma$ in only 10\% of the cases.
In Figure \ref{fig:AverageRatiosRandomNetworks}, Plot (b), we show the ratios of the unmodified (MAC level) social welfare to the unmodified total throughput induced at the matching maximizing the total modified throughput. 
The mean performance of BDAA achieves 97\% of this unmodified total throughput. 
%Furthermore, observe that this total throughput is achieved by BDAA in 50\% of the random networks. The ratio is below $\hat{m}-\sigma$ in 18\% of the cases. 
Finally, observe that in some cases, the ratio is even higher than one. This means that BDAA gives a total throughput at the MAC level that is superior to the (unmodified) total throughput resulting from the maximization at the mechanism level (modified values) while the ratio was inferior to one in the modified case.
This may come from the fact that in some cases, the equilibrium point (stable matching resulting from BDAA) may contain coalitions with lower modified worths (because of the penalization) w.r.t. those in the global maximum but higher worths at MAC level (real unpenalized setting). 
\begin{figure}[t]
\begin{center}
  
      \subfigure[Stable matching resulting from Gaussian cost and BDAA.]{\epsfig{figure=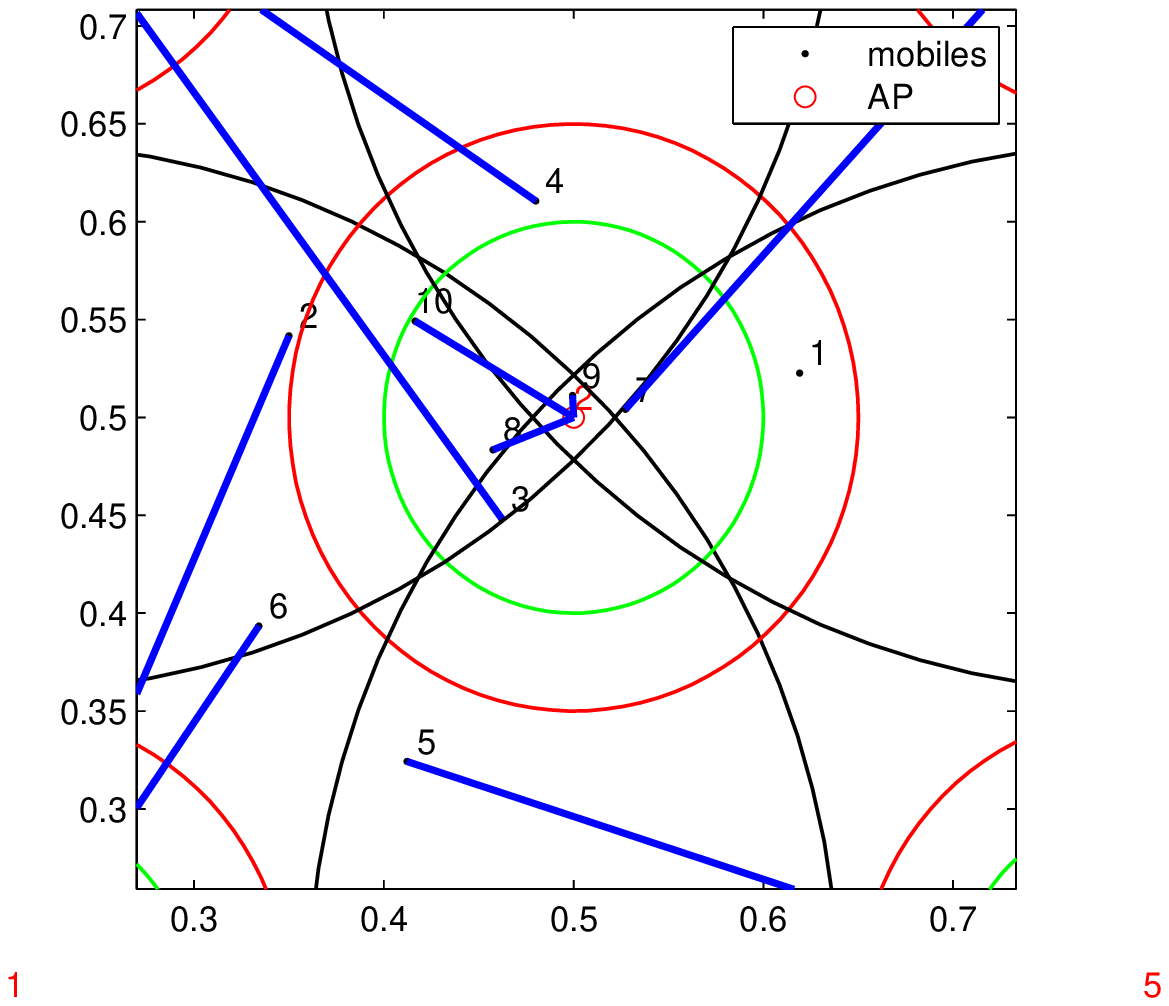,width=6cm}}
      \subfigure[Matching resulting from the best-RSSI scheme.]{\epsfig{figure=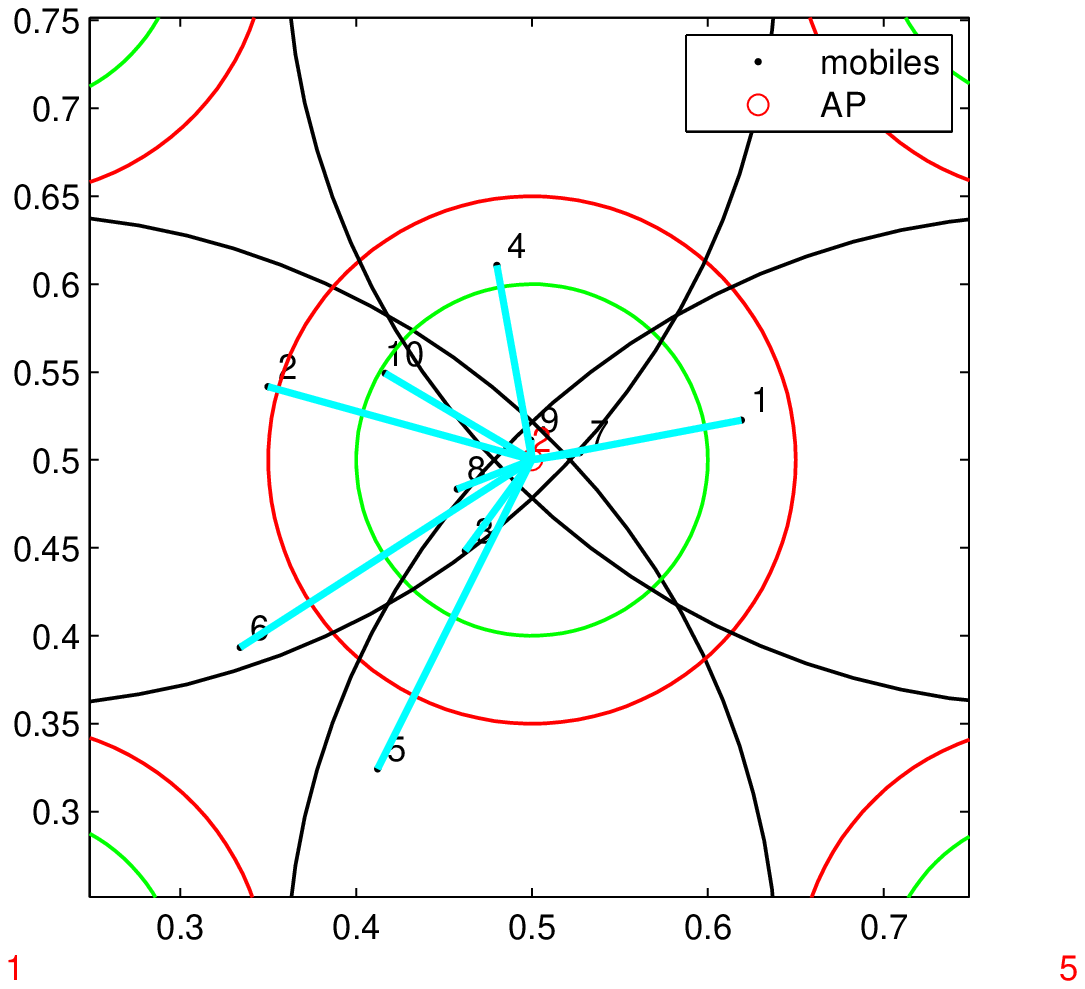,width=6cm}}
\end{center}
 \caption{Comparison of the association obtained from (a) BDAA and (b) the best-RSSI scheme in scenario 2. These two figures show the AP at center (zoom).}
 \label{fig:BDAAvsRSSI}
\end{figure}
%~~~~~~~~~~~~~~~~~~~~
		%FIGURE
%~~~~~~~~~~~~~~~~~~~~ 

We now compare our approach to the best-RSSI scheme in a scenario with high congestion (Scenario 2). The two matchings are compared Figure \ref{fig:BDAAvsRSSI}. We observe that the load is effectively shared among the APs and that the individual throughputs are greatly increased from $527$~kbits/s when using best-RSSI to $1.64$~Mbits/s for the coalition with AP1, $1.93$~Mbits/s for the coalition with AP2, $2.59$~ Mbits/s for the coalition with AP3, $1.64$~Mbits/s for the coalition with AP4 and $2.59$~Mbits/s for the coalition with AP5. The individual performances are multiplied by a factor 3 to 5 in this particular scenario.

\section{Conclusion}
\label{sec:Conclusion}
In this paper, we have presented a novel AP association mechanism in multi-rate IEEE 802.11 WLANs. We have formulated the problem as a coalition matching game with complementarities and peer effects and we have provided a new practical control mechanism that provides nodes the incentive to form coalitions both solving the unemployment problem and reducing the impact of the anomaly in IEEE 802.11.
%both providing the players the incentive to solve the unemployment problem and reducing the impact of the anomaly in IEEE 802.11. 
Simulation results have shown that the proposed mechanism can provide significant gains in terms of increased throughput by minimizing the impact of the anomaly through the overlapping between APs. We have also proposed a polynomial complexity algorithm for computing a stable structure in many-to-one matching games with complementarities and peer effects. This work is a first step in the field of controlled coalition games for achieving core stable associations in distributed wireless networks. Further works includes for example the study of a dynamic number of users or the impact of interference. 

\clearpage
\appendices

\section{IEEE 802.11 as a Nash Bargaining} \label{app:80211-NB}

In this appendix, we show that, in any cell, the resource allocation (in throughput) induced by the IEEE 802.11 protocol in DCF implementation can be formulated as a Nash bargaining.
In \cite{Kum2007}, Altman et a.l. assume the single-cell network hypothesis \footnote{"several IEEE 802.11 compliant players within such a distance of each other that only one transmission can be sustained at any point of time"}, the decoupling approximation and the saturation of queues.
They provide an analytical formula for the saturation throughput $r_{i,j}$ (bits per slot, see equation \eqref{eq:flowsaturationthroughput}) of any flow $(i,j)$ in a cell $C$,
%and the total cell throughput $R_{C}$ (where $C$ denotes the cell or coalition and $|C|$ the number of players it contains).
%TABLE
%\begin{table*}\centering
\begin{equation}\label{eq:flowsaturationthroughput}
r_{ij,C} = \frac{p_{ij}L_{ij}\beta(1-\beta)^{|C|}}{1+\sum\limits_{i=1}^{|C|}(\beta(1-\beta)^{|C|-1}((\sum\limits_{k=1}^{m_{i}}p_{ik}\frac{L_{ik}}{\theta_{i,k}})+T_{0}))+((1-(1-\beta)^{|C|}-|C|\beta(1-\beta)^{|C|-1})T_{C})}
\end{equation}
%\protect\caption{Long equation}
%\end{table*}
%%TABLE
%\begin{equation}
%	R_{C} = \sum\limits_{i=1}^{|C|}\sum\limits_{k=1}^{m_{i}}r_{ij,C}
%\end{equation}
where $T_{o}$ is the fixed overhead with a packet transmission in slots, $T_{c}$  is the fixed overhead for an RTS collision in slots, $\beta$ is the attempt rate of any player and $L_{ij}$ is the packet length of flow $(i,j)$.  
%Actually, the attempt rate should be player dependent ($\beta_{i},i\in C$ ) but the decoupling approximation reduces the dependancy to the size of the coalition and the means of the backoff process $\mathbf{b} = (b_{1},\ldots,b_{K})$ ($b_{k}$ is the mean of the backoff random process at attempt $k$).\\
Mobile device i's saturation throughput is thus obtained by summing over i's $m_{i}$ flows as,
\begin{equation}
r_{i,C} = \sum\limits_{j=1}^{m_{i}}
	\frac{p_{ij}L_{ij}\beta(1-\beta)^{|C|}}{1+\sum\limits_{i=1}^{|C|}(\beta(1-\beta)^{|C|-1}((\sum\limits_{k=1}^{m_{i}}p_{ik}\frac{L_{ik}}{\theta_{i,k}})+T_{0}))+((1-(1-\beta)^{|C|}-|C|\beta(1-\beta)^{|C|-1})T_{C})}
\end{equation}
which can equivalently be written as,
\begin{equation}\label{eq:sharingRuleWifi}
	r_{i,C} = \frac{\alpha_{i}}{\sum\limits_{k\in C}\alpha_{k}}v(C)
\end{equation}
where, for any mobile device $k$ in $C$,
\begin{equation}
	\alpha_{k} = \sum\limits_{j=1}^{m_{k}}p_{kj}\frac{L_{kj}}{L_{max}} \leq 1
\end{equation}
and,
\begin{equation}
	v(C) = \sum\limits_{i\in C}\sum\limits_{j=1}^{m_{i}}p_{ij}L_{ij}\frac{\beta(1-\beta)^{|C|}}{1+\sum\limits_{i=1}^{|C|}(\beta(1-\beta)^{|C|-1}((\sum\limits_{k=1}^{m_{i}}p_{ik}\frac{L_{ik}}{\theta_{i,k}})+T_{0}))+((1-(1-\beta)^{|C|}-|C|\beta(1-\beta)^{|C|-1})T_{C})}
\end{equation}
This allocation is Nash solution to the bargaining problem (see Section \ref{subsec:BNB} for more details) modeling the competition (with null threats) over $v(C)$ between the mobile devices of a set $C$ with utility functions, 
%======equation=======
\begin{equation}\label{eq:utilities GeneralSaturated80211}
	u_{i}(x) = x^{\alpha_{i}}, \quad \forall i\in \mathcal{N}
\end{equation}
%======equation=======

%The attempts rate is defined by the following fixed point equation, 
%\begin{equation}
%	\beta = G(\Gamma(\beta))
%\end{equation}
%where,
%\begin{equation}
%	\Gamma(x,C) = 1-e^{-(|C|-1)x}
%\end{equation}
%and,
%\begin{equation}
%	G(x,K, \mathbf{b}) = \frac{1+x+\ldots+x^{K}}{b_{0}+xb_{1}+\ldots+x^{K}b_{K}}
%\end{equation}

\section{Remarks on the existence of core stable structures} \label{app:regularity}

In Proposition \ref{prop:stability}, it is required that the number $F$ of APs is superior or equal to $2$ and that the coalitions of the set $\mathcal{C}$ satisfy some conditions in terms of maximum cardinalities (for any $f$ in $\mathcal{F}$, $q_{f}\in\{2,\ldots,W-1\}$). 
Under such conditions, it can be sown that the set of coalitions $\mathcal{C}$ is regular which is a necessary condition for the existence of core stable structures (see \cite{Pyc2012}, Corollary 2).
\begin{definition}[Regularity of a set of coalitions \cite{Pyc2012}]\label{def:regularCoalitions}
A set of coalitions is regular if there is a partition of the set of agents $\mathcal{N}$ into two disjoint, possibly empty, subsets $\mathcal{F}$ and $\mathcal{W}$  that satisfy the following three assumptions: \\
{\bf C1.} For any two different players, there exists a coalition containing them if and only if at least one of the players is a player of $\mathcal{W}$. \\
{\bf C2.} For any players $a_{1}$, $a_{2}\in\mathcal{W}$ and player $a_{3}$, there exist proper ($\neq \mathcal{N}$) coalitions $C_{1,2},C_{2,3},C_{3,1}$ such that $a_{k},a_{k+1}\in C_{k,k+1}$ and $C_{1,2}\cap C_{2,3} \cap C_{3,1}\neq \emptyset$. \\
{\bf C3.} (i) For any player $w\in\mathcal{W}$ and player $a$, if $\{a,w\}$ is not a coalition then there are two different players $f_{1},f_{2}\in\mathcal{F}$ such that $\{f_{1},a,w\}$ and  $\{f_{2},a,w\}$ are coalitions. (ii) No coalition, which is different from $\mathcal{N}$ contains $\mathcal{W}$.
\end{definition}
As already observed, the following result can be shown,
\begin{lemma}[\cite{Pyc2012}] \label{lemma:regularity}
	The set of coalitions $\mathcal{C}$ defined in (\ref{eq:coalitionC}) is regular if $q_{f}\in\{2,\ldots,W-1\}$ and $F\geq 2$.
\end{lemma}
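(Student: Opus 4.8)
The plan is to instantiate Definition~\ref{def:regularCoalitions} with the obvious partition $\mathcal{N}=\mathcal{F}\cup\mathcal{W}$ — the APs and the mobile users are disjoint by construction — and to verify conditions C1, C2 and C3 by a short case analysis on the types (AP vs.\ mobile) of the players involved. The three hypotheses enter transparently: $q_f\geq 2$ makes any AP $f$ together with one or two mobile users a legitimate coalition of $\mathcal{C}$, since then $|J|\leq 2\leq q_f$ in \eqref{eq:coalitionC}; $q_f\leq W-1$ makes $\{f\}\cup J$ never exhaust $\mathcal{W}$; and $F\geq 2$ supplies, for any given pair of players, two distinct APs with which to build bridging coalitions. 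I would also record at the outset that $\{2,\dots,W-1\}\neq\emptyset$ forces $W\geq 3$, hence $N=W+F\geq 5$; this is what makes all the small coalitions exhibited below \emph{proper} ($\neq\mathcal{N}$).

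I would first dispatch C1 and C3, which are immediate. For C1: two APs lie in no element of $\mathcal{C}$, since every non-singleton coalition in \eqref{eq:coalitionC} has exactly one AP and the singletons are single mobiles; if at least one player is a mobile, then $\{f,w\}\in\mathcal{C}$ when the other player is an AP $f$, and $\{f,w_1,w_2\}\in\mathcal{C}$ for any $f\in\mathcal{F}$ when both are mobiles. For C3(i): the only way $\{a,w\}$ fails to be a coalition is when $a$ is a mobile $w'\neq w$, and then, using $F\geq 2$ and $q_f\geq 2$ for every $f$, two distinct APs $f_1,f_2$ yield $\{f_1,w',w\},\{f_2,w',w\}\in\mathcal{C}$. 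For C3(ii): a coalition $\{f\}\cup J$ has $|J|\leq q_f\leq W-1<W$, and a singleton has one element while $W\geq 3$, so no coalition at all contains $\mathcal{W}$, a fortiori none different from $\mathcal{N}$.

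For C2, given $a_1,a_2\in\mathcal{W}$ and a third player $a_3$, I would split on whether $a_3$ is a mobile or an AP. If $a_3=w_3$ is a mobile, fix any AP $f$ and set $C_{1,2}=\{f,a_1,a_2\}$, $C_{2,3}=\{f,a_2,w_3\}$, $C_{3,1}=\{f,w_3,a_1\}$: each lies in $\mathcal{C}$ because $q_f\geq 2$, the triple intersection contains $f$, and each has at most three members, so is proper. If $a_3=f$ is an AP, set $C_{1,2}=\{f,a_1,a_2\}$, $C_{2,3}=\{f,a_2\}$, $C_{3,1}=\{f,a_1\}$: again all three are in $\mathcal{C}$, their intersection contains $f$, and properness follows from $N\geq 5$.

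The argument is essentially bookkeeping and I do not anticipate a genuine obstacle; the points deserving care are (i) confirming properness of the coalitions produced in C2 and C3, which is exactly where the implicit bound $W\geq 3$ together with $F\geq 2$ is used, and (ii) noting that when $a_3$ is an AP the two-element bridging coalitions $\{a_3,a_i\}$ are still legitimate members of $\mathcal{C}$ — this holds because the abstract family in \eqref{eq:coalitionC} carries no coverage restriction, which only enters the mechanism at a later stage.
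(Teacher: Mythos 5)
Your verification is correct, and it is essentially the only sensible route: the paper itself gives no proof of this lemma, stating only that ``it can be shown'' and deferring to Pycia's Corollary~2, so your direct check of C1--C3 against Definition~\ref{def:regularCoalitions} fills a gap the paper leaves open. The hypotheses are used exactly where you say they are ($q_f\geq 2$ for the two-mobile bridging coalitions in C1, C2 and C3(i); $F\geq 2$ for the two distinct APs in C3(i); $q_f\leq W-1$ and the implied $W\geq 3$ for C3(ii) and for properness in C2). One cosmetic slip: in your C1 argument you assert that ``the singletons are single mobiles,'' but \eqref{eq:coalitionC} also admits $\{f\}$ itself (take $J=\emptyset$); this does not affect the argument, since the relevant fact is only that no element of $\mathcal{C}$ contains two distinct APs.
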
 
Furthermore, it is also required that the sharing rule $D$ is such that the functions $D_{i,C}$ (mapping the worth of $C$ to $i$'s share in $C$) are strictly increasing, continuous and go to $+\infty$ as the worth of the coalition goes to $+\infty$. In fact, there are also necessary condition on the sharing rule for the existence of core stable structures (see \cite{Pyc2012}, Corollary 2).
Such conditions are called regularity conditions of the sharing rule. Before entering the details of these , we need to define the pairwise alignment of preferences,
%===DEF====
\begin{definition}[Pairwise aligned preferences\cite{Pyc2012}]\label{def:pairwiseAlignment}
	Preferences are pairwise aligned if for all agents $a,b\in\mathcal{N}$ and proper coalitions $C,C'$ that contain $a,b$, we have: $C\succeq_{a} C' \Leftrightarrow C\succeq_{b} C'$.
\end{definition}
%===DEF====
We now give the definition of regular sharing rules,
%===DEF====
\begin{definition}[Regularity of a sharing rule \cite{Pyc2012}]\label{def:regularSharingRules}
	A sharing rule $D$ is regular if:
	\begin{itemize}
		\item (i) It is pairwise aligned over the grand coalition: either $\mathcal{N}\not\in \mathcal{C}$, or $\mathcal{N}\in \mathcal{C}$ and the pairwise alignment equivalence is true whenever $C$ or $C'$ equals $\mathcal{N}$
		\item (ii) All functions $D_{i,C}$ are strictly increasing, continuous and $\lim_{y\rightarrow +\infty} D_{i,C}(y) = +\infty$
	\end{itemize}
\end{definition}
%===DEF====
The regularity conditions of the sharing rule are satisfied when considering the resource allocation resulting from the use of the IEEE 802.11 protocol in DCF implementation (see Appendix \ref{app:80211-NB}). In fact, (i) is verified because the grand coalition $\mathcal{N}$ is not in $\mathcal{C}$ and (ii) is verified by linearity of the WiFi allocation (see equation \eqref{eq:sharingRuleWifi}) in the worth $v(C)$.\\
%MC: commentez cette définition, la proposition, parler de sharing rule regular, faire le lien entre coalition regular et les conditions sur qf. Expliquer pourquoi les conditions de régularité de la règle de partage sont vérifiées dans le cas WiFi. 

\section{Two examples of control of incentives in games} \label{app:examples-control}

We give two simple examples of the mechanism we propose to control the players' individual incentives. These examples do not give technical details but illustrate the underlying motivations of the work developed in the framework of the wireless technologies.

%\begin{example}
%A birthday is organized and a set of parents have been invited to look after the children. Each parent is assigned a child called 'leader' of its group. Each leader is given a cake he or she has to share with the other children that will join him. The children are assumed selfish and only interested in the share of the cake they will receive. Depending on the children in the group and their collective behavior (complementarities), the parents are more or less indulgent and can either restrict the size of the cake or increase it. The control step of the mechanism we propose evaluates the size of the cake the parents will have to withdraw (in addition to their indulgence factor!) so as to provide the children the incentives to form groups of given sizes (defined a priori so as to share the load among the parents). In such a case, the parents control the incentives even though not taking part in the allocation game nor in the coalition formation.
%\end{example}
\begin{example}
	Consider the case of a many-to-one matching game with ordinal preferences $\Gamma = (\mathcal{W},\mathcal{F},\mathbf{P})$, where $\mathbf{P}$ is the list of stated preferences.
	This description of matching games does not allow for the game theorists to catch the strategic incentives the player is facing when emitting preferences since it only shows the emitted preferences. It does not even allow to exhibit the set of players' feasible strategies.
	As a consequence of this incompleteness and in addition to the standard form, matching games have been defined in a strategic form $\Gamma = (\mathcal{W},\mathcal{F},\{Q_{i}\},h,\mathbf{P})$, where $\{Q_{i}\}$ is the set of players' individual feasible strategies, $h$ is the matching mechanism and $\mathbf{P}$ is the set of true preferences.
	For the ease of understanding, Roth et a.l. assume in \cite{Rot1990} that the set of feasible individual strategies is reduced to the set of preferences lists the players' may state. This description allows for each player to state a list of preferences different from his true preferences in view of manipulating the association mechanism (or matching mechanism).
	In the framework of matching games in wireless networks, we do not allow for the players (devices) to manipulate their preferences or behave strategically by misstating since, we assume that they would intrinsically be built and programmed so as to respect a given protocol asking them for truthfulness.
	Nevertheless, this framework can be used to allow the controller to modify the game so as to change the true (stated preferences by the previous assumptions) preferences.
	Thus, the incitation operator $\Omega$ operates on the matching game in strategic form $\Gamma = (\mathcal{W},\mathcal{F},\{Q_{i}\},h,\mathbf{P})$, in a way that changes the true and stated preferences such that,
		\begin{equation}
			\Gamma' = (\mathcal{W},\mathcal{F},\{Q_{i}\},h,\mathbf{P'}) = \Omega(\Gamma)
		\end{equation}
\end{example}

The second example is given in terms of the firms and workers model commonly used in stable matchings \cite{Rot1990}.
\begin{example}
Assume a set of firms and a set of workers. Each firm can hire workers. We call a coalition a subset of players containing a single firm and one or more workers. The characteristic function assigns each coalition the worth it produces and this worth is shared via a Nash bargaining among the players in each coalition. The players have the incentive to form coalitions maximizing their own payoffs. In an employment market with performances similar to the IEEE 802.11 standard, the characteristic function would be increasing in the productivity types and the individual payoffs would be sub-additive. In this case, firms and workers would have the incentive to group by pairs of highest productivity types. Nevertheless, these stable structures of coalitions leave some workers unemployed which is not satisfying from the point of view of the unemployment market. Facing the problem, a government solely interested in reducing the number of unemployed workers would thus seek for (well-designed) tax rates as levers to provide the players the incentives to form coalitions reducing the employment. Assuming Nash bargaining for the resource allocation, we propose to manipulate the payoffs (incentives for players via their individual increasing concave utilities) by the way of such tax rates applied to the gross income(s). In other words, we manipulate the individual payoffs so as to change the equilibrium point of the coalition formation process.
\end{example}

\section{Proofs} 
\label{appendix:proofs}
%======================================
%==============PROPOSITION============
\subsection{Proof of Proposition~\ref{prop:controlincentives}}
%\propositioncontrol*
%======================================
%==============PROPOSITION============
%================================
%==============PROOF============
\begin{proof}
	Assume a coalition game $\Gamma = (\mathcal{F}\cup\mathcal{W}, v, \{u_{i}\}_{i\in N})$ in characteristic form with the Nash bargaining sharing rule over the $v(C)$-simplex in each coalition $\mathcal{C}$. Furthermore assume increasing, three-times differentiable, and concave utility functions $u_{i}:\mathbb{R}^{+}\rightarrow\mathbb{R}^{+}, i\in\mathcal{N}$.\\ 	
	Using Theorem \ref{the:solutionnashproduitutilities}, the Nash solution to the bargaining problem in any coalition $C\in \mathcal{C}$ solves,
	\begin{equation*}
		\begin{aligned}
		& \underset{\mathbf{x}}{\text{maximize}}
		& & \prod\limits_{i\in C}u_{i}(x_{i}) \\
		& \text{subject to}
		& & \sum\limits_{i\in C}x_{i,C} \leq v(C) \\
		&&& x_{i} \geq 0 \; \forall i\in C
		\end{aligned}
	\end{equation*}
	Which can be equivalently written in the following form,
	\begin{equation*}
		\begin{aligned}
		& \underset{\mathbf{x}}{\text{minimize}}
		& & -\sum\limits_{i\in C}\log(u_{i}(x_{i})) \\
		& \text{subject to}
		& & \sum\limits_{i\in C}x_{i,C} - v(C) \leq 0\\
		&&& -x_{i} \leq 0 \; \forall i\in C
		\end{aligned}
	\end{equation*}	
	This optimization problem is convex by convexity of the $v(C)$-simplex and of the objective function.\\
	For any coalition $C\in\mathcal{C}$ such that $v(C) = 0$, solving the allocation problem is irrelevant. The player receives a null payoff.
	For any other coalition (with strictly positive worth $v(C)$), the interior of the $v(C)$-simplex ($B_C=\{\mathbf{s}_C=(s_{i,C})_{i\in C}|\sum_{i\in C}s_{i,C}\leq v(C)\}$) is non-empty.
	Using Slater's constraint qualification\footnote{Strong duality holds for a convex optimization problem if it is strictly feasible.}, strong duality holds for this convex  optimization problem.
	There is an optimal solution to the optimization problem iff the Karush-Kuhn-Tucker conditions can be satisfied.\\
	We have the Lagrangian $\mathcal{L}(\mathbf{x}, \lambda_{0}, \{\lambda_{i}\}_{i\in C})$ such that,
	%====EQUATION====
	\begin{multline}
		\mathcal{L}(\mathbf{x}, \lambda_{0}, \{\lambda_{i}\}_{i\in C})  =-\sum\limits_{i\in C}\log(u_{i}(x_{i}))
		 + \lambda_{0}\left(\sum\limits_{i\in C}x_{i} - v(C)\right) - \sum\limits_{i\in C}\lambda_{i}x_{i}
	\end{multline}
	%====EQUATION====
	We have the K.K.T. conditions such that:
	\begin{itemize}
		\item Primal constraints: (i)$\sum\limits_{i\in C}x_{i} - v(C)\leq 0$, (ii) $-x_{i}\leq 0\, \forall i\in C$
		\item Dual constraints: $\lambda_{0}\geq 0$, $\lambda_{i}\geq 0 \,\forall i\in C$
		\item Complementary slackness: (i) $\lambda_{0}\left( \sum\limits_{i\in C}x_{i} - v(C)\right) = 0$, (ii)$\lambda_{i}x_{i} = 0 \, \forall i\in C$
		\item Vanishing gradient of the Lagrangian at the solution point: $\frac{u_{i}^{'}(x_{i})}{u_{i}(x_{i})} = \lambda_{0} - \lambda_{i} = \frac{1}{\chi_{i}(x_{i})} \, \forall i\in C$
	\end{itemize}
	
%	\textbf{Case 1:} $\lambda_{0} > 0$ and $\lambda_{i}>0\, \forall i\in C$\\
%	Due to the complementary slackness conditions, we must have	$x_{i} = 0 \, \forall i\in C$ and $\sum\limits_{i\in C}x_{i} = v(C) = 0$. This is in contradiction with $v(C)>0$.
%	
	\textbf{Case :} $\lambda_{0} > 0$ and $\lambda_{i} = 0\, \forall i\in C$\\
	Due to the complementary slackness conditions, we must have	$\sum\limits_{i\in C}x_{i} = v(C)$ and $x_{i}\geq 0 \, \forall i\in C$.\\
	The vanishing gradient of the Lagrangian condition gives, $\forall i\in C$:
	%====EQUATION====
	\begin{equation}
		\frac{u_{i}^{'}(x_{i})}{u_{i}(x_{i})} = \lambda_{0} = \frac{1}{\chi_{i}(x_{i})} = \frac{1}{\chi_{C}}
	\end{equation}
	%====EQUATION====
	Knowing that $u_{i}$ is concave and strictly increasing, we have $\left(\frac{u_{i}^{'}(x_{i})}{u_{i}}\right)' = \frac{u_{i}''u_{i} - (u_{i}')^{2}}{u_{i}^{2}} < 0$. 
	Thus, it is a strictly monotonic function and it admits an inverse.
%	 and:
%	%====EQUATION====
%	\begin{equation}
%		\left(\frac{u_{i}^{'}(x_{i})}{u_{i}}\right)'' = \frac{\left[u_{i}'''u_{i} - u_{i}''u_{i}\right]u_{i}^{2} - 2u_{i}'u_{i}\left[u_{i}''u_{i}-(u_{i}')^{2}\right]}{u_{i}^{4}}
%	\end{equation}
%	%====EQUATION====
%	If $u_{i}'''u_{i}-u_{i}''u_{i}\geq 0$ then $\left(\frac{u_{i}^{'}(x_{i})}{u_{i}}\right)'' \geq 0$, else if $u_{i}'''u_{i}-u_{i}''u_{i}\leq 0$ then $\left(\frac{u_{i}^{'}(x_{i})}{u_{i}}\right)'' \leq 0$.
%	We assume one of these two case which implies that $\frac{u_{i}^{'}}{u_{i}}$ is a bijection and thus an invertible function. 
	We denote $\left(\frac{u_{i}^{'}}{u_{i}}\right)^{-1}$ this inverse.\\
	We have:
	%====EQUATION====
	\begin{equation}\label{eq:optimalallocation}
		x_{i} = \left(\frac{u_{i}^{'}}{u_{i}}\right)^{-1}(\lambda_{0}) \quad \forall i\in C
	\end{equation}
	%====EQUATION====
	Due to the complementary slackness conditions, we must have:
	%====EQUATION====
	\begin{equation}
		\sum\limits_{i\in C}\left(\frac{u_{i}^{'}}{u_{i}}\right)^{-1}(\lambda_{0}) = v(C)
	\end{equation}
	%====EQUATION====
	The function $\sum\limits_{i\in C}\left(\frac{u_{i}^{'}}{u_{i}}\right)^{-1}$ is also strictly monotonic and has an inverse on $\mathbb{R}^{+*}$.
	We denote it $\left(\sum\limits_{i\in C}\left(\frac{u_{i}^{'}}{u_{i}}\right)^{-1}\right)^{-1}$.
	The optimal Lagrange multiplier is obtained as:
	%====EQUATION====
	\begin{equation}\label{eq:optimalmultiplier}
		\lambda_{0} = \left(\sum\limits_{i\in C}\left(\frac{u_{i}^{'}}{u_{i}}\right)^{-1}\right)^{-1}(v(C))
	\end{equation}
	%====EQUATION====
	We thus have the optimal solution of the Nash bargaining problem by solving \eqref{eq:optimalallocation} and \eqref{eq:optimalmultiplier}.
	In terms of the fear-of-ruin $\chi_{C}$:
	%====EQUATION====
	\begin{align}
		&x_{i} = \left(\frac{u_{i}^{'}}{u_{i}}\right)^{-1}(\frac{1}{\chi_{C}})\label{eq:optimalsolfromfor}\\
		&\chi_{C} = \frac{1}{\left(\sum\limits_{i\in C}\left(\frac{u_{i}^{'}}{u_{i}}\right)^{-1}\right)^{-1}(v(C))}\label{eq:foroptimal}
	\end{align}
	%====EQUATION====
	We now turn to the analysis of the function $b_{i} \triangleq \frac{u_{i}^{'}}{u_{i}}$ called boldness of player $i$.
	We have:
	%====EQUATION====
	\begin{equation}
		b_{i}^{'}(x_{i}) = \frac{u_{i}^{''}(x_{i})u_{i}(x_{i}) - \left(u_{i}^{'}(x_{i})\right)^{2}}{\left(u_{i}(x_{i})\right)^{2}}
	\end{equation}
	%====EQUATION====
	By assumption, $u_{i}$ is strictly increasing and concave for any player $i$.
	For any player $i$, we  obtain that $b_{i}^{'}(x_{i})$ is strictly negative for any $x_{i}$ and thus that the boldness $b_{i}$ is a decreasing function of $x_{i}$.\\
	Thus, its inverse $\left(\frac{u_{i}^{'}}{u_{i}}\right)^{-1}$ is also a decreasing function and the fear-of-ruin of player $i$, $\frac{u_{i}}{u_{i}'}$ is an increasing function of $x_{i}$.\\
	The sum of decreasing functions, $\sum\limits_{i\in C} \left(\frac{u_{i}^{'}}{u_{i}}\right)^{-1}$ is a decreasing function. So is its inverse $\left(\sum\limits_{i\in C} \left(\frac{u_{i}^{'}}{u_{i}}\right)^{-1}\right)^{-1}$.\\
	As a consequence, we obtain that the common boldness $\lambda_{0}$ (solving the Nash bargaining optimization problem) is a decreasing function of the common wealth $v(C)$ and thus (using $\chi_{C} = \frac{1}{\lambda_{0}}$) that the common fear-of-ruin is an increasing function of the common wealth $v(C)$.\\
	Finally, using \eqref{eq:optimalallocation}, we obtain that $x_{i}$ is decreasing in $\lambda_{0}$ but increasing in $v(C)$ for each player $i$ in $C$.
	It is an increasing function of the fear-of-ruin $\chi_{C}$ (by \eqref{eq:optimalsolfromfor}).\\
	Now, assume two coalitions $C$ and $C'$ and their respective Nash solutions to the bargaining problem $\mathbf{x_{C}}$ (where the bold notation $\mathbf{x_{C}}$ denotes the vector of individual allocations solving the Nash bargaining optimization program in coalition $C$) and $\mathbf{x_{C'}}$.\\
	If we want all the players in $C$ and $C'$ (i.e. in $C\cap C'$) to prefer $C$ to $C'$, then we must have:
	%====EQUATION====
	\begin{equation}\label{eq:inequalitites}
		x_{i,C} > x_{i,C'} \quad \forall i\in C\cap C' 
	\end{equation}
	%====EQUATION====
	which can equivalently be written:
	%====EQUATION====
	\begin{equation}
		(x_{i,C})_{i\in C\cap C'} \succ (x_{i,C'})_{i\in C\cap C'}
	\end{equation}
	%====EQUATION====
	where $\succ$ denotes the component wise strict inequality in $\mathbb{R}^{|C\cap C'|}$.\\
	The number of players in $C\cap C'$ can be arbitrary large (depending on $C$ and $C'$).
	So is the number of inequalities of the form of \eqref{eq:inequalitites}  that must be satisfied.
	This number is of order $O(N)$.
	In order to reduce the complexity of the control of the incentives of an order $N$, we use the fact that the Nash solution to the bargaining problem is component-wisely increasing in a quantity that is constant over the players in the coalition, namely the fear-of-ruin.\\
	As a consequence, the set of inequalities in \eqref{eq:inequalitites} can be reduced to the following scalar inequality:
	%====EQUATION====
	\begin{equation}
		\chi_{C} > \chi_{C'}
	\end{equation}
	%====EQUATION====
	which can be written as:
	%====EQUATION====
	\begin{equation}
		 \frac{1}{\left(\sum\limits_{i\in C}\left(\frac{u_{i}^{'}}{u_{i}}\right)^{-1}\right)^{-1}(\tilde{v}(C))}
		 >
		 \frac{1}{\left(\sum\limits_{i\in C'}\left(\frac{u_{i}^{'}}{u_{i}}\right)^{-1}\right)^{-1}(\tilde{v}(C'))}
	\end{equation}
	%====EQUATION====
	where $\tilde{v}$ is a characteristic function.\\
	Taking the inverse, we obtain:
	%====EQUATION====
	\begin{equation}
		\left(\sum\limits_{i\in C}\left(\frac{u_{i}^{'}}{u_{i}}\right)^{-1}\right)^{-1}(\tilde{v}(C))
		<
		\left(\sum\limits_{i\in C'}\left(\frac{u_{i}^{'}}{u_{i}}\right)^{-1}\right)^{-1}(\tilde{v}(C'))
	\end{equation}
	%====EQUATION====
	Denoting $F_{C} = \left(\sum\limits_{i\in C'}\left(\frac{u_{i}^{'}}{u_{i}}\right)^{-1}\right)^{-1}$, we can write:
	%====EQUATION====
	\begin{equation}
		F_{C}\circ \tilde{v}(C) < F_{C'}\circ \tilde{v}(C')
	\end{equation}
	%====EQUATION====
	We thus obtain the set of transformations $\Omega$ from the set of characteristic functions in itself that provide the players the incentive for some subset of coalition $\mathcal{C'}\subset\mathcal{C}$ must satisfy the following scalar inequalities, $\forall C'\in\mathcal{C}',\forall C\in\mathcal{C}\backslash \mathcal{C}'$ s.t. $C'\cap C \neq \emptyset$:
	%====EQUATION====
	\begin{equation}
		F_{C'}\circ\Omega(v)(C') < F_{C}\circ\Omega(v)(C)
	\end{equation}
	%====EQUATION====
	where $F_{C} = \left(\sum\limits_{i\in C}\left(\frac{u_{i}^{'}}{u_{i}}\right)^{-1}\right)^{-1}$ and $v$ is the characteristic function of the original coalition game.\\
	This concludes the proof.
\end{proof}
%================================
%==============PROOF============

%======================================
%==============COROLLARY============
\subsection{Proof of Corollary~\ref{cor:controlincentives}}
%\corollarycontrol*
%======================================
%==============PROPOSITION============

%================================
%==============PROOF============
\begin{proof}
	Let $\mathcal{C}_{f}$ denote the set of coalitions containing the AP $f\in\mathcal{F}$.
For every AP $f\in\mathcal{F}$, we want the vector of individual payoffs to be decreasing with the distance to the objective $\hat{q}_f$ where the distance function $d:\mathcal{C}\times \mathcal{C}\rightarrow \mathbb{N}$ is defined such that $d(C,C') = ||C| - |C'||$.\\
In other words, we want any coalition of size $q$ to be strictly preferred to any coalition of size $q+1$ for any size $q$ superior or equal to the objective $\hat{q}_{f}$.
We furthermore want any coalition of size $q$ to be strictly preferred to any coalition of size $q-1$ for any size $q$ inferior or equal to the objective $\hat{q}_{f}$.
Denoting $\mathbf{u}(\mathbf{x_{C}})$ the vector of utilities of the allocation $\mathbf{x_{C}}$ solving the Nash bargaining optimization program in coalition $C$), we want:

	\begin{equation}
		\left\{
		\begin{array}{l}
			%~~~~~~~~~~~EQUATION~~~~~~~~~~~
			\mathop{\min}\limits_{\substack{C\in\mathcal{C}_{f}\\ s.t. |C|=q}} \mathbf{u(x_{C})} \succ \mathop{\max}\limits_{\substack{C\in \mathcal{C}_{f}\\ s.t. |C|= q+1}}\mathbf{u(x_{C})},\quad \forall {q}\geq \hat{q}_{f}
			%~~~~~~~~~~~EQUATION~~~~~~~~~~~
			\\
			%~~~~~~~~~~~EQUATION~~~~~~~~~~~
			\mathop{\min}\limits_{\substack{C\in\mathcal{C}_{f}\\ s.t. |C|=q}}\mathbf{u(x_{C})} \succ \mathop{\max}\limits_{\substack{C\in \mathcal{C}_{f}\\ s.t. |C|= q-1}}\mathbf{u(x_{C})},\quad \forall {q}\leq \hat{q}_{f}
			%~~~~~~~~~~~EQUATION~~~~~~~~~~~
		\end{array}
		\right.
	\end{equation}

Using the fact that the utilities are increasing functions of the payoffs, we obtain the following equivalent condition,	
	\begin{equation}
		\left\{
		\begin{array}{l}
			%~~~~~~~~~~~EQUATION~~~~~~~~~~~
			\mathop{\min}\limits_{\substack{C\in\mathcal{C}_{f}\\ s.t. |C|=q}} \mathbf{x_{C}} \succ \mathop{\max}\limits_{\substack{C\in \mathcal{C}_{f}\\ s.t. |C|= q+1}}\mathbf{x_{C}},\quad \forall {q}\geq \hat{q}_{f}
			%~~~~~~~~~~~EQUATION~~~~~~~~~~~
			\\
			%~~~~~~~~~~~EQUATION~~~~~~~~~~~
			\mathop{\min}\limits_{\substack{C\in\mathcal{C}_{f}\\ s.t. |C|=q}}\mathbf{x_{C}} \succ \mathop{\max}\limits_{\substack{C\in \mathcal{C}_{f}\\ s.t. |C|= q-1}}\mathbf{x_{C}},\quad \forall {q}\leq \hat{q}_{f}
			%~~~~~~~~~~~EQUATION~~~~~~~~~~~
		\end{array}
		\right.
	\end{equation}
	
	Using the results if Proposition \ref{prop:controlincentives}, we immediately obtain the transformation to be applied to the characteristic function to provide the required incentives, 
	\begin{equation}\label{eq:fearofruinrequirement1}
			%~~~~~~~~~~~EQUATION~~~~~~~~~~~
			\mathop{\max}\limits_{\substack{C\in\mathcal{C}_{f}\\ s.t. |C|=q}}F_{C}\circ\Omega(v)(C)
			<
			\mathop{\min}\limits_{\substack{C\in \mathcal{C}_{f}\\ s.t. |C|= q+1}}F_{C}\circ\Omega(v)(C),\quad \forall {q}\geq \hat{q}_{f}
			%~~~~~~~~~~~EQUATION~~~~~~~~~~~
	\end{equation}
	and
	\begin{equation}
			%~~~~~~~~~~~EQUATION~~~~~~~~~~~
			\mathop{\max}\limits_{\substack{C\in\mathcal{C}_{f}\\ s.t. |C|=q}}F_{C}\circ\Omega(v)(C)
			<
			\mathop{\min}\limits_{\substack{C\in \mathcal{C}_{f}\\ s.t. |C|= q-1}}F_{C}\circ\Omega(v)(C),\quad \forall {q}\leq \hat{q}_{f}
			%~~~~~~~~~~~EQUATION~~~~~~~~~~~
	\end{equation}
	where $F_{C} = \left(\sum\limits_{i\in C}\left(\frac{u_{i}^{'}}{u_{i}}\right)^{-1}\right)^{-1}$.\\
	This concludes the proof.
\end{proof}
%================================
%==============PROOF============

%======================================
%==============PROPOSITION============
\subsection{Proof of Proposition~\ref{prop:ConvBDAA}}
%\convergenceBDAA*

%================================
%==============PROOF============
\begin{proof}
After initialization, BDAA is made of two loops. 
The first one is a loop of proposals from users. At each iteration of this outer loop, there is an inner loop of counter-proposals from the APs to the users. We show that these two loops stop after a finite number of iterations. Let's first consider the inner loop. At each iteration of the inner loop, the following events can occur: 
\begin{itemize}
	\item An engaged AP remains engaged. Its dynamic list is left unchanged (in Step 2.f, only the dynamic lists of unengaged APs are updated).
	\item An unengaged AP is now engaged. Its dynamic list is left unchanged (in Step 2.f, only the dynamic lists of unengaged APs are updated).
	\item An unengaged AP remains unengaged. This is the case when some of the users it counter-proposed in Step 2.d have rejected its counter-proposal and either (a) none of them is engaged with another AP, or (b) some of them are engaged. In (a) the dynamic list remains unchanged. In (b) it is strictly decreasing. 
	\item An engaged AP becomes unengaged. This means that some users in the coalition it was engaged to defected. This is only possible if they are engaged in a new coalition with another AP (Step 2.e). These defecting users are thus removed from the AP dynamic list, which is strictly decreasing. 
\end{itemize}
In all cases, all the dynamic lists are weakly decreasing in the sense of inclusion. The inner loop thus converges in a finite number of steps. 

We now consider the outer loop. We immediately have the convergence by finiteness of the number of APs each mobile can propose to and the fact that no mobile can propose more than once to any AP. The algorithm converges in a finite number of steps.
\end{proof}
%================================
%==============PROOF============

%======================================
%==============PROPOSITION============
\subsection{Proof of Proposition~\ref{prop:StabBDAA}}
\begin{proof}
		In \cite{Pyc2012}, it is shown that the stability inducing sharing rules as given in Proposition~\ref{prop:stability} (see Section~\ref{sec:matching}) induce pairwise aligned preferences profiles satisfying the richness condition $R1$ (\cite{Pyc2012}, pp. 334) of the domain of preferences $\boldsymbol{R}$. 
	It is shown (\cite{Pyc2012}, Lemma 3, pp. 349) that if the family of coalitions $\mathcal{C}$, the preferences domain $\boldsymbol{R}$ satisfies $R1$ and all preference profiles are pairwise aligned, then no profile $\boldsymbol{R}$  admits an n-cycle, $n\in\{3,4,\ldots\}$. 
	As a consequence, there exists a core stable structure.
	It is unique if the preferences are strict. 
	Furthermore, in the proof of Proposition 5 (\cite{Pyc2012}, Proposition 5, pp. 359) it is shown that without n-cycles, the coalitions $\{C_{1},\ldots,C_{k}\}$ in the stable structure can be re-indexed as $\{C_{i_{1}},\ldots,C_{i_{k}}\}$  so that $ 
C_{i_{j}}$ is weakly preferred by its members to any coalition of agents in $\mathcal{N}\backslash\{C_{i_{1}},\ldots,C_{i_{j-1}}\}$. 
If the preferences are strict (using a tie-breaking rule in case of indifference), then the members of $C_{i_{j}}$ strictly prefer it.
\\

	We show by induction that these coalitions are formed in BDAA and never blocked by any other coalition once formed.\\
	%By construction of the BDAA, it is clear that this structure will form along the rounds of counter-proposals that follow the final round of proposals. 
	$\bullet$ The mobiles in $C_{i_{1}}$ propose to the AP in $C_{i_{1}}$ in the first proposal round since it is the most preferred coalition of any agent forming it. 
	The AP in $C_{i_{1}}$  counter-proposes to these mobiles who all accept. 
	The coalition $C_{i_{1}}$ is formed.
	No player of this coalition has any incentive to leave this coalition in a subsequent round. 
	It cannot be blocked.
	\\
	
	$\bullet$ Assume that the coalitions $\{C_{i_{1}}, \ldots, C_{i_{l}}\}$, $l<k$ are formed and are not blocked. 
	We show that $C_{i_{l+1}}$ will be formed and cannot be blocked.  
	%Since $C_{i_{1}}$ to $C_{i_{l}}$ are formed, the players in $C_{i_{l+1}}$ must have proposed to the AP in $C_{i_{l+1}}$ in, at most, the round of proposals following the formation of $C_{i_{l}}$. 
	Using the previous results, we have that the players in $C_{i_{l+1}}$ prefer it to any other coalition that can be formed with agents in $\mathcal{N}\backslash\{C_{i_{1}},\ldots,C_{i_{l}}\}$. 
	The payoff they receive from any of these coalitions is lower that their payoff in $C_{i_{l+1}}$, thus inferior to the maximum achievable payoff with the AP in $C_{i_{l+1}}$. 
	As a consequence, all the players in $C_{i_{l+1}}$ must have ultimately proposed to the AP in this coalition.
	\\
	At this point, the cumulated and dynamic lists of this AP contain the players in $C_{i_{l+1}}$. 
	It counter-proposes to the coalitions it prefers to $C_{i_{l+1}}$ which contain players in the coalitions $\{C_{i_{1}},\ldots,C_{i_{l}}\}$. 
	These players reject the counter-proposals and are removed from the AP's dynamic list which ultimately counter-proposes to $C_{i_{l+1}}$. 
	%Otherwise, either they have proposed to an AP they prefer less which is not possible by definition of the BDAA or they have only proposed to APs they prefer and have been engaged by one of these. 
	%Their most preferred AP is the one in $C_{i_{l+1}}$ when $C_{i_{1}}$ to $C_{i_{l-1}}$ are formed. 
	Any mobile in $C_{i_{l+1}}$ rejects this counter-proposal to propose to the other APs with maximum achievable payoff higher than its payoff in $C_{i_{l+1}}$. 
	This continues up to the point where no maximum achievable payoff is higher than the payoff in $C_{i_{l+1}}$. 
	At this point, all the players in this coalition accept forming $C_{i_{l+1}}$.
	The coalition $C_{i_{l+1}}$ is formed.
	\\
	No player of this coalition has any incentive to leave this coalition in a subsequent round. 
	This concludes the induction proof.\\
	The unique stable structure $\{C_{i_{1}}, \ldots, C_{i_{k}}\}$ is the output of BDAA.
%	In fact, assume that the last round of proposals has happened. 
%	The algorithm enters the first round of counter-proposals.
%	The AP in $C_{i_{1}}$ counter-proposes to the mobiles in this coalition who all accept since they have no preferred alternative. 
%	Similarly, the AP in $C_{i_{2}}$, counter-proposes to the mobiles in this coalition who all accept since they have no preferred alternative. 
%	This counter-proposal may be emitted at the first round of counter-proposals or after some rounds where mobiles in $C_{i_{1}}$ reject the counter-proposals of the AP in $C_{i_{2}}$. 
%	These rejections only result in additional rounds of counter-proposals with a reduction of the dynamic lists of the AP in $C_{i_{2}}$. 
%	This process goes on up to the convergence point where the stable structure $\{C_{1},\ldots,C_{k}\}$ is formed.
\end{proof}
%================================
%==============PROOF============

%======================================
%==============PROPOSITION============
\subsection{Proof of Proposition~\ref{prop:CompBDAA}}
%\complexityBDAA*
%======================================
%==============PROPOSITION============

%================================
%==============PROOF============
\begin{proof}
	First we give an upper bound on the number of proposals emitted by the mobile users, then we give an upper bound on the number of proposals emitted by the APs.
	Finally, we conclude.\\
	In at most $F$ proposals, every mobile user has proposed to all the APs.
	Thus, in at most $F\times W$ proposals, the mobile users have proposed to all the APs.\\
	In at most $W$ counter-proposals, every AP has proposed to all the mobiles.
	Furthermore, every AP counter-proposes at each counter-proposing round.
	Thus, in at most $F\times W\times F$ the APs have emitted all their counter-proposals.
	We obtain that the total number of proposals (both mobile users proposals and APs counter-proposals) cannot exceed $F^{3}\times W^{2}$.
	The complexity of BDAA is  $O(n^{5})$ where $n=\max(F, W)$.
\end{proof}
%================================
%==============PROOF============

%======================================
%===============SECTION===============

\section{Interpretation of BDAA in the economic framework} \label{app:BDAA-interpretations}

We show that the proposed matching mechanism (backward deferred acceptance) is particularly instinctive and natural in representing a competitive labor market process.
We particularly focus on a competitive labor market interpretation.

Consider a competitive labor market made of a set $\mathcal{W}$ of workers and a set $\mathcal{F}$ of firms. 
Firms and workers can form coalitions as in the considered association and resource allocation game studied in the paper.
The workers are looking for some jobs and the firms are looking for hiring some job seekers.
We assume backward deferred acceptance mechanism as hiring process.

In Step 1.c, the firms communicate their top perspectives (assumed measured in payoffs) to the job seekers.
Such communication can be understood as taking part in some recruiting campaign.
%Each job seeker does only know (or only cares about) its own top perspective.
In Step 2.a, the workers apply to the firms on the basis of the top perspectives and the promises of the recruiting campaigns.
In Step 2.c, the firms consider the set of received applications and propose jobs.% in a given order $\pi$ over the firms and according to their preferences over groups to their most preferred subset of workers. 
%Observe that the order may be taken at random in the set of orders $\Pi$.
The emitted job offers may not necessarily correspond to the top perspectives of the recruiting campaign.
In fact some conditions may not have been fulfilled (e.g. the set of workers for optimal production).
Each job seeker receives the job proposals and accepts or rejects.
In Step 2.d a job seeker may either accept or reject a counter-proposal.
The acceptance and rejections are received by the firms.
If all the proposals of a firm have been accepted, the workers are engaged (Step 2.e).
Hired workers are declared as is. 
The counter-proposals go on up to the last firm.% in the order $\pi$. 
 Each firm in the market having proposed jobs removes from its list of candidates those workers having rejected its counter-proposal and been engaged (Step 2.f). 
The candidature of the other job seekers are kept into consideration (Step 2.f).
The firms may tell the job-seekers still in their lists that they are under consideration and that they may receive new offers.
The firms not having recruited yet go on emitting job offers (Step 2.g).  
This process goes on up to the point where no job seeker receives new proposals (Step 2.g).
At this point, the firms may notify the workers that they have received all the proposals.
It is now upon the job seekers to create new opportunities.
The job seekers attempt to send new applications (to their next most preferred firm in terms of top perspectives).
The mechanism goes on up to the point where no job seeker can propose.\\

\section{Example} \label{appendix:Example}

\begin{figure*}[t]
\begin{minipage}[b]{0.22\linewidth}
\centering
		%============TIKZ PICTURE============
		\begin{tikzpicture}[thick,
		  %every node/.style={draw,circle},
		  workers/.style={draw, circle, minimum size = 1.5mm, inner sep=0pt, fill = black},
		  firms/.style={draw, circle, minimum size = 1.5mm, inner sep=0pt, fill = black},
		  tasks/.style={draw, circle, minimum size = 5mm, inner sep=0pt}
		]
		
		% the vertices of U
		\begin{scope}[start chain=going below,node distance=1cm]
		\foreach \i in {1,...,2}
		  \node[workers,on chain] (w\i) [label=left: $w_{\i}$] {};
		\end{scope}
		
		% the vertices of V
		\begin{scope}[xshift=2cm,start chain=going below,node distance=1cm]
		  \node[firms,on chain] (f1) [label=right: $f_{1}$] {};
		  \node[firms,on chain] (f2) [label=right: $f_{2}$] {};
		  \node[firms,on chain] (f3) [label=right: $f_{3}$] {};
		\end{scope}

		% the edges
		\draw[<-] (w1) -- (f1) node [midway, above] {\footnotesize 10};
		\draw[<-] (w1) -- (f2) node [midway, above] {\footnotesize 1};
		\draw[<-] (w2) -- (f1) node [midway, below] {\footnotesize 10};
		\draw[<-] (w2) -- (f3) node [midway, below] {\footnotesize 100};
		
		\end{tikzpicture}
	%============TIKZ PICTURE============
	\caption{The APs send the best achievable payoff to the mobiles.}
	\label{fig:BDAAexamplesystem1}
\end{minipage}
\hspace{0.5cm}
\begin{minipage}[b]{0.22\linewidth}
\centering
		%============TIKZ PICTURE============
		\begin{tikzpicture}[thick,
		  %every node/.style={draw,circle},
		  workers/.style={draw, circle, minimum size = 1.5mm, inner sep=0pt, fill = black},
		  firms/.style={draw, circle, minimum size = 1.5mm, inner sep=0pt, fill = black},
		  tasks/.style={draw, circle, minimum size = 5mm, inner sep=0pt}
		]
		
		% the vertices of U
		\begin{scope}[start chain=going below,node distance=1cm]
		\foreach \i in {1,...,2}
		  \node[workers,on chain] (w\i) [label=left: $w_{\i}$] {};
		\end{scope}
		
		% the vertices of V
		\begin{scope}[xshift=2cm,start chain=going below,node distance=1cm]
		  \node[firms,on chain] (f1) [label=right: $f_{1}$] {};
		  \node[firms,on chain] (f2) [label=right: $f_{2}$] {};
		  \node[firms,on chain] (f3) [label=right: $f_{3}$] {};
		\end{scope}

		% the edges
		\draw[dashdotted, ->] (w1) -- (f1) node {};
		\draw[dashdotted, ->] (w2) -- (f3) node {};
		
		\end{tikzpicture}
	%============TIKZ PICTURE============
	\caption{The mobiles propose to the APs.}
	\label{fig:BDAAexampleproposals}
\end{minipage}
\begin{minipage}[b]{0.22\linewidth}
\centering
		%============TIKZ PICTURE============
		\begin{tikzpicture}[thick,
		  %every node/.style={draw,circle},
		  workers/.style={draw, circle, minimum size = 1.5mm, inner sep=0pt, fill = black},
		  firms/.style={draw, circle, minimum size = 1.5mm, inner sep=0pt, fill = black},
		  tasks/.style={draw, circle, minimum size = 5mm, inner sep=0pt}
		]
		
		% the vertices of U
		\begin{scope}[start chain=going below,node distance=1cm]
		\foreach \i in {1,...,2}
		  \node[workers,on chain] (w\i) [label=left: $w_{\i}$] {};
		\end{scope}
		
		% the vertices of V
		\begin{scope}[xshift=2cm,start chain=going below,node distance=1cm]
		  \node[firms,on chain] (f1) [label=right: $f_{1}$] {};
		  \node[firms,on chain] (f2) [label=right: $f_{2}$] {};
		  \node[firms,on chain] (f3) [label=right: $f_{3}$] {};
		\end{scope}

		% the edges
		\draw[<-, dotted] (w1) -- (f1) node [midway, above] {\footnotesize 0.5};
		\draw[<-, dotted] (w2) -- (f3) node [midway, below] {\footnotesize 100};
		
		\end{tikzpicture}
	%============TIKZ PICTURE============
	\caption{On the basis of the proposals, the APs emit counter-proposals.}
	\label{fig:BDAAexamplecounterproposals}
\end{minipage}
\begin{minipage}[b]{0.22\linewidth}
		\centering
		%============TIKZ PICTURE============
		\begin{tikzpicture}[thick,
		  %every node/.style={draw,circle},
		  workers/.style={draw, circle, minimum size = 1.5mm, inner sep=0pt, fill = black},
		  firms/.style={draw, circle, minimum size = 1.5mm, inner sep=0pt, fill = black},
		  tasks/.style={draw, circle, minimum size = 5mm, inner sep=0pt}
		]
		
		% the vertices of U
		\begin{scope}[start chain=going below,node distance=1cm]
		\foreach \i in {1,...,2}
		  \node[workers,on chain] (w\i) [label=left: $w_{\i}$] {};
		\end{scope}
		
		% the vertices of V
		\begin{scope}[xshift=2cm,start chain=going below,node distance=1cm]
		  \node[firms,on chain] (f1) [label=right: $f_{1}$] {};
		  \node[firms,on chain] (f2) [label=right: $f_{2}$] {};
		  \node[firms,on chain] (f3) [label=right: $f_{3}$] {};
		\end{scope}

		% the edges
		\draw[<-, dotted] (w1) -- (f1) node [midway] {$\times$};
		%\draw[->,dashdotted] (w1) -- (f2) node {};
		\draw (w2) -- (f3) node [midway, below] {\footnotesize 100};
		
		\end{tikzpicture}
	%============TIKZ PICTURE============
	\caption{The mobiles accept or reject the counter-proposals.}
	\label{fig:BDAAexamplerejects}
\end{minipage}
\end{figure*}

\begin{figure*}[b]
\begin{minipage}[b]{0.22\linewidth}
		\centering
		%============TIKZ PICTURE============
		\begin{tikzpicture}[thick,
		  %every node/.style={draw,circle},
		  workers/.style={draw, circle, minimum size = 1.5mm, inner sep=0pt, fill = black},
		  firms/.style={draw, circle, minimum size = 1.5mm, inner sep=0pt, fill = black},
		  tasks/.style={draw, circle, minimum size = 5mm, inner sep=0pt}
		]
		
		% the vertices of U
		\begin{scope}[start chain=going below,node distance=1cm]
		\foreach \i in {1,...,2}
		  \node[workers,on chain] (w\i) [label=left: $w_{\i}$] {};
		\end{scope}
		
		% the vertices of V
		\begin{scope}[xshift=2cm,start chain=going below,node distance=1cm]
		  \node[firms,on chain] (f1) [label=right: $f_{1}$] {};
		  \node[firms,on chain] (f2) [label=right: $f_{2}$] {};
		  \node[firms,on chain] (f3) [label=right: $f_{3}$] {};
		\end{scope}

		% the edges
		%\draw[<-, dotted] (w1) -- (f1) node [midway] {$\times$};
		\draw[->, dashdotted] (w1) -- (f2) node {};
		\draw (w2) -- (f3) node [midway, below] {\footnotesize 100M};
		
		\end{tikzpicture}
	%============TIKZ PICTURE============
	\caption{Mobile $w_{1}$ proposes to AP $f_{2}$.}
	\label{fig:BDAAexampleproposals2}
\end{minipage}
\hspace{0.5cm}
\begin{minipage}[b]{0.22\linewidth}
		\centering
		%============TIKZ PICTURE============
		\begin{tikzpicture}[thick,
		  %every node/.style={draw,circle},
		  workers/.style={draw, circle, minimum size = 1.5mm, inner sep=0pt, fill = black},
		  firms/.style={draw, circle, minimum size = 1.5mm, inner sep=0pt, fill = black},
		  tasks/.style={draw, circle, minimum size = 5mm, inner sep=0pt}
		]
		
		% the vertices of U
		\begin{scope}[start chain=going below,node distance=1cm]
		\foreach \i in {1,...,2}
		  \node[workers,on chain] (w\i) [label=left: $w_{\i}$] {};
		\end{scope}
		
		% the vertices of V
		\begin{scope}[xshift=2cm,start chain=going below,node distance=1cm]
		  \node[firms,on chain] (f1) [label=right: $f_{1}$] {};
		  \node[firms,on chain] (f2) [label=right: $f_{2}$] {};
		  \node[firms,on chain] (f3) [label=right: $f_{3}$] {};
		\end{scope}

		% the edges
		\draw[<-, dotted] (w1) -- (f1) node [midway, above] {\footnotesize 0.5};
		\draw[<-, dotted] (w1) -- (f2) node [midway, above] {\footnotesize 1};
		\draw (w2) -- (f3) node [midway, below] {\footnotesize 100};
		
		\end{tikzpicture}
	%============TIKZ PICTURE============
	\caption{APs $f_{1}$ and $f_{2}$ counter-propose to $w_{1}$.}
	\label{fig:BDAAexamplecounterproposals2}
	\end{minipage}
	\begin{minipage}[b]{0.22\linewidth}
		\centering
		%============TIKZ PICTURE============
		\begin{tikzpicture}[thick,
		  %every node/.style={draw,circle},
		  workers/.style={draw, circle, minimum size = 1.5mm, inner sep=0pt, fill = black},
		  firms/.style={draw, circle, minimum size = 1.5mm, inner sep=0pt, fill = black},
		  tasks/.style={draw, circle, minimum size = 5mm, inner sep=0pt}
		]
		
		% the vertices of U
		\begin{scope}[start chain=going below,node distance=1cm]
		\foreach \i in {1,...,2}
		  \node[workers,on chain] (w\i) [label=left: $w_{\i}$] {};
		\end{scope}
		
		% the vertices of V
		\begin{scope}[xshift=2cm,start chain=going below,node distance=1cm]
		  \node[firms,on chain] (f1) [label=right: $f_{1}$] {};
		  \node[firms,on chain] (f2) [label=right: $f_{2}$] {};
		  \node[firms,on chain] (f3) [label=right: $f_{3}$] {};
		\end{scope}

		% the edges
		\draw[<-, dotted] (w1) -- (f1) node [midway] {$\times$};
		\draw[<-, dotted] (w1) -- (f2) node [midway] {$\circ$};
		\draw (w2) -- (f3) node [midway, below] {\footnotesize 100M};
		
		\end{tikzpicture}
	%============TIKZ PICTURE============
	\caption{Mobile $w_{1}$ accept $f_{2}$ counter-proposal and rejects $f_{1}$' one.}
	\label{fig:BDAAexampleacceptance}
	\end{minipage}
	\hspace{0.5cm}
	\begin{minipage}[b]{0.22\linewidth}
		\centering
		%============TIKZ PICTURE============
		\begin{tikzpicture}[thick,
		  %every node/.style={draw,circle},
		  workers/.style={draw, circle, minimum size = 1.5mm, inner sep=0pt, fill = black},
		  firms/.style={draw, circle, minimum size = 1.5mm, inner sep=0pt, fill = black},
		  tasks/.style={draw, circle, minimum size = 5mm, inner sep=0pt}
		]
		
		% the vertices of U
		\begin{scope}[start chain=going below,node distance=1cm]
		\foreach \i in {1,...,2}
		  \node[workers,on chain] (w\i) [label=left: $w_{\i}$] {};
		\end{scope}
		
		% the vertices of V
		\begin{scope}[xshift=2cm,start chain=going below,node distance=1cm]
		  \node[firms,on chain] (f1) [label=right: $f_{1}$] {};
		  \node[firms,on chain] (f2) [label=right: $f_{2}$] {};
		  \node[firms,on chain] (f3) [label=right: $f_{3}$] {};
		\end{scope}

		% the edges
		\draw (w1) -- (f2) node [midway, above] {\footnotesize 1};
		\draw (w2) -- (f3) node [midway, below] {\footnotesize 100};
		
		\end{tikzpicture}
	%============TIKZ PICTURE============
	\caption{The stable matching.}
	\label{fig:BDAAexamplematching}
	\end{minipage}	
\end{figure*}

In Figure (\ref{fig:BDAAexamplesystem1}) we show of a first example of application of the BDAA.
The solid arrows show the best achievable payoffs that the mobiles can obtain with each AP.
In dash-dotted we show the proposals of the mobiles to the APs, in dotted the counter-proposals of the APs to the mobiles and in plain the engagement. The cross show a reject and the circles an acceptance.\\

The APs send the maximum achievable payoff to the players (see Figure  \ref{fig:BDAAexamplesystem1}).
 AP $f_{1}$ sends 10 to $w_{1}$ and $w_{2}$ for the coalition $\{f_{1}; w_{1}, w_{2}\}$.
For $\{f_{1}; w_{1}\}$ or $\{f_{1}; w_{2}\}$ the payoff is 0.5 due to the control of the incentives\footnote{In this example the control provides the mobiles $w_{1}$ and $w_{2}$ the incentives for the set of coalitions of size $3$ w.r.t. those of size $2$ in the set of coalitions that the players can form with AP $f_{1}$}.
AP $f_{2}$ sends 1 for $\{f_{2};w_{1}\}$.
AP $f_{3}$ sends 100 for $\{f_{3};w_{2}\}$.\\

%Assume that these payoff respectively correspond to the coalitions $\{f_{1}; w_{1},w_{2}\}$, $\{f_{2}; w_{1}\}$ and $\{f_{3};w_{2}\}$.
The induced mobiles' preferences are thus, $f_{1}\succ_{w_{1}}f_{2}$ and $f_{3}\succ_{w_{2}}f_{1}$.
In the first proposing round, mobile $w_{1}$ propose to the AP $f_{1}$ and mobile $w_{2}$ propose to the AP $f_{3}$ (see Figure \ref{fig:BDAAexampleproposals}).
The APs update their cumulated list $L(f_{1}) = \{w_{1}\}$, $L(f_{2}) = \emptyset$, $L(f_{3}) = \{w_{3}\}$ and their dynamic list $L^{*}(f_{1}) = \{w_{1}\}$, $L^{*}(f_{2}) = \emptyset$, $L^{*}(f_{3}) = \{w_{3}\}$.
The APs emit the following counter proposals: $(f_{1}, w_{1}, 0.5)$, $(f_{3}, w_{2}, 100)$ (see Figure ~\ref{fig:BDAAexamplecounterproposals}).
The mobiles receive the counter-proposals.
Mobile $w_{2}$ accepts the counter-proposal of $f_{3}$ since none of its maximum achievable payoff gives more than the counter-proposal (shown in solid line in Figure~\ref{fig:BDAAexamplerejects}). Mobile $w_{1}$ does not accept the counter-proposal of $f_{2}$ since the maximum achievable payoff received from $f_{2}$ is higher than the current counter-proposal from $f_{1}$ (shown by the cross in Figure~\ref{fig:BDAAexamplerejects}).\\

The dynamic list of the APs are updated, $L^{*}(f_{1}) = \{w_{1}\}$ and $L^{*}(f_{2}) = \emptyset$.
The algorithm enters the second counter-proposing round.
AP $f_{1}$ emit the counter-proposal $(f_{1}, w_{1}, 0.5)$ which is the same as in the previous counter-proposing round.
Mobile $w_{1}$ still rejects.
The dynamic list of the unengaged APs are updated, $L^{*}(f_{1}) = \{w_{1}\}$ and $L^{*}(f_{2}) = \emptyset$.\\

The dynamic lists of the unengaged APs have not changed. The counter-proposing loop stops and the algorithm enters the outer loop for proposals.
The second proposing round starts.
The mobile $w_{1}$ is the only unengaged mobile.
Its next most preferred AP is $f_{2}$ (with best achievable payoff 1).
The mobile $w_{1}$ propose to $f_{2}$ (see Figure \ref{fig:BDAAexampleproposals2}).
The APs' lists are updated, $L(f_{1}) = L^{*}(f_{1}) = \{w_{1}\}$, $L(f_{2}) = L^{*}(f_{2}) = \{w_{1}\}$ and $L(f_{3}) = L^{*}(f_{3}) = \{w_{2}\}$.
The two unengaged APs $f_{1}$ and $f_{2}$ counter-propose (see Figure~\ref{fig:BDAAexamplecounterproposals2}).
They emit the following counter proposals: $(f_{1}, w_{1}, 0.5)$, $(f_{2}, w_{1}, 1)$.
Mobile $m_{1}$ receives the two counter-proposals, accepts the one of $f_{2}$ and rejects the one of $f_{1}$.
It is engaged to $f_{2}$ (see Figure~\ref{fig:BDAAexampleacceptance}).
The dynamic list of the unengaged AP $f_{1}$ is updated, $L^{*}(f_{1}) = \emptyset$.
The only unengaged AP cannot propose.
There are no more unengaged mobiles.\\
The algorithm stops. The final stable matching is shown in Figure \ref{fig:BDAAexamplematching}.


\begin{thebibliography}{99}
%\bibitem{VonNe1947}
% J. Von Neumann, Oskar Morgenstern, Theory of Games and Economic Behavior, Princeton University Press, 1947.
 
 
\bibitem{Nash1950a}
J.F. Nash, The Bargaining Problem, Econometrica, Vol. 18, No. 2, pp. 155-162, April 1950.

\bibitem{Nash1950b}
J.F. Nash, Two-Person Cooperative Games, The RAND Corporation, P-172, August 1950.

%\bibitem{Sha1953}
%L.S. Shapley, A Value for n-Person Games, Contributions to the Theory of Games II, ed. by H.W. Kuhn and A.W. Tucker, Princeton: Princeton University Press, 1953.

%\bibitem{Har1956}
%J.C. Harsanyi, Approaches to the Bargaining Problem Before and After the Theory of Games: A critical Discussion of Zeuthen's and Nash's Theories, Econometrica, Vol. 24, No. 2, pp 144-157, April 1965.

%\bibitem{Luce1957}
%Luce and Raiffa, 1957.

%\bibitem{Sche1960}
%T.C. Schelling, The Strategy of Conflict, Harvard University Press, 1960.

\bibitem{Gale1962}
D. Gale and L.S. Shapley, College Admissions and the Stability of Marriage, The American Mathematical Monthly, Vol. 6, No. 1, pp 9-15, January 1962.

\bibitem{Har1963}
J.C. Harsanyi, A Simplified Bargaining Model for the n-Person Cooperative Game, International Economic Review, Vol. 4, No. 2, pp 194-220, May 1963.

\bibitem{Pra1964}
K.W. Pratt, Risk Aversion in the Small and in the Large, Econometrica, Vol. 32, No.1/2, pp 122-136, January-April 1964.

%\bibitem{Aum1974}
%R.J. Aumann, J.H. Dreze, Cooperative Games with Coalition Structures, International Journal of Game Theory, Vol. 3, Issue 4, pp 217-237, 1974.

\bibitem{Aum1977}
R.J. Aumann and M. Kurz, Power and Taxes, Econometrica, Vol.45, No.5, pp 1137-1161, July 1977.

\bibitem{Rot1990}
A.E. Roth and M.A.O. Sotomayor, Two-Sided Matching A Study In Game-Theoritic Modeling and Analysis, Econometric Society Monographs, No. 18, Cambridge University  Press, 1990. 

%\bibitem{Mye1997}
%R.B. Myerson, Game Theory Analysis of Conflict, Harvard University Press, 1997.

%\bibitem{Tij2003}
%S. Tijs, Introduction to Game Theory, Hindustan Book Agency, 2003. 

%\bibitem{Fairness04}
%Y. Bejerano, S. Han and L. Li, "Fairness and Load Balancing in Wireless LANs Using Association Control," 10th ACM International Conference on Mobile Computing and Networking (MobiCom '04) ACM, 2004.

%\bibitem{Heu2003}
%M. Heusse, F. Rousseau, G. Berger-Sabbatel, and A. Duda, Performance Anomaly of 802.11b, IEEE INFOCOM, Mar. 2003.

\bibitem{Tij2003}
S. Tijs, Introduction to Game Theory, Texts and Readings in Mathematics, Hindustan Book Agency, 2011.

\bibitem{Tan2004}
G. Tan and J.V. Guttag, Time-based Fairness Improves Performance in Multi-Rate WLANs, USENIX Annual Technical Conference, June 2004.

\bibitem{Bab2005}
A.V. Babu and L. Jacob, Performance Analysis of IEEE 802.11 Multirate WLANs: Time Based Fairness vs Throughput Based Fairness, IEEE Int. Conf. on Wireless Networks, Communications and Mobile Computing, June 2005.

\bibitem{Fon2005}
J. Foncel and N. Treich, Fear of Ruin, Journal of Risk and Uncertainty, vol. 31, n° 3, 2005, p. 289?300.

\bibitem{Kor2005}
T. Korakis, O. Ercetin, S. Krishnamurthy, L. Tassiulas, and S. Tripathi, Link Quality Based Association Mechanism in IEEE 802.11h Compliant Wireless LANs, WiOpt RAWNET Workshop, Apr. 2005.

\bibitem{Kum2005}
A. Kumar and V. Kumar, Optimal Association of Stations and APs in an IEEE 802.11 WLAN, National Conference on Communications, Jan. 2005.

%\bibitem{Fuk2005}
%Y. Fukuda, A. Fujiwara, M. Tsuru, and Y. Oie, Analysis of Access Point Selection Strategy in Wireless LAN, IEEE VTC, Sept. 2005.

%\bibitem{Tak2006}
%S. Takeuchi, K. Sezaki, Y. Yasuda, Access Point Selection Strategy in IEEE802.11e WLAN Networks, IEEE WCNC, Apr. 2006.

\bibitem{Tou2006}
C. Touati, E. Altman, J. Galtier, Generalized Nash Bargaining Solution for bandwidth allocation, Computer Networks, Volume 50, No. 17, pp. 3242-3263, 2006.

\bibitem{Ban2007}
A. Banchs, P. Serrano, H. Oliver, Proportional Fair Throughput Allocation in Multirate IEEE 802.11e Wireless LANs, Wireless Networks, Vol. 13, No. 5, pp 649-662, Oct. 2007.

\bibitem{Ech2007}
F. Echenique and M.B. Yenmez, A solution to matching with preferences over colleagues, Volume 59, Issue 1, pp. 46 - 71,  April 2007.

\bibitem{Kum2007}
A. Kumar and E. Altman and D. Miorandi and M. Goyal, New Insights From a Fixed-Point Analysis of Single Cell IEEE 802.11 WLANs, IEEE/ACM Transactions on Networking, Vol. 15, No. 3, June 2007.

\bibitem{Bej2007}
Y. Bejerano, and H. Seung-jae  and L. Li, Fairness and Load Balancing in Wireless LANs Using Association Control, IEEE/ACM Trans. on Networking, Vol. 15, No. 3, pp 560-573, June 2007.

\bibitem{Alt2008}
E. Altman and K. Avrachenkov and A. Garnaev, Generalized $\alpha$-fair resource allocation in wireless networks, IEEE Conference on Decision and Control, Dec. 2008.

\bibitem{Che2008}
Lin Chen and J. Leneutre, A Game Theoretic Framework of Distributed Power and Rate Control in IEEE 802.11 WLANs, IEEE J. on Selected Areas in Communications, Vol. 26, No. 7, pp 1128-1137, Sept. 2008. 

\bibitem{Gon2008}
H. Gong and K. Nahm and Jong Won Kim, Distributed Fair Access Point Selection for Multi-Rate IEEE 802.11 WLANs, IEEE CCNC, Jan. 2008.

\bibitem{Esc2008}
B. Escoffier and J. Lang and M. Ozturk, Single-peaked consistency and its complexity, 18th European Conference on Artificial Intelligence, Patras, Greece, July 21-25, 2008.

%\bibitem{Shri2008}
%V. Shrivastava, S. Rayanchu,J. Yoonj and S. Banerjee, "802.11n Under the Microscope," 8th ACM SIGCOMM Conference on Internet Measurement (ACM IMC '08), 2008. 

 
%\bibitem{Boy2009}
%S. Boyd and L. Vandenberghe, Convex Optimization, Cambridge University Press, 7th printing, 2009.

\bibitem{Bon2009}
T. Bonald, A. Ibrahim and J. Roberts, The Impact of Association on the Capacity of WLANs, WiOpt, June 2009.

%\bibitem{Yen2010}
%L.-H. Yen, J.-J. Li, L. C.-M. Lin, Stability and Fairness of AP Selection Games in IEEE 802.11 Access Networks, IEEE Transactions on Vehicular Technology, Vol. 60, No. 3, pp 1150-1160, Mar. 2011.

\bibitem{Xu2010}
F. Xu, C.C. Tan, Q. Li, G. Yan, and J. Wu, Designing a Practical Access Point Association Protocol, IEEE INFOCOM, Mar. 2010.

\bibitem{Kei2011}
H. Keinanen, Algorithms for coalitional games, PhD Thesis, Turku School of Economics, 2011.

\bibitem{Li2011}
Wei Li, Yong Cui, Xiuzhen Cheng, M.A. Al-Rodhaan, and A. Al-Dhelaan, 
Achieving Proportional Fairness via AP Power Control in Multi-Rate WLANs, IEEE Trans. on Wireless Communications, Vol. 10, No. 11, pp 3784-3792, Nov. 2011.

\bibitem{Pyc2012}
M. Pycia, Stability and Preference Alignment in Matching and Coalition Formation, Econometrica, Vol. 80, No. 1, pp 323-362, January 2012.

\bibitem{Pan2013}
F. Pantisano, M. Bennis, W. Saad, S. Valentin, and M. Debbah, Matching with Externalities for Context Aware User Cell Association in Small Cell Networks, IEEE Globecom, Dec. 2013.


\bibitem{Ham2014}
K. Hamidouche, W. Saad, and M. Debbah, Many-to-many Matching Games for Proactive Social-Caching in Wireless Small Cell Networks, WiOpt WNC3 workshop, May 2014. 

\bibitem{Saa2014}
W. Saad, Z. Han, R. Zeng, M. Debbah, and H. Vincent Poor, A College Admissions Game for Uplink User Association in Wireless Small Cell Networks, IEEE INFOCOM, Apr. 2014.

\bibitem{Tou2015}
M. Touati, R. El-Azouzi, M. Coupechoux, E. Altman, J.-M. Kelif, Core stable algorithms for coalition games with complementarities and peer effects, Workshop NetEcon, ACM Sigmetrics \& EC, 2015.

\bibitem{Tou2015b}
M. Touati, R. El-Azouzi, M. Coupechoux, E. Altman, J.-M. Kelif, Controlled Matching Game for User Association and Resource Allocation in Multi-Rate WLANs, 53rd Annual Allerton Conference on Communication, Control, and Computing, September 2015.

\bibitem{Tou2015c}
M. Touati, R. El-Azouzi, M. Coupechoux, E. Altman, J.-M. Kelif, About Joint Stable User Association and Resource Allocation in Multi-Rate IEEE 802.11 WLANs, Performance Evaluation Review, 2015.

%\bibitem{Bond}
%J.A. Bondy and U.S.R Murty, Graph Theory, Springer.

%%%%%%%%%%%%%%%%%%%%%%%%%%%%%%%%%%%%%%%%%%%%%%%%%%%%%%%%%%%%%%%%%%%%%%
%BIBLIO MARCEAU
%%%%%%%%%%%%%%%%%%%%%%%%%%%%%%%%%%%%%%%%%%%%%%%%%%%%%%%%%%%%%%%%%%%%%%


%\bibitem{Mit2008}
%K. Mittal and E.M. Belding and S. Suri, A game-theoretic analysis of wireless access point selection by mobile users, Computer Communications, Vol. 31, No. 10, pp 2049-2062, 2008.



%\bibitem{Xu2010}
%Fengyuan Xu and Tan, C.C. and Qun Li and Guanhua Yan and Jie Wu, Designing a Practical Access Point Association Protocol, INFOCOM, March 2010.

%\bibitem{Shak2007}
%S. Shakkottai and E. Altman and A. Kumar, Multihoming of Users to Access Points in WLANs: A Population Game Perspective, IEEE Journal on  Selected Areas in Communications, Vol. 25, No. 6, pp 1207-1215, August 2007.

%\bibitem{Kas2006}
%G.S. Kasbekar and P. Nuggehalli and J. Kuri , Online Client-AP Association in WLANs, 
%International Symposium on  Modeling and Optimization in Mobile, Ad Hoc and Wireless Networks,  April 2006.


%\bibitem{Yen2010}
%Li-Hsing Yen and Jia-Jun Li and Che-Ming Lin, Stability and Fairness of AP Selection Games in IEEE 802.11 Access Networks, IEEE Transactions on Vehicular Technology, 
%Volume 60, No. 3, pp 1150-1160, March 2011.

\end{thebibliography}
\end{document}